\def\dOi{10(4:4)2014}
\subjclass{F.1.1 Models of Computation; F.3.1  Specifying and Verifying and Reasoning about Programs}
\let\fact\@undefined
\let\endfact\@undefined
\newcommand{\IEEEhspace}[1]{}
\newcommand{\nocolon}[1]{}
\theoremstyle{plain}
\newtheorem{theorem}{Theorem}[section]
\newtheorem{lemma}[theorem]{Lemma}
\newtheorem{proposition}[theorem]{Proposition}
\newtheorem{corollary}[theorem]{Corollary}
\newtheorem{fact}[theorem]{Fact}
\theoremstyle{definition}
\newtheorem{definition}[theorem]{Definition}
\newtheorem{example}[theorem]{Example}
\theoremstyle{remark}
\newtheorem{remark}[theorem]{Remark}
\tikzstyle{stt}=[shape=circle,fill=black!12,draw=black!80]
\def\@listi{\leftmargin\leftmargini
               \topsep 3\p@ \@plus\p@ \@minus\p@
               \parsep 2\p@ \@plus\p@ \@minus\p@
               \itemsep \parsep}
\providecommand{\citep}{\cite}
\providecommand{\citet}{\cite}
\providecommand{\autoref}{\ref}
\begin{document}
\renewcommand{\figureautorefname}{Figure}
\renewcommand{\sectionautorefname}{Section}
\renewcommand{\subsectionautorefname}{Section}
\renewcommand{\subsubsectionautorefname}{Section}

\title[The Power of Priority Channel Systems]
      {The Power of Priority Channel Systems\rsuper*}

\author[C.~Haase]{Christoph Haase}
\author[S.~Schmitz]{Sylvain Schmitz}
\author[Ph.~Schnoebelen]{Philippe Schnoebelen}
\address{LSV, ENS Cachan \& CNRS \& INRIA, France}
\email{\{haase,schmitz,phs\}@lsv.ens-cachan.fr}
\thanks{Work partially funded by the \textsc{ReacHard} project ANR 11 BS02 001 01.}
\keywords{Well quasi order; well-structured transition systems; fast-growing complexity}

\titlecomment{{\lsuper*}An extended abstract of this work first appeared in the
  proceedings of the 24th International Conference on Concurrency
  Theory (Concur'13)~\cite{HSS13}.%
}
\begin{abstract}
  We introduce Priority Channel Systems, a new natural class of
  channel systems where messages carry a numeric priority and where
  higher-priority messages can supersede lower-priority messages
  preceding them in the fifo communication buffers.  The decidability
  of safety and inevitability properties is shown via the introduction
  of a \emph{priority embedding}, a well-quasi-ordering that has not
  previously been used in well-structured systems.  We then show how
  Priority Channel Systems can compute fast-growing functions and
  prove that the aforementioned verification problems are {$\mathbf
    F_{\varepsilon_0}$}-complete.
\end{abstract}

\maketitle
\section{Introduction}

\emph{Channel systems} are a family of distributed models where
concurrent agents communicate via usually unbounded fifo communication
buffers called ``channels.''  An agent of a channel system is modeled
by a finite-state controller, and when taking a transition an agent
can read messages from the channel or write into it. These models have
turned out to be well-suited for the formal specification and
algorithmic analysis of communication protocols and concurrent
programs~\cite{pachl87,boigelot99b,bouajjani99b,cece2005b,muscholl2010}. 
They are also a fundamental model of computation, closely related to
Post's tag systems. In all generality, channel systems are a Turing
powerful model, which implies that most of their decision problems are
undecidable.

A particularly interesting decidable and widely studied class of
channel systems are the so-called
\emph{lossy channel systems} (LCSs), where  channels are unreliable and may lose
messages, see \emph{e.g.}~\cite{cece95,abdulla96b,BMOSW-fac2012}. For
LCSs, several important behavioral properties such as safety or
inevitability are decidable. This is because, due to the lossy
behavior of their channels, these systems are \emph{well-structured}:
transitions are monotonic with respect to a decidable
well-quasi-ordering of the configuration
space~\cite{wsts2000,FinSch-WSTS,concur}.  Beyond their applications in
verification, LCSs have turned out to be an important
automata-theoretic tool for decidability or hardness in areas like
Timed Automata, Metric Temporal Logic, modal
logics, \emph{e.g.}~\cite{abdulla-icalp05,kurucz06,ouaknine2007,lasota2008}.
Moreover, they are also a
fundamental model of computation capturing the
$\bfF_{\omom}$-complexity level in the fast-growing complexity
hierarchy~\cite{arXiv/Schmitz13},
see~\cite{CS-lics08,SS-icalp11}.

Lossy channel systems do not provide an adequate way to model systems
or protocols that treat messages discriminatingly according to some
specified rule set. An example is the prioritization of messages,
which is central to ensuring \emph{quality of service} (QoS)
properties in networking architectures, and is usually implemented by
allowing for tagging messages with some relative priority. For
instance, the Differentiated Services (DiffServ) architecture
described in RFC 2475~\cite{RFC2475}, which enables QoS on modern IP
networks, allows for a field specifying the relative priority of an IP
packet with respect to a finite set of priorities, and network links
may decide to arbitrarily drop IP packets of lower priority in favor
of higher priority packets once the network congestion reaches a
critical point. Another example of a similar priority-based policy
arises in the context of ATM networks, where priorities are expressed
via a single Cell Loss Priority bit in order to allow for giving
preference (by dropping low-priority packages) to audio or video over
less time-critical data~\cite{leboudec92}.

Inspired by the aforementioned types of protocols, in this paper we
introduce \emph{priority channel systems} (PCSs), a family of channel
systems where each message is equipped with a priority level, and
where higher-priority messages can supersede lower-priority messages
by dropping them. Priority channel systems rely on the
\emph{prioritized superseding ordering}, a novel ordering that
generalizes Higman's subword ordering and has not been considered
before in the area of well-structured systems. It is however closely
related to the gap-embedding considered in~\cite{schutte85}. Showing
it to be a well-quasi-ordering entails, among others, showing the
decidability of safety and termination for PCSs. We complement our
decidability results by showing that these problems
become undecidable for channel systems that build upon more
restrictive priority mechanisms, supporting the design choices made
for our model.

\subsection{Structure of this Paper}

This paper can roughly be divided into two parts. In the first part, we define
priority channel systems, explore this new model and analyze its power in
complexity-theoretical terms.
Beginning in \autoref{sec-related}, the second part relates
priority channel systems in the broadest sense to related models or
mathematical objects found in the literature.

In more detail, in~\autoref{sec-pcs} we provide an \emph{at-a-glance}
introduction to a simplified model of priority channel systems. This
allows us to discuss on a high level the ideas behind our model, the
main theorems, and the main algorithmic problems that we consider in
this paper. We outline the decidability of fundamental decision
problems via the framework of well-structured
systems. \autoref{sec-embedding} is then devoted to proving
well-quasi-ordering properties of the prioritized superseding ordering
which underlies priority channel systems. To this end, we characterize
the superseding ordering via \emph{priority embeddings}, which is an
analogue and can in fact be seen as a generalization of Higman's
subword embedding. Using techniques from~\cite{SS-icalp11,schutte85},
we show in \autoref{sec-fg} an $\bfF_{\epsz}$ upper bound on the
complexity of PCS verification, far higher than the
$\bfF_{\omom}$-complete complexity known for LCSs. We then prove in
\autoref{sec-hardy} a matching lower bound and this is the main
technical result for PCSs of this paper: building upon techniques
developed for less powerful
models~\cite{CS-lics08,phs-mfcs2010,HSS-lics2012}, we show how PCSs
can robustly simulate the computation of the fast growing functions
$F_\alpha$ and their inverses for all ordinals $\alpha$ up to
$\epsz$. This gives a precise measure of the expressive power of PCSs.

In the second part of the paper, we first show in
\autoref{sec-related} that other natural choices of models of channel
systems with priority mechanisms different from ours lead to
undecidability of verification problems.  We then turn to lossy channel
systems and show in \autoref{ssec-holcs} that, although PCSs are not
an extension of LCSs, they can very easily simulate LCS computations.
In fact, we show how \emph{higher-order} LCS models, which generalize the
dynamic LCS from~\cite{AAC-fsttcs2012}, also embed into PCSs; this
uses a more involved encoding of higher-order configurations and
rules.  Applications of the priority embedding to other
well-quasi-ordered data structures such as depth-bounded trees found
in the literature are subsequently discussed in \autoref{ssec-trees}.

\section{Priority Channel Systems}\label{sec-pcs}
In this section, we formally introduce Priority Channel Systems and
give an overview about the decision problems we consider in this
paper.
\begin{definition}
  \label{def-pcs}
  For every $d\in\Nat$, the \emph{level-$d$ priority alphabet} is
  $\Sigma_d\eqdef\{0,1,\ldots,d\}$. A \emph{level-$d$ priority channel
    system} ($d$-PCS) is a tuple $S=(\Sigma_d,\Ch,Q,\Delta)$,
  where $\Sigma_d$ is as above, $\Ch=\{\ttc_1,\ldots,\ttc_m\}$ is a
  set of $m$ \emph{channel names}, $Q=\{q_1,q_2,\ldots\}$ is a finite
  set of \emph{control states}, and $\Delta\subseteq
  Q\times\Ch\times\{!,?\}\times\Sigma_d\times Q$ is a set of
  \emph{transition rules}.
\end{definition}
\begin{figure}[hbt]
\centering
\scalebox{1.0}{
  \begin{tikzpicture}[->,>=stealth',shorten >=1pt,node distance=4cm,thick,auto,bend angle=30]
\path
(0,0) node [shape=circle,fill=black!12,draw=black!80] (q1) {$p$}
(2,0) node [shape=circle,fill=black!12,draw=black!80] (q2) {$q$}
;
\path (q1) edge[bend left=10] node {$\ttc \, ! \, 1$} (q2) ;
\path (q2) edge[bend left=10] node {$\ttc \, ? \, 3$} (q1) ;
\path (q1) edge[loop left] node {$\ttc \, ! \, 0$} (q1);
\path (q2) edge[loop right] node {$\ttc \, ! \, 3$} (q2);
\node[right=7em of q2,fill=black!12,text width=5em,text height=0.8em,align=flush right] (c){$0\,3\,0\,0\;$};
\draw[-,thick,color=black!80]
    (c.north west) -- (c.north east)
    (c.south west) -- (c.south east);
\node[left=0.3em of c]{$\ttc\,$:};
  \end{tikzpicture}
}%
\caption{A simple single-channel $3$-PCS.}
\label{fig-ex-pcs}
\end{figure}
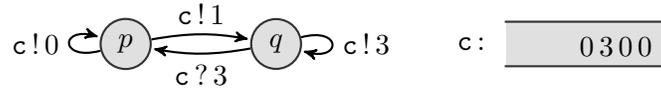
For the sake of a simplified introduction to PCSs in this section, the
alphabet of a PCS abstracts away from actual message contents and only
consists of natural numbers that indicate the priority of a message, where
$d$ is a message of highest and $0$ of lowest priority,
respectively. A treatment of more general alphabets is deferred to
\autoref{sec-embedding}. The simple alphabet introduced here is
however sufficient in order to show the lower bounds in
\autoref{sec-hardy}. Moreover, from our definition it follows that
Priority Channel Systems consist of a single process, which is
sufficient for our purposes in this paper, since systems made of
several concurrent components can be represented by a single process
obtained as an asynchronous product of the components.

\autoref{fig-ex-pcs} depicts a $3$-PCS with a single channel and
control states $p$ and $q$. (A possible configuration of the channel is depicted alongside.) Informally speaking, when in control state
$q$ the PCS in \autoref{fig-ex-pcs} can non-deterministically loop
while writing the alphabet symbol ``3'' to the channel (to its right end), or switch to
control state $q$ if ``3'' can be read from the channel (from its left end). The key
feature of PCSs is that messages with higher priority can erase
messages with lower priority, \emph{cf.} the formal semantics given
next.

\subsection{Semantics}
\label{ssec-pcs-semantics}
The operational semantics of a PCS $S=(\Sigma_d,\Ch,Q,\Delta)$ is given in terms of a
transition system. We let $\Conf_S\eqdef Q\times(\Sigma_d^*)^m$ be the
set of all \emph{configurations} of $S$, denoted $C,D,\ldots$ in the
following. A configuration $C=(q,x_1,\ldots,x_m)$ records an
instantaneous control state $q\in Q$ and the contents of the $m$
channels, \emph{i.e.}, sequences of messages from $\Sigma_d$. A
sequence $x\in\Sigma_d^*$ has the form $x=a_1 \cdots a_\ell$ and we
let $\len{x}=\ell$. Concatenation is denoted multiplicatively, with
$\epsilon$ denoting the empty sequence.

The labeled transition relation between configurations, denoted
$C\step{\delta}C'$, is generated by the rules in
$\Delta=\{\delta_1,\ldots, \delta_k\}$. From a technical perspective,
it is convenient to define two such transition relations,
denoted $\rstep{}$ and $\hstep{}$.

\subsubsection{Reliable Semantics.}
We start with $\rstep{}$ that corresponds to ``reliable'' steps, or
more correctly steps with no superseding of lower-priority messages.  
As is standard, for a \emph{reading rule} of the form
$\delta=(q,\ttc_i{?}a,q')\in\Delta$, there is a step
$C\rstep{\delta}C'$ if $C=(q,x_1,\ldots,x_m)$ and
$C'=(q',y_1,\ldots,y_m)$ for some $x_1,y_1,\ldots,x_m,y_m$ such that
$x_i=a\,y_i$ and $x_j=y_j$ for all $j\not=i$, while for a \emph{writing
  rule} $\delta=(q,\ttc_i{!}a,q')\in\Delta$, there is a step
$C\rstep{\delta}C'$ if $y_i=x_i\,a$ and $x_j=y_j$ for all $j\not=i$.
These reliable steps correspond to the behavior of queue automata, or
(reliable) channel systems, a Turing-powerful computation
model~\citep{brand83}.

\subsubsection{Internal-Superseding.}
The actual behavior of PCSs is obtained by extending reliable steps
with \emph{internal superseding steps}, denoted $C\hstep{\ttc_i\hh
  k}C'$, which can be performed at any time in an uncontrolled
manner.  Formally, for two words $x,y\in\Sigma_d^*$ and $k>0$ in
$\Nat$, we write $x\hstep{\hh k}y$ $\equivdef$
\begin{enumerate}
\item $x$ can be decomposed as $a_1\cdots a_\ell$ with $\ell>k$;
\item $a_k\leq a_{k+1}$; and
\item $y=a_1\cdots a_{k-1}\,a_{k+1}\cdots a_\ell$.
\end{enumerate}
In other words, the $k$th message in $x$ is superseded by its
immediate successor $a_{k+1}$, with the condition that $a_k$ is not of
higher priority.  We write $x\hstep{}y$ when $x\hstep{\hh k}y$ for
some $k$, and use $x\revhstep{}y$ when $y\hstep{}x$. The transitive
reflexive closure $\revhstep{\ast}$ is called the \emph{superseding
  ordering} and is denoted by $\hleq$. Put differently, $\hstep{}$ is
a rewrite relation over $\Sigma^*_d$ defined by the
following string rewriting
system (see~\citep{book82}):
\begin{equation}
\label{eq-rewr-4-hstep}
  \{a\,a'\to a'\mid 0\leq a\leq a'\leq d\}\;.
\end{equation}

This is extended to steps between configurations by
$C \hstep{\ttc_i\hh k} C'$
$\equivdef$ $C=(q,x_1,\ldots,x_m)$, $C'=(q',y_1,\ldots,y_m)$, $q=q'$, $x_i\hstep{\hh k}y_i$, and $x_j=y_j$ for
$j\not=i$. Furthermore, every reliable step is a valid step: for any
rule $\delta$, $C\hstep{\delta} C'$ if $C\rstep{\delta}C'$, giving
rise to a second transition system associated with $S$:
$\TS_\hh\eqdef(\Conf_S,\hstep{})$. 

\begin{example}\label{ex-internal}
 The following is a valid path in the transition system induced by the
 PCS from \autoref{fig-ex-pcs}:
 \[
 (p,0\,3\,0\,0)
 \hstep{!1} (q,0\,3\,\underline{0}\,0\,1)
 \hstep{\hh 3} (q,\underline{0}\,3\,0\,1)
 \hstep{\hh 1} (q,3\,\underline{0}\,1)
 \hstep{\hh 2} (q,3\,1)\;.
 \]
 Here, underlining is used to show which symbol, if any, is
 superseded in the next step.
\end{example}

\begin{remark}[Lossy Channel Systems]
  Priority channel systems and lossy channel systems are unreliable in
  the sense that uncontrolled rewrites may occur inside the channels.
  In the case of LCSs, arbitrary message losses can take place.  The
  semantics of the two classes of systems thus differ quite a bit,
  however PCSs are easily seen to be able to encode LCS computations
  (by interspersing channel contents with a higher-priority symbol;
  see \autoref{ssec-holcs} and in particular
  \autoref{fig-simul1-holcs}).
\end{remark}

\subsubsection{Write-Superseding Semantics}\label{ssec-write-superseding}
The internal-superseding semantics allows superseding to occur at any
time and anywhere in the channel. Another possible scenario 
considers 
communications going through relays,
network switches, or buffers, which handle incoming
traffic with a so-called \emph{write-superseding} policy, where writes
immediately supersede (\textit{i.e.}, erase) the congested messages in front of them. We develop this aspect here and prove the two semantics to be
essentially equivalent.

Let $S=(\Sigma_d,\Ch,Q,\Delta)$ be a $d$-PCS. We define a new
transition relation, denoted $\pstep{}$, between the configurations of
$S$, giving rise to a transition system $\TS_\ww \eqdef
(\Conf_S,\pstep{})$. The relation $\pstep{}$ is a variant of
$\hstep{}$ obtained by modifying the semantics of writing rules.
Formally, for $\delta=(q,\ttc_i{!}a,q')\in\Delta$, and for two
configurations $C=(q,x_1,\dots,x_m)$ and $C'=(q,y_1,\dots,y_m)$, there
is a step $C\pstep{\delta}C'$ if $x_j=y_j$ for all $j\neq i$ and
$y_i=z\,a$ for a factorization $x_i=z\,z'$ of $x_i$ where
$z'\in\Sigma_a^*$, \textit{i.e.}, where $z'$ only contains messages
from the level-$a$ priority subalphabet. In other words, after
$\ttc_i!a$, the channel will contain a sequence $y_i$ obtained from
$x_i$ by appending $a$ in a way that may drop (erase) any number of
suffix messages with priority $\leq a$, hence the
``$z'\in\Sigma_a^*$'' requirement. The semantics of reading rules is
unchanged so that $C\rstep{\delta}C'$ implies $C\pstep{\delta}C'$.

\begin{example}\label{ex-write-superseding}%
The PCS from \autoref{fig-ex-pcs}
has the following write-superseding run:
\[
           (p,0\,3\,\underline{0\,0})
\pstep{!1} (q,0\,\underline{3\,1})
\pstep{!3} (q,0\,3)
\pstep{!3} (q,\underline{0\,3\,3})
\pstep{!3} (q,3)
\pstep{?3} (p,\epsilon)
\]
where in every configuration we underline the messages that will be
superseded in the next step (and where, for simplicity, we do not
write the full rule $\delta$ on the steps). Observe that
$(p,0\,3\,0\,0)\not\pstep{*}(q,3\,1)$, to be contrasted with the
internal-superseding run $(p,0\,3\,0\,0)\hstep{*}(q,3\,1)$ in
\autoref{ex-internal}. Under write-superseding, the occurrence of $3$
that is initially in the channel is not allowed to erase the $0$ in
front of it.
\qed
\end{example}

Compared to our standard PCS semantics, the write-superseding
semantics adopts a localized viewpoint where the protocol managing
priority levels and handling congestions resides at the sender's end,
and is not distributed all along the channels.

In the rest of this subsection, we show that the write-superseding is
essentially equivalent to the standard semantics, see
\autoref{prop-psteps-hsteps}.  A consequence is that one can freely
choose to adopt either $\TS_\hh$ or $\TS_\ww$ as their favorite
operational semantics for priority channel systems.  In practice, we
find it simpler to design and prove the correctness of some PCS---as
we will in Sections~\ref{sec-hardy} and~\ref{ssec-holcs}---when assuming the write-superseding
semantics since it is less liberal and easier to control. And we find
it simpler to develop the formal theory of PCSs when assuming the
internal-superseding semantics since it is finer-grained.

\begin{proposition}
\label{prop-psteps-hsteps}
Let $C_0=(q,\epsilon,\ldots,\epsilon)$ be a configuration with empty
channels, and $C_f$ be any configuration. Then $C_0\pstep{+}C_f$ if,
and only if, $C_0\hstep{+}C_f$.
\end{proposition}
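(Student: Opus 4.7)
The plan is to establish the two directions separately. The $(\Rightarrow)$ direction does not rely on the empty-channel hypothesis: a $\pstep{}$ write rule $\delta=(q,\ttc_i,!,a,q')$ that, under the Write-Superseding semantics, drops a suffix $z'=a_{i_1}\cdots a_{i_p}\in\Sigma_a^*$ before appending $a$ can be simulated in $\hstep{}$ by one reliable write of $a$ followed by $p$ internal-superseding steps cascading from right to left, each erasing the rightmost surviving element of $z'$ using the appended $a$ as right neighbour (legal because $a_{i_j}\leq a$). Reliable reads and reliable writes are already $\hstep{}$-steps, so concatenating these simulations yields an $\hstep{+}$-derivation from $C_0$ to $C_f$.

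The converse proceeds by induction on the length of the $\hstep{+}$-derivation. Reliable steps transfer immediately to $\pstep{}$, so the substantive case is when the last step is an internal-superseding step erasing a message $\mu$ at position $k$ of some channel $\ttc_j$ via its right neighbour $\nu$ at position $k+1$ (so $\mu\leq\nu$). The induction hypothesis provides a $\pstep{}$-derivation $C_0\pstep{*}C$, so the whole argument reduces to the following key lemma, where the empty-channel hypothesis enters and which I expect to be the main technical obstacle: \emph{whenever $C_0\pstep{*}C$ from empty channels and $C$ admits an $\hstep{}$-step erasing $\mu$ as above to reach $C'$, one has $C_0\pstep{*}C'$.}

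To prove this lemma I would fix a $\pstep{}$-derivation realising $C_0\pstep{*}C$ and locate the step $t_\nu$ at which $\nu$ is written; such a step exists because channels start empty. The crux is to show that just after $t_\nu$, $\mu$ already sits exactly one slot to the left of $\nu$ in $\ttc_j$. This uses the fact that in the $\pstep{}$ semantics the only deletions are front-reads and tail-suffix drops at writes: any message lying strictly between $\mu$ and $\nu$ right after $t_\nu$ could only be removed later by reading $\mu$ or dropping $\nu$, both impossible since $\mu,\nu\in C$. Once adjacency is established, the pre-write content at step $t_\nu$ factors as $z_1\cdot\mu\cdot z'$ with $z'\in\Sigma_\nu^*$ the originally-dropped suffix; enlarging the drop to $\mu\cdot z'$ (still in $\Sigma_\nu^*$ because $\mu\leq\nu$) gives a legal modified $\pstep{}$ write of $\nu$, and replaying every subsequent transition unchanged is still valid because all later reads and tail drops operate strictly outside $\mu$'s former slot. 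The new endpoint of the modified derivation is precisely $C'$, closing the induction.
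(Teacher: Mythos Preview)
Your forward direction is the paper's argument verbatim. For the converse, however, the paper takes a different route: instead of inducting on the length of the $\hstep{}$-run and dealing with the \emph{last} internal-superseding step, it proves the more general statement that $C\hstep{*}D$ implies $C'\pstep{*}D$ for some $C'\hleq C$ (arbitrary $C$, not just empty-channel). This is done by a commutation argument: one shows that an internal-superseding step can always be shifted leftward past a read, past a write whose appended symbol is not the one doing the superseding, or past an earlier superseding step with smaller index. Iterating these commutations terminates and leaves the run in a normal form where every superseding step sits in a right-to-left cascade immediately after the write of its right neighbour; such a block is precisely a single $\pstep{}$ write. Any superseding steps that percolate all the way to the front of the run are absorbed into $C\hgeq C'$, which is vacuous when $C$ has empty channels.

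Your argument is correct and more direct for the specific statement. The key observation---that in any $\pstep{}$-run the tokens $\mu$ and $\nu$ must already be adjacent the moment $\nu$ is written, because front-reads cannot reach past $\mu$ and tail-drops cannot reach past $\nu$---is sound and effectively collapses the paper's sequence of local commutations into a single global modification of the write of $\nu$. What the paper's route buys is the stronger lemma for arbitrary initial configurations (which it uses elsewhere when reasoning about $\leq_\hh$); what your route buys is that no separate commutation lemma is needed.
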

The proof is organized in the three
Lemmata~\ref{lem-ww-to-hh}--\ref{lem-commut} below.
\begin{remark}
Observe that the requirement of empty channels for $C_0$ in
\autoref{prop-psteps-hsteps} cannot be lifted, as illustrated with
\autoref{ex-write-superseding}. However, using standard coding tricks
(\textit{e.g.}, storing \emph{initial} channel contents in control states), one
can reduce a reachability or termination problem
 starting from an
arbitrary initial configuration to the same question starting from an
empty-channel $C_0$, and show its decidability by combining
\autoref{prop-psteps-hsteps} and \autoref{theo-pcs-decidability}.
\qed
\end{remark}

\begin{lemma}[From $\TS_\ww$ to $\TS_\hh$]
\label{lem-ww-to-hh}
If $\TS_\ww$ has a run $C\wstep{+}D$ then $\TS_\hh$ has a run
$C\hstep{+}D$.
\end{lemma}
\begin{proof}
We show that $\wstep{}$ is contained in $\hstep{+}$, assuming for the
sake of simplicity that $S$ has only one channel.

A writing step $(p,x)\wstep{!a}(q,y)$ with $x=z\,b_1 \cdots b_j$ and
$y=z\,a$ in $\TS_\ww$ can be simulated in $\TS_\hh$ with
$(p,x)\hstep{!a}(q,z\,b_1 \cdots b_j\,a) \hstep{\hh \ell}(q,z\,b_1\,
b_{j-1}) \hstep{\hh \ell-1} \cdots \hstep{\hh k+1} (q,z\,a)$, where
$\ell=\len{x}$ and $k=\len{z}$. Reading steps simply coincide in
$\TS_\ww$ and $\TS_\hh$.
\end{proof}

In the other direction, one can translate runs in $\TS_\hh$ to runs in
$\TS_\ww$ as stated by following lemma.
\begin{lemma}[From $\TS_\hh$ to $\TS_\ww$]
\label{lem-hh-to-ww}
If $\TS_\hh$ has a run $C\hstep{*}D$ then $\TS_\ww$ has a
run $C'\pstep{*}D$ for some $C'\hleq C$.
In particular, if the channels are empty in $C$, then necessarily
$C'=C$ and $C\pstep{*}D$.
\end{lemma}
\begin{proof}
Again we assume that $S$ has only one channel.

Write the run $C\hstep{*}D$ under the form
$C_0\hstep{}C_1\hstep{} \cdots \hstep{} C_n$ and rearrange
its steps so that superseding occurs greedily. This relies on
\autoref{lem-commut} stated next without proof.

Repeatedly applying \autoref{lem-commut} to transform
$C_0\hstep{*}C_n$ as long as possible is bound to terminate (with
each commutation, superseding steps are shifted to the left of
reliable steps, or the sum $\sum_i k_i$ of superseding positions in
steps $C_{i-1}\hstep{\hh k_i}C_i$ increases strictly while being
bounded by $O(n^2)$ for a length-$n$ run). One eventually obtains a
new run $C_0\hstep{*}C_n$ with same starting and final configurations,
and where all the superseding steps occur (at the beginning of the run
or) just after a write in \emph{normalized} sequences of the form
\begin{equation}\label{eq-norm-hstep}
C=(q,x)\hstep{! a} \hstep{\hh \ell} \hstep{\hh
  \ell-1} \hstep{\hh \ell-2}\cdots\hstep{\hh \ell-r}C'
\:,
\end{equation}
where furthermore $\ell=\len{x}$. In this case, $\TS_\ww$ has a step
$C\wstep{! a}C'$.

Greedily shifting superseding steps to the left may move some of them
at the start of the run instead of after a write: these steps are
translated into $C\hgeq C'$ in \autoref{lem-hh-to-ww}. Finally, the
steps that are not in normalized sequences are reading steps which
exist unchanged in $\TS_\ww$.
\end{proof}

\begin{lemma}[Commuting $\hh$-steps]~\label{lem-commut}\hfill
\begin{enumerate}
\item
If $C_1\hstep{?a}C_2\hstep{\hh k}C_3$ then there is a configuration
$C'_2$ s.t.\ $C_1\hstep{\hh k+1}C'_2\hstep{?a}C''$.
\item
If $C_1=(q,x)\hstep{!a}C_2\hstep{\hh k}C_3$ with $k<\len{x}$, then there is
  a configuration $C'_2$ s.t.\
$C_1\hstep{\hh k}C'_2\hstep{!a}C_3$.
\item
If $C_1=(q,x)\hstep{\hh k_1}C_2\hstep{\hh k_2}C_3$ with $k_1\leq k_2$ then there is a configuration
$C'_2$ s.t.\ $C_1\hstep{\hh k_2+1}C'_2\hstep{\hh k_1}C''$.
\end{enumerate}
\end{lemma}

\subsection{Priority Channel Systems are Well-Structured}\label{sec-wsts}
Our main result regarding the verification of PCSs is that they are
\emph{well-structured} systems, which entails the decidability of
standard decision problems via the generic decidability results
from~\cite{wsts2000,FinSch-WSTS,concur}. Let us first recall the definitions
of well-quasi-orders and well-structured systems.
\begin{definition}[wqo]\label{def-wqo}
  Let $(A,\leq_A)$ be a quasi order. Then $(A,\leq_A)$
  is a \emph{well-quasi-order} (wqo) if for any infinite sequence
  $x_0,x_1,x_2,\ldots$ over $A$ there exists two indices $i<j$ such
  that $x_i\leq_A x_j$.
\end{definition}
A simple example of a wqo is any finite set $\Sigma$ with equality
$(\Sigma,=)$, thanks to the pigeonhole principle.  More generally,
complex wqos can be build from simpler ones by algebraic
operations~\citep{SS-esslli2012}.
Let $(A_1,\leq_{A_1})$ and $(A_2,\leq_{A_2})$ be wqos:
\begin{itemize}
\item Their \emph{disjoint sum} $A_1+A_2\eqdef\{\tup{x,i}\mid
  i\in\{1,2\}\text{ and }x\in A_i\}$ is well-quasi-ordered by the
  \emph{sum ordering} $\leq_+$ defined by $\tup{x,i}\leq_+\tup{y,j}$
  $\equivdef$ $i=j$ and $x\leq_{A_i}y$.
\item Their \emph{Cartesian product} $A_1\times
  A_2\eqdef\{\tup{x,y}\mid x\in A_1\text{ and }y\in A_2\}$ is
  well-quasi-ordered by the \emph{product ordering} $\leq_\times$
  defined by $\tup{x,y}\leq_\times\tup{x',y'}$ $\equivdef$
  $x\leq_{A_1}x'$ and $y\leq_{A_2} y'$.  This is also known as
  Dickson's Lemma.
\item The Kleene star $A_1^\ast$, \textit{i.e.}, the set of finite sequences over $A_1$ is
  well-quasi-ordered by the \emph{substring embedding} relation
  $\leq_\ast$ defined by $x\leq_\ast y$ $\equivdef$ $x=a_1\cdots
  a_\ell$, $y=y_0\,b_1\,y_1\cdots y_\ell\,b_\ell\,y_{\ell+1}$ for some
  $a_i,b_i$ in $A$, some $y_i$ in $A^\ast$, and such that $a_i\leq_{A_1} b_i$ for
  every $1\leq i\leq\ell$.  This is known as Higman's Lemma, and is
  instrumental in the study of lossy channel systems (\textit{cf.}\
  \autoref{sec-related}).
\end{itemize}

\begin{definition}[WSTS]
  \label{def-wsts}
  A \emph{well-structured (transition) system} (WSTS) is a tuple
  $\TS=\mbox{$(A,\step{},\leq_A)$}$ with ${\step{}}\subseteq A\times A$
  such that
  \begin{enumerate}
  \item $(A,\leq_A)$ is a wqo; and
  \item $\step{}$ is compatible with respect to $\leq_A$,
    \textit{i.e.}, if $x\step{} y$ and $x\leq_A x'$ then there is
    some $y'$ such that $x'\step{*} y'$ and $y\leq_A y'$.
  \end{enumerate}
\end{definition}

\noindent A WSTS enjoys a stronger \emph{stuttering compatibility} if the second
condition is altered to require $x'\step{+}y'$, see~\cite[\definitionautorefname~4.4]{FinSch-WSTS}. Let
$S=(\Sigma_d,\Ch,Q,\Delta)$ be a PCS, we define the following order on
configurations of $S$: $C\leq_\hh D \equivdef$ $C$ is some
$(p,y_1,\ldots,y_m)$ and $D$ is $(p,x_1,\ldots,x_m)$ with $x_i\hleq
y_i$ for all $i=1,\ldots,m$. Equivalently, $C\leq_\hh D$ if
$C$ can be obtained from $D$ by internal superseding steps.
\begin{theorem}[PCSs are WSTSs]
  \label{theo-wsts}
  For any PCS $S$, $\TS_\hh=(\Conf_S,\step{}_\hh,\leq_\hh)$, \textit{i.e.},
  the transition system $\TS_\hh$ with configurations ordered by
  $\leq_\hh$, is a well-structured system with stuttering
  compatibility.
\end{theorem}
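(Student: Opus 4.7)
The plan is to establish the two defining ingredients of a WSTS: that the order $\leq_\hh$ on $\Conf_S$ is a well-quasi-ordering, and that $\hstep{}$ is stuttering-compatible with $\leq_\hh$.

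For the wqo part, since $Q$ is finite and $\leq_\hh$ is the componentwise product over the control state and the $m$ channels, Dickson's lemma reduces the task to showing that $(\Sigma_d^*,\hleq)$ is a wqo for every $d$. This is the main obstacle of the theorem: the superseding order is strictly finer than Higman's scattered-subword embedding, because a letter may only be dropped when dominated by its \emph{immediate} successor in the word, so a direct appeal to Higman's lemma does not suffice. I would argue by induction on $d$. The base case $d=0$ is straightforward since $\Sigma_0^*$ has essentially one chain under $\hleq$. For the inductive step, I would decompose each $x\in\Sigma_d^*$ along its occurrences of the top priority $d$, writing $x = u_0\,d\,u_1\,d\cdots d\,u_r$ with $u_j\in\Sigma_{d-1}^*$, and combine the induction hypothesis (each factor $u_j$ lives in a wqo) with Higman's lemma over $(\Sigma_{d-1}^*,\hleq)$ to extract a good pair in any infinite sequence of words. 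The delicate point is that the Higman embedding between gap-decompositions must be lifted to an actual sequence of superseding rewrites while respecting the constraint that the separating $d$'s are never dropped; this is the gap-embedding flavor alluded to in connection with~\cite{schutte85}.

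For the compatibility part, I would do a case analysis on the step $C\hstep{\delta}C'$, assuming $C\leq_\hh D$, and exhibit a matching run $D\hstep{*}D'$ with $C'\leq_\hh D'$ in which every intermediate configuration stays above $C$. If $\delta$ is an internal superseding $\ttc_i\hh k$, then $C'\leq_\hh C\leq_\hh D$ and $D$ simply stays put. If $\delta$ is a writing rule $\ttc_i!a$, then $D$ applies exactly the same rule: a rewrite sequence witnessing $x_i\hleq y_i$ extends directly to $x_i a\hleq y_i a$, so the updated channels remain ordered. The interesting case is a reading rule $\ttc_i?a$: here channel $i$ of $C$ has the form $a x_i'$, and the assumption $a x_i'\hleq y_i$ unfolds by definition as $y_i\hstep{*}a x_i'$, so $D$ first performs those internal superseding steps and then fires the read, reaching a $D'$ with $C'\leq_\hh D'$. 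All told, essentially all the mathematical content lies in the wqo argument; compatibility follows almost directly from the fact that $\hleq$ is by definition the inverse of $\hstep{*}$, which is precisely what makes read rules simulable after a stuttering prefix of internal supersedings.
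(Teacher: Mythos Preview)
Your overall plan matches the paper's: reduce the wqo claim to $(\Sigma_d^*,\hleq)$, argue by induction on $d$ via the decomposition along the top priority, and derive compatibility from the fact that $\leq_\hh$ is literally $\revhstep{*}$. But the inductive step for the wqo contains a real gap, and your diagnosis of the ``delicate point'' is off.

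A Higman embedding between residual tuples does \emph{not} lift to $\hleq$. Take $d=1$, $x=10$, $y=101$: the residual tuples $\langle\epsilon,0\rangle$ and $\langle\epsilon,0,\epsilon\rangle$ are related by the sequence extension of $\hleq$, yet $10\not\hleq 101$ (from $101$ superseding can only reach $11$ and $1$). The obstruction is not that ``the separating $d$'s are never dropped''---on the contrary, the spare $d$'s in $y$ \emph{must} be superseded---but that the \emph{last} letter of $y$ has no right neighbour and hence can never be superseded, which forces the last residual of $x$ to embed into the last residual of $y$; a symmetric constraint pins the first residual whenever $u_0\neq\epsilon$. The paper isolates this as Lemma~\ref{lem-decomp-byh}, whose hypothesis requires the embedding indices to satisfy $j_0=0$ and $j_k=m$, and accordingly the proof of Theorem~\ref{theo-hleq-wqo} first extracts a subsequence on which the first and last residuals are $\pleq$-increasing \emph{in addition to} the Higman extraction on the full tuple; only then can a Higman witness be adjusted into one with the required pinned endpoints so that Lemma~\ref{lem-decomp-byh} applies.

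For compatibility, your case analysis is correct but unnecessary; the paper's argument is the one-liner you yourself allude to at the end: $C\leq_\hh D$ \emph{means} $D\hstep{*}C$, so $D\hstep{*}C\hstep{}C'$ with every intermediate configuration along $D\hstep{*}C$ automatically $\geq_\hh C$. No case split on the rule type is needed.
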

\begin{proof}
  We have to show that the two conditions required in
  \autoref{def-wsts} hold. Proving that
  $(\Conf_S,\leq_\hh)$ is a well-quasi-ordering is the topic
  of~\autoref{sec-embedding} and will be established in a more
  general setting in~\autoref{thm-gpe-wqo}.

  Checking stuttering compatibility is trivial with the $\leq_\hh$
  ordering. Indeed, assume that $C\leq_\hh C'$ and that $C\hstep{}D$
  is a step from the ``smaller'' configuration. Then in particular
  $C'\hstep{*}C$ by definition of $\hstep{}$, so that clearly
  $C'\hstep{+} D$ and $C'$ can simulate any step from $C$.
\end{proof}

A consequence of the well-structuredness of PCSs is the decidability
of several natural verification problems. In this paper we focus on
``Reachability,'' aka ``Safety'' when we want to check that a
configuration is \emph{not} reachable: given a PCS, an initial
configuration $C_0$, and a recursive set of configurations $G\subseteq\Conf_S$,
does $C_0\hstep{*} D$ for some $D\in G$? Another decision problem is
``Inevitability,'' \emph{i.e.}\ to decide whether all maximal runs from
$C_0$ eventually visit $G$, which includes ``Termination'' as a
special case.

\begin{theorem}%
\label{theo-pcs-decidability}
Reachability and Inevitability are decidable for PCSs.
\end{theorem}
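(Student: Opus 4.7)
The plan is to derive both statements from the standard WSTS algorithmic toolkit of \cite{wsts2000,FinSch-WSTS}, applied to the well-structured system already established in \autoref{theo-wsts}. On top of the wqo and stuttering compatibility properties given by that theorem, two effectiveness prerequisites must be checked: decidability of the ordering $\leq_\hh$, and computation of a finite basis for the immediate $\hstep{}$-predecessors of any upward-closed set.

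The ordering $\leq_\hh$ is decidable: testing $x_i\hleq y_i$ on each channel reduces to reachability in the length-strictly-decreasing rewrite system $\{aa'\to a'\mid 0\leq a\leq a'\leq d\}$, which one answers by enumerating the (necessarily finite) set of descendants of $y_i$. An effective pred-basis is obtained by a finitely branching case analysis on the three kinds of transition (reading, writing, and internal superseding): each modifies a single channel by inserting or removing one symbol, so inverting it on a basis element of an upward-closed set produces only finitely many $\leq_\hh$-minimal predecessors; taking the union over a finite basis and extracting minima again yields a finite basis.

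Equipped with these two ingredients, \textbf{Reachability} to an upward-closed target $G$ is handled by the classical backward saturation algorithm: iteratively compute $I_0=G$ and $I_{n+1}=I_n\cup\{C:\exists C'\in I_n,\ C\hstep{}C'\}$. Each $I_n$ is upward-closed with a computable finite basis, the sequence stabilizes after finitely many steps by wqo, and $C_0$ covers $G$ iff it belongs to the fixed point.

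The main obstacle lies in \textbf{Inevitability}, where backward saturation alone does not rule out infinite runs that avoid $G$. Here I would invoke the finite expanded reachability tree construction of \cite{FinSch-WSTS}: build the forward $\hstep{}$-tree rooted at $C_0$ (successors are trivially computable from the definition) and prune any branch at the first descendant that covers some ancestor in $\leq_\hh$. Stuttering compatibility ensures that every pruned branch faithfully witnesses an infinite behaviour extending the ancestor, while wqo forces the tree to be finite and effectively constructible. Inevitability then reduces to checking that every maximal branch of this finite tree visits $G$ before being pruned, and Termination arises as the special case where $G$ is the set of deadlocked configurations.
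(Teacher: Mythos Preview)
Your approach is essentially the paper's: check the effectiveness prerequisites and invoke the generic WSTS algorithms of \cite{wsts2000,FinSch-WSTS}. The backward-saturation argument for Reachability and the finite reachability tree for Inevitability are exactly the tools the paper has in mind.

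There is however one point where the paper is more careful and where your argument has a gap. Your Inevitability reduction states that Inevitability holds \emph{iff} every maximal branch of the finite tree visits $G$. The ``if'' direction is fine for any $G$. The ``only if'' direction is not: when a branch $C_0,\ldots,C_n$ avoids $G$ and is pruned because $C_n\geq_\hh C_i$, the infinite behaviour you exhibit goes $C_0\hstep{*}C_n\hstep{*}C_i\hstep{*}C_n\hstep{*}\cdots$, and the internal-superseding segment $C_n\hstep{*}C_i$ may well pass through $G$. So a branch avoiding $G$ need not witness a run avoiding $G$ unless the intermediate superseding configurations are forced out of $G$---which is exactly what a \emph{downward-closed} $G$ guarantees (since every such intermediate $C'$ satisfies $C_i\leq_\hh C'$ and $C_i\notin G$). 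The paper explicitly flags this: the algorithms of \cite{wsts2000,FinSch-WSTS} assume $G$ downward-closed, and for arbitrary recursive $G$ one needs the additional observation that $\hstep{+}$ coincides with ${\geq_\hh}\circ{\hstep{+}}$ in this particular system, citing \cite[Thm.~4.4]{phs-rp2010}.

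A related omission on the Reachability side: you treat only upward-closed $G$, whereas the paper remarks that Reachability and Coverability coincide for PCSs (again because ${\geq_\hh}\subseteq{\hstep{*}}$), so the answer depends only on the minimal elements of an arbitrary $G$. This is what justifies the generality of the statement.
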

\begin{proof}[Proof (Sketch)]
In order to apply the generic WSTS algorithms from~\cite{FinSch-WSTS},
we have to prove that the order $\leq_\hh$ is decidable, and that the
set of immediate successors of a configuration and the minimal
immediate predecessors of an upward-closed set are computable.

Deciding the ordering $\leq_\hh$ between configurations is
in \textsc{NLogSpace}; the proof of this fact is the purpose of
\autoref{rem-hleq-nlogspace}. Moreover, the operational semantics is
finitely branching and effective, \emph{i.e.}, one can compute the
immediate successors of a configuration and the minimal immediate
predecessors of an upward-closed set.

We note that Reachability and Coverability coincide (even for
zero-length runs when $C_0$ has empty channels) since $\hstep{+}$
coincides with $\geq_\hh \circ \hstep{+}$, and that the answer to a
Reachability question only depends on the (finitely many) minimal
elements of $G$. One can even compute $\Pre^*(G)$ for $G$ given,
\emph{e.g.}, as a regular subset of $\Conf_S$.

For Inevitability, the algorithms in \cite{wsts2000,FinSch-WSTS}
assume that $G$ is downward-closed but, in our case where $\hstep{+}$
and $\geq_\hh \circ \hstep{+}$ coincide, decidability can be shown for
arbitrary (recursive) $G$, as in~\cite[\theoremautorefname~4.4]{phs-rp2010}.
\end{proof}

\section{Priority Embedding}\label{sec-embedding}
In this section we establish that the superseding ordering $\hleq$ on
words enjoys the well-quasi-ordering properties we require for
reasoning about PCSs. In order to keep our results generic, as already
stated at the beginning of \autoref{sec-pcs}, we establish those
properties over an alphabet that is more general than the one
introduced in \autoref{def-pcs}. Instead of allowing for messages
consisting merely of priorities, we allow for messages over an
arbitrary well-quasi-ordering to be tagged with priority numbers.
This is in line with the algebraic operations on wqos presented at the
beginning of \autoref{sec-wsts}.
\begin{definition}[Generalized Priority Alphabet]
  Let $d\in \mathbb{N}$ be a \emph{priority level} and let
  $(\Gamma, \gammaordering)$ be a well-quasi-order, a
  \emph{generalized level-$d$ priority alphabet over $\gwqo$} is
  $\Sigma_{d,\gwqo}\eqdef \{(a,w) \mid 0\le a\le d, w\in \Gamma\}$.
\end{definition}
Subsequently, we call $\Sigma_{d,\gwqo}$ a \emph{generalized priority
  alphabet} for brevity. In analogy to the internal superseding steps
in \autoref{ssec-pcs-semantics}, we define the \emph{generalized
  priority relation $\ghstep{}$} over finite strings in
$\Sigma_{d,\gwqo}^*$ via a string rewriting system with the following
two families of rule schemata:
\begin{align}
  \label{eq-supersed-rew1}\{ (a,w)(a',w') \ghstep{} (a',w') & \mid a\le a', w\in \Gamma\}\;,\\
  \label{eq-supersed-rew2}\{ (a,w) \ghstep{} (a,w') & \mid w'\gammaordering w\}\;.
\end{align}
Informally speaking, the first line states that a string can be
rewritten if some higher-priority message supersedes a lower priority
message, and the second that any message can be rewritten to a message
that is below in the wqo $(\Gamma,\gammaordering)$.

We define ${\ghleq}\eqdef{\revhstep{\ast}}$, \emph{i.e.}, $\ghleq$ is
the reflexive transitive closure of the inverse of $\ghstep{}$. The
main purpose of this section is to prove that
$(\Sigma_{d,\gwqo}^*,{\ghleq})$ is a well-quasi-ordering, \emph{cf.}\
\autoref{def-wsts}. To this end, we will first establish a
characterization of $\ghleq$ via an embedding relation and
subsequently prove that the obtained priority embeddings yield a
well-quasi-ordering.

Before we continue, let us remark that the priority alphabet in
\autoref{def-pcs} and the relation $\hstep{}$ from \autoref{ssec-pcs-semantics} are obtained by considering
$(\Gamma,=)$ for some singleton set $\Gamma$. Whenever we drop the
index $\gwqo$, we implicitly refer to this well-quasi-ordering.
Letting $\Gamma$ be a finite set of messages represented as strings
and $\gammaordering$ the identity relation yields a generalized
priority alphabet where a priority can be assigned to each
message. Such an alphabet underlies for instance the
well-quasi-ordering that we will later use for showing that planar
planted trees are well-quasi-ordered under minors, \emph{cf.}\
\autoref{app-tree-minors}. Another example is $\Gamma=\Sigma^*$ for
some finite alphabet $\Sigma$ and where $\gammaordering$ is the
substring embedding, which allows for representing unbounded messages
on a lossy channel which are tagged with a priority level. Finally, we
wish to mention that for a generalized priority alphabet
$\Sigma_{d,\gwqo}$, if we wish to apply $\Sigma_{d,\gwqo}$ in a PCSs,
for \autoref{theo-pcs-decidability} to hold, $(\Gamma,\gammaordering)$
has to fulfill the same properties: $\gammaordering$ should be
decidable, and the set of immediate successors of a configuration and
the minimal immediate predecessors of an upward-closed set should be
computable.

\subsection{Embedding with Priorities}
\label{ssec-def-pleq}
We define here an embedding relation between finite strings over a
generalized priority alphabet $\Sigma_{d,\gwqo}^*$.  We are inspired
in this by a coarser \emph{gap embedding} relation defined by
Sch\"utte and Simpson~\cite{schutte85}, who use it to derive a
``natural'' formal statement undecidable in Peano arithmetic---the
totality of the function $H^{\varepsilon_0}$ defined later
in \autoref{sec-fg} is another well-known example of such a statement.
Besides refining the gap embedding relation to match the superseding
ordering, we also extend it to handle an underlying wqo
$(\Gamma,{\leq_\Gamma})$.

Given $x,y\in \Sigma_{d,\gwqo}^*$, we define the
\emph{generalized priority embedding $\gpleq$} by
\begin{gather*}
  x \gpleq y \: \equivdef
\left\{  \begin{array}{l}
 x \text{ is some }(a_1, v_1)\cdots (a_\ell, v_\ell)\\
  y \text{ is some } y_1 \, (a_1,w_1) \, y_2 \, (a_2,w_2)\cdots y_\ell \, (a_\ell, w_\ell)\\
  \text{such that }\forall i=1,\ldots,\ell:y_i\in \Sigma_{a_i,\gwqo}^*\text{ and
  }v_i\gammaordering w_i\;.
  \end{array}  
\right.
\end{gather*}
For example, in the singleton case, $201\pleq 22011$ but $120
\not\pleq 10210$, since factoring $10210$ as $z_1 1 z_2 2 z_3 0$ would
require $z_3=1\not\in\Sigma_0^*$. If $x\pleq y$ then $x$ is a subword
of $y$ and $x$ can be obtained from $y$ by removing factors of
messages with priority not above the first preserved message to the
right of the factor. Observe that $\gpleq$ is similar (but not
equivalent) to the Higman subword embedding for $d=0$. From the
definition above, we get the following properties which we will
implicitly use subsequently:
\begin{align}
  \epsilon \gpleq y &&\!\text{iff}\!&&  &y = \epsilon\:,
\\
\label{eq-gemb-concat}
  x_1 \gpleq y_1,\: x_2 \gpleq y_2 &&\!\text{imply}\!&&  &x_1\,x_2 \gpleq y_1\,y_2\:,
\\
\label{pleq-3}
  x_1\,x_2 \gpleq y &&\!\text{implies}\!&&  &\exists y_1 \sqsupseteq_{\pp,\gwqo} x_1
  : \exists y_2 \sqsupseteq_{\pp,\gwqo} x_2 : y=y_1\,y_2\:,
\\
\label{eq-gemb-superseding} 
  v \gammaordering w &&\!\text{implies}\!&&  &\forall 0\le a\le d:
  \forall z\in \Sigma_{a,\gwqo}^* : (a,v) \gpleq z(a,w)\:.
\end{align}

\begin{lemma}
   Let $\Sigma_{d,\gwqo}$ be a generalized priority alphabet. Then
   $(\Sigma_{d,\gwqo}^*, \gpleq)$ is a quasi-ordering.
\end{lemma}
\begin{proof}
  We have to show that $(\Sigma^*_{d,\gwqo}, \gpleq)$ is reflexive and
  transitive. Reflexivity is obvious from the definition of
  $\gpleq$. Regarding transitivity, let $x,y,z\in \Sigma_{d,\gwqo}^*$
  be such that $x\gpleq y\gpleq z$ and write
  $x=(a_1,u_1)\cdots(a_\ell,u_\ell)$. Since $x\gpleq y$, by definition
  we can write $y=y_1(a_1,v_1)\cdots y_\ell(a_\ell,v_\ell)$, where
  $u_i\gammaordering v_i$ and each $y_i=(b_{1,i},v_{1,i})\cdots
  (b_{m_i,i},v_{m_i,i})\in \Sigma_{a_i,\gwqo}^*$ for all $1\le i\le
  \ell$. Consequently, since $y\gpleq z$, we can decompose $z$ as
  $z=z_1(a_1,w_1)\cdots z_\ell(a_\ell,w_\ell)$, where each $z_i$ is of
  the form
  \begin{align*}
    z_i = z_{1,i}(b_{1,i},w_{1,i}) \cdots z_{m_i,i}(b_{m_i,i},w_{m_i,i}) z_i'.
  \end{align*}
  Since each $(b_{j,i}, w_{j,i})\in \Sigma_{a_i,\gwqo}^*$, by
  definition of $\gpleq$ we have $z_i\in \Sigma_{a_{i},\gwqo}^*$,
  hence the above decomposition of $z$ in particular yields $x\gpleq
  z$.
\end{proof}

The generalized priority embedding acts as a relational counterpart to
the more operational generalized superseding ordering. In fact
$\ghleq$ and $\gpleq$ coincide, as shown by the next lemma.
\begin{lemma}
  \label{lem-ghleq-gpleq-equivalence}
  For any $x,y\in \Sigma_{d,\gwqo}^*$, $x\ghleq y$ if, and only if, $x\gpleq y$.
\end{lemma}
\begin{proof}
  In the following, write $x$ as $x=(a_1,v_1)\cdots (a_k,v_k)$.

  Suppose $x\ghleq y$, \emph{i.e.}, $y\ghstep{\ast} x$. We show $x\gpleq y$
  by induction on the number of superseding steps. The base
  case where no superseding occurs entails $x=y$ and we rely on the reflexivity of
  $\gpleq$. For the induction step, let $y\ghstep{} z$ such that
  $z\ghstep{\ast} x$. By the induction hypothesis, $x\gpleq z$,
  \emph{i.e.}, $z$ can be factored as $z=z_1(a_1,w_1) \cdots
  z_k(a_k,w_k)$ such that $z_i\in \Sigma_{a_i,\gwqo}^*$ and
  $v_i\gammaordering w_i$ for all $1\le i\le k$. We do a case
  distinction on which rewriting rule is applied in order to obtain
  $y\ghstep{} z$:
  \begin{itemize}
  \item If $y\ghstep{} z$ via (\ref{eq-supersed-rew1}) then $y$ is
    obtained from $z$ by replacing some $z_j=z_{j,1}\cdots
    z_{j,\ell_j}$ with $z_j'=z_{j,1}\cdots z_{j,i-1} (b,w) z_{j,i}
    \cdots z_{j,\ell_j}$ for some $1\le i\le \ell_j+1$, $1\le j\le k$
    and $(b,w)$ such that in particular $b\le a_j$, and hence $z_j'\in
    \Sigma_{a_j,\gwqo}^*$. Thus $y$ factors as $y=z_1(a_1,w_1)\cdots
    z_j'(a_j,w_j)\cdots z_k(a_k,w_k)$, which allows us to conclude
    that $x\gpleq y$.

  \item If $y\ghstep{} z$ via (\ref{eq-supersed-rew2}), $y$ is
    obtained by replacing some $(a,w)$ occurring in $z$ with $(a,w')$
    for some $w'\ge_\Gamma w$. By transitivity of $\le_\Gamma$, $x\gpleq y$
    follows immediately.
  \end{itemize}

\noindent   Conversely, assume $x\gpleq y$: then $y$ factors as
  $y=y_1(a_1,w_1)\cdots y_k(a_k,w_k)$. Since for every $(a,w)$
  occurring in some $y_i$ we have $a\le a_i$, by repeatedly applying
  (\ref{eq-supersed-rew1}) we have $y\ghstep{}^\ast z=(a_1,w_1)\cdots
  (a_k,w_k)$. Moreover, $v_i\gammaordering w_i$ for all $1\le
  i\le k$, and thus by repeated application of
  (\ref{eq-supersed-rew2}) we get $z\ghstep{}^\ast x$, as required.
\end{proof}

\subsection{Priority Embedding is a Well-Quasi-Ordering}\label{ssec-refl}

The purpose of this section is to prove that $\gpleq$ is a
well-quasi-ordering. By application
of \autoref{lem-ghleq-gpleq-equivalence}, this entails that
$\ghleq$ is a well-quasi-ordering as well.
We rely for this on the algebraic operations presented in
\autoref{sec-wsts}, and on a classical tool from wqo theory,
namely \emph{order reflections}:
\begin{definition}[Order Reflection]Let $(A,\leq_A)$ and $(B,\leq_B)$
  be two quasi-orders.  An \emph{order reflection} is a mapping
  $r{:}\,A\to B$ such that $r(x)\leq_Br(y)$ implies $x\leq_Ay$.
\end{definition}
The following is folklore (and easy to see):
\begin{fact}\label{fc-refl}
  Let $(A,\leq_A)$ and $(B,\leq_B)$ be two quasi-orders and $r$ be an
  order reflection $A\to B$.  If $(B,\leq_B)$ is a wqo, then
  $(A,\leq_A)$ is a wqo.
\end{fact}

In the following, we define the \emph{height} of a sequence
$x\in\Sigma_{d,\gwqo}^*$, written $h(x)$, as being the highest
priority occurring in $x$. By convention, we let $h(\epsilon)\eqdef
-1$.  Thus, $x\in\Sigma_{h,\gwqo}^*$ if and only if $h\geq h(x)$, and
we further let $\Sigma_{-1,\gwqo}\eqdef\emptyset$. Any
$x\in\Sigma_{d,\gwqo}^*$ has a unique \emph{canonical} factorization
$x=x_0 (h,v_1) x_1 \cdots x_{m-1} (h,v_m) x_m$ where $m$ is the number
of occurrences of $h=h(x)$ in $x$ and where the $m+1$ \emph{residuals}
$x_0$, $x_1,\ldots,x_m$ are in $\Sigma_{h-1,\gwqo}^*$.

\begin{theorem}
  \label{thm-gpe-wqo}
  Let $\Sigma_{d,\gwqo}$ be a generalized priority alphabet.
  Then $(\Sigma_{d,\gwqo}^*,\gpleq)$ is a well-quasi-ordering.
\end{theorem}
\begin{proof}
  We proceed by induction on $d$. For the base case $d=-1$,
  \textit{i.e.}\ for the empty priority alphabet,
  $(\Sigma_{-1,\gwqo}^\ast,\gpleq)=(\{\epsilon\},=)$ is a wqo.

  For the induction step, a word $x\in \Sigma_{d,\gwqo}^*$ is either
  in $\Sigma_{d-1,\gwqo}^\ast$, or it has height $h(x)=d$ and then the canonical
  height factoring lets us write $x$ under the form
  \begin{align}\label{eq-fact-x}
    x = x_0 (d,v_1)x_1 \cdots x_{m-1} (d,v_m) x_m
  \end{align}
  with residuals $x_i\in \Sigma_{d-1,\gwqo}^*$ for all $0\le i\le m$. By
  the induction hypothesis, $(\Sigma_{d-1,\gwqo}^*,\gpleq)$ is a
  well-quasi-ordering. We exhibit an order reflection $r{:}\,
  \Sigma_{d,\gwqo}^*\to \Theta_{d,\gwqo}$, where
  \begin{align}\label{eq-def-theta}
     \Theta_{d,\gwqo} & \eqdef \Sigma_{d-1,\gwqo}^* + \Sigma_{d-1,\gwqo}^* \times
     ((\{d\}\times \Gamma) \times \Sigma_{d-1,\gwqo})^* \times (\{d\} \times \Gamma)
     \times \Sigma_{d-1,\gwqo}^*.
  \end{align}
  Since $\Theta_{d,\gwqo}$ is obtained from the well-quasi-orders
  $(\Sigma_{d-1,\gwqo}^*,\gpleq)$, $(\Gamma,\gammaordering)$, and
   $(\{d\},=)$ by disjoint sum, Cartesian product, and substring
  embedding, \autoref{fc-refl} will allow us to conclude that
  $(\Sigma_{d,\gwqo}^*,\gpleq)$ is a well-quasi-order.  To this
  end, for $x$ and $m$ as above, if $m>0$ we define
  \begin{equation}
    \label{eq-reflection}
    r(x) \eqdef
    \bigl(x_0,\bigl[((d,v_1),x_1)\cdots ((d,v_{m-1}),x_{m-1})\bigr], (d,v_m), x_m\bigr)\;,
  \end{equation}
  and $r(x)\eqdef x_0=x$ if $m=0$. We need to verify that, whenever
  $r(x)\preccurlyeq r(y)$ with respect to the ordering $\preccurlyeq$
  associated with $\Theta_{d,\gwqo}$ by the algebraic operations, then
  $x\gpleq y$. This is obvious when both $r(x)=x$ and $r(y)=y$ are in
  $\Sigma_{d-1,\gwqo}^\ast$. Otherwise, let $r(x)$ be as in
  \eqref{eq-reflection} and write
  \begin{align*}
    r(y) & = \bigl(y_0,\bigl[((d,w_1),y_1) \cdots ((d,w_{n-1}),y_{n-1})\bigr], (d,w_n), y_n\bigr)\;.
  \end{align*}
  Since $r(x)\preccurlyeq r(y)$, from the product ordering we obtain
  \begin{equation}\label{eq-zero-gpleq}
    x_0\gpleq y_0\;,
  \end{equation}
  $v_m\gammaordering w_n$, and $x_m\gpleq y_n$, while
  from the subword ordering we obtain the existence of indices $1\leq
  i_1,\ldots, i_{m-1}<n$ such that $v_{j} \gammaordering w_{i_j}$ and
  $x_j \gpleq y_{i_j}$ for all $0<j<m$.  Setting $i_{0}\eqdef 0$
  and $i_m\eqdef n$, observe that by \eqref{eq-gemb-concat} and
  \eqref{eq-gemb-superseding}, 
  \begin{equation}
   (d,v_j)x_j\gpleq (d,w_{i_{j-1}+1})y_{i_{j-1}+1}\cdots
   y_{i_j-1}(d,w_{i_j})y_{i_j}
  \end{equation}
  for all $0<j\leq m$, which together with \eqref{eq-zero-gpleq} and
  \eqref{eq-gemb-concat} implies $x\gpleq y$ as desired.
\end{proof}

\begin{remark}\label{rem-hleq-nlogspace}
\autoref{thm-gpe-wqo} and \autoref{lem-ghleq-gpleq-equivalence} prove
that $\hleq$ is a wqo on configurations of PCSs, as we assumed
in \autoref{sec-wsts}. There we also assumed that $\hleq$ is
decidable. We can now see that it is in \textsc{NLogSpace}, since, in
view of \autoref{lem-ghleq-gpleq-equivalence}, one can check whether $x\hleq y$
by reading $x$ and $y$ simultaneously while guessing
nondeterministically a factorization $z_1a_1\cdots z_\ell a_\ell$ of
$y$, and checking that $z_i\in\Sigma_{a_i}^*$.
\qed
\end{remark}

\section{Fast-Growing Upper Bounds}\label{sec-fg}
The verification of infinite-state systems, and WSTSs in particular,
often turns out to require astronomic computational resources expressed as
\emph{subrecursive functions}~\citep{lob70,fairtlough98} of the
input size.  We show in this section how to bound the complexity of
the algorithms presented in \autoref{sec-wsts} and classify the
Reachability and Inevitability problems for PCSs using \emph{fast-growing
complexity classes}~\citep{arXiv/Schmitz13}.

To this end, we first provide the necessary background on subrecursive
functions in \autoref{ssec-sub-hier}.  The heart of the upper bound
proof is a specialized Length Function Theorem for
$(\Sigma_{d,\Gamma}^\ast,\gpleq)$, obtained in \autoref{sub-lft} by
instrumenting the proof of \autoref{thm-gpe-wqo} and applying the
generic Length Function Theorem from \cite{SS-icalp11}.  This allows
us to derive $\F_{\ez}$ upper bounds and new \emph{combinatorial
algorithms} for PCS verification in \autoref{sub-upb}.

\subsection{Subrecursive Hierarchies}
\label{ssec-sub-hier}

Throughout this paper, we use \emph{ordinal terms} inductively defined
by the following grammar
\[
(\Omega\ni)\;\;
\alpha,\beta,\gamma ~::=~ 0 \mid \omega^\alpha \mid \alpha + \beta
\]
where addition is associative, with $0$ as the neutral element (the
empty sum).  Such a term $\alpha=\sum_{i=0}^k \omega^{\alpha_i}$ is
$0$ if $k=0$, otherwise a \emph{successor} if $\alpha_k=0$ and a
\emph{limit} otherwise.  We often write $1$ as short-hand for
$\omega^0$, and $\omega$ for $\omega^1$.  The symbol $\lambda$ is
reserved for limit ordinal terms.

We can associate a set-theoretic ordinal $o(\alpha)$ with each term
$\alpha$ by interpreting $+$ as the direct sum operator and $\omega$
as $\+N$; this gives rise to a well-founded quasi-ordering
$\alpha<\beta\equivdef o(\alpha)<o(\beta)$.  A term
$\alpha=\sum_{i=1}^k \omega^{\alpha_i}$ is in \emph{Cantor normal
  form} (CNF) if $\alpha_1\geq\alpha_2\geq\cdots\geq\alpha_k$ and each
$\alpha_i$ is itself in CNF for $i=1,\ldots,k$.  Terms in CNF and
set-theoretic ordinals below $\epsilon_0$ are in bijection; it will
however be convenient later in \autoref{sec-hardy} to consider terms
that are \emph{not} in CNF.

With any limit term $\lambda$, we associate a \emph{fundamental
  sequence} of terms $(\lambda_n)_{n\in\Nat}$:
\begin{equation}
\label{eq-def-fund-seq}
\begin{aligned}
(\gamma+\omega^{\beta+1})_n &\eqdef \gamma+\omega^\beta \cdot n = 
\gamma + \obracew{\omega^\beta + \cdots + \omega^\beta}{n}
\:,
&
(\gamma+\omega^{\lambda'})_n &\eqdef \gamma+\omega^{\lambda'_n}
\:.
\end{aligned}
\end{equation}
This yields $\lambda_0<\lambda_1 < \cdots <%
\lambda$ for any $\lambda$, with furthermore $\lambda=\lim_{n\in\Nat}
\lambda_n$.  For instance, $\omega_n=n$, $(\omega^\omega)_n=\omega^n$,
\emph{etc}.  Note that 
$\lambda_n$ is in CNF when $\lambda$ is.

We need to add a term $\ez$ to $\Omega$ to represent the set-theoretic
$\epsilon_0$, \emph{i.e.}, the smallest solution of $x=\omega^x$.  We take
this term to be a limit term as well; we define the fundamental
sequence for $\ez$ by $\ez[n]\eqdef\Omega_n$, where for $n\in\Nat$, we
use $\Omega_n$ as short-hand notation for the ordinal
$\omega^{\omega^{\cdots^\omega}}$\raisebox{0.25em}{$\bigr\}$}\raisebox{0.3em}{\footnotesize
$n$ stacked $\omega$'s}, \emph{i.e.}, for $\Omega_0\eqdef 1$ and
$\Omega_{n+1}\eqdef\omega^{\Omega_n}$.

\subsubsection{Inner Recursion Hierarchies}
Our main subrecursive hierarchy is the \emph{Hardy hierarchy}.  Given
a monotone expansive unary function $h{:}\,\+N\to\+N$, it is defined
as an ordinal-indexed hierarchy of unary functions
$(h^\alpha{:}\,\+N\to\+N)_{\alpha}$ through
\begin{align}\label{eq-def-hardy}
  h^0(n)&\eqdef n\:, &
  h^{\alpha+1}(n)&\eqdef h^\alpha\big(h(n)\big)\:,&
  h^\lambda(n)&\eqdef h^{\lambda_n}(n)\:.
\end{align}
Observe that $h^1$ is simply $h$, and more generally $h^\alpha$ is the
$\alpha$th iterate of $h$, using diagonalization to treat limit
ordinals.

A case of particular interest is to choose the successor function
$H(n)\eqdef n+1$ for $h$.  Then the \emph{fast growing hierarchy}
$(F_{\alpha})_\alpha$ can be defined by $F_\alpha\eqdef
H^{\omega^\alpha}$, resulting in $F_0(n)=H^1(n)=n+1$,
$F_1(n)=H^\omega(n)=H^{n}(n)=2n$, $F_2(n)=H^{\omega^2}(n)=2^nn$ being
exponential, $F_3=H^{\omega^3}$ being non-elementary,
$F_\omega=H^{\omega^\omega}$ being an Ackermannian function,
$F_{\omega^k}$ a $k$-Ackermannian function, and $F_{\ez}=H^{\ez}\circ
H$ a function whose totality is not provable in Peano
arithmetic~\citep{fairtlough98}.

\subsubsection{Fast-Growing Complexity Classes}
Our intention is to establish the ``$F_{\epsilon_0}$ completeness'' of
verification problems on PCSs.  In order to make this statement more
precise, we define the class $\F_{\epsilon_0}$ %
as a specific instance of the \emph{fast-growing
  complexity classes} defined for $\alpha\geq 3$ by \citep{arXiv/Schmitz13}
\begin{align}
  \F_\alpha&\eqdef\hspace{-1em}\bigcup_{p\in\bigcup_{\beta<\alpha}\FGH{\beta}}\hspace{-1em}\text{\textsc{DTime}}(F_\alpha(p(n)))\;,&
  \FGH{\alpha}&=\bigcup_{c<\omega}\text{\textsc{FDTime}}(F^c_\alpha(n))\;,
\end{align}
where the class of \emph{functions} $\FGH{\alpha}$ as defined above is the
$\alpha$th level of the \emph{extended Grzegorczyk
  hierarchy}~\citep{lob70} when $\alpha\geq 2$.

The latter hierarchy of function classes $(\FGH{\alpha})_\alpha$ is
well-established~\citep{lob70,fairtlough98}.  The class $\FGH\alpha$
is the set of functions computable in time $F^c_\alpha$, a finite
iterate of $F_\alpha$.  In particular, $\FGH{2}$ is the set of
elementary functions, $\bigcup_{\alpha<\omega}\FGH{\alpha}$ the set of
primitive-recursive functions, while
$\bigcup_{\alpha<\ez}\FGH{\alpha}$ is exactly the set of
ordinal-recursive (aka ``provably recursive'')
functions~\citep{fairtlough98}.

The complexity classes $(\F_\alpha)_\alpha$ are more
recent~\cite{arXiv/Schmitz13}: $\F_\alpha$ is the set of decision
problems that can be solved in time $F_\alpha\circ p$ for some $p$ in
$\bigcup_{\beta<\alpha}\FGH\beta$.  Each $\F_\alpha$ is naturally
equipped with $\bigcup_{\beta<\alpha}\FGH{\beta}$ as classes of
reductions.  For instance, $\FGH{2}$ is the set of elementary
functions, and $\F_3$ the class of problems whose complexity is bounded by a tower of exponentials
of height given by some elementary function of the input.\footnote{Note that, at such high complexities, the usual
distinctions between deterministic vs.\ nondeterministic, or
time-bounded vs.\ space-bounded computations become irrelevant.}

\subsection{The Length of Controlled Bad Sequences}\label{sub-lft}
A finite or infinite sequence $x_0,x_1,\dots$ over a quasi-order
$(A,\leq_A)$ is called \emph{bad} if, for all indices $i<j$,
$x_i\not\leq_A x_j$.  \autoref{def-wqo} can thus be restated by saying
that $(A,\leq_A)$ is a wqo if and only if every bad sequence over $A$
is finite.  In order to bound the complexity of the algorithms from
\autoref{theo-pcs-decidability}, we wish to bound the lengths of bad
sequences over the wqo $(\Conf_{\mathcal{S}},\leq_\#)$.  More
precisely, the main issue here is to bound the length of bad sequences
over $(\Sigma_d^\ast,\pleq)$; we actually work in the more general
case of $(\Sigma_{d,\Gamma}^\ast,\gpleq)$.

\subsubsection{Controlled Sequences}
We employ to this end the framework and results of \citep{SS-icalp11}.
The first observation is that bad sequences over
$(\Sigma_d^\ast,\pleq)$ can be of arbitrary length: for every $N>0$,
the sequence
\begin{equation*}
  1,0^N,0^{N-1},\dots,0
\end{equation*}
is indeed a bad sequence of length $N+1$ over $(\Sigma_1^\ast,\pleq)$.
Thankfully, what we are looking for are not general bounds over all
the bad sequences, but over the kind of sequences that arise in the
algorithms of \autoref{theo-pcs-decidability}: in particular, the
bounds can take into account how fast the lengths of the strings in
the sequence can grow.  Define a \emph{normed} wqo as a wqo
$(A,\leq_A)$ further equipped with a norm $|.|_A{:}\,A\to\+N$.  As a
sanity condition, we ask for
\begin{equation}
  A_{\leq n}\eqdef\{x\in A\mid |x|_A\leq n\}
\end{equation}
to be finite for each $n$.

The wqos introduced in \autoref{sec-wsts} can be normed for instance
by
\begin{align*}
  |a|_\Sigma&\eqdef 0\;,&
  |\tup{x,i}|_{A_1+A_2}&\eqdef |x|_{A_i}\;,\\
  |\tup{x,y}|_{A_1\times A_2}&\eqdef \max(|x|_{A_1},|y|_{A_2})\;,&
  |x_1\cdots x_\ell|_{A_1^\ast}&\eqdef \max_{1\leq i\leq\ell}(\ell,|x_i|_{A_1})\;,
\end{align*}
where $(\Sigma,=)$ denotes a finite set with equality.  For
$(\Sigma_{d,\Gamma}^\ast,\gpleq)$, we choose similarly
\begin{align}
  |(a_1,w_1)\cdots(a_\ell,w_\ell)|_{\Sigma_{d,\Gamma}^\ast}&\eqdef
  \max_{1\leq i\leq\ell}(\ell,|w_i|_{\Gamma})\;,
\intertext{%
where we assume $(\Gamma,\leq_\Gamma)$ to be normed by $|.|_\Gamma$.
By the definition above, this simplifies to}
  |(a_1,w_1)\cdots(a_\ell,w_\ell)|_{\Sigma^\ast_{d,\Gamma}}&=\ell
\end{align}
when $\Gamma$ is finite.

Let $g{:}\,\+N\to\+N$ be a strictly monotone function (hereafter
called a \emph{control function}), and $n$ be a non-negative integer.
A sequence $x_0,x_1,\dots$ over $(A,\leq_A,|.|_A)$
is \emph{$(g,n)$-controlled} if, for all $i$, $|x_i|_A\leq g^i(n)$, the
$i$th iterate of $g$ on $n$.  Note in particular that this entails
$|x_0|_A\leq n$.  Given an algorithm that relies on $(A,\leq_A)$ being
a wqo for its termination, \emph{i.e.}, on the fact that bad sequences over
$(A,\leq_A)$ are finite, the intuition is that $g$ should bound how
fast the norm of the elements in our bad sequences can grow, and $n$ should
bound the norm of the initial element.  As shown
in~\citep{SS-icalp11}, for a given $g$ and $n$, bad $(g,n)$-controlled
sequences over $(A,\leq_A,|.|_A)$ have a maximal length denoted
$L_{A,g}(n)$.

\subsubsection{Normed Reflections}
The \emph{Length Function Theorem} in~\citep{SS-icalp11} provides
suitable subrecursive upper bounds on the function $L_{A,g}$ when $A$
is constructed using the \emph{elementary} wqo algebra that allows
disjoint unions, Cartesian products and Kleene star to be used over
finite sets.  We are going to exploit these bounds together with the
order reflection employed in the proof of \autoref{thm-gpe-wqo} to
obtain a bound on $L_{\Sigma^\ast_{d,\Gamma},g}$.

A reflection $r{:}\,A\to B$ between two normed wqos $(A,\leq_A,|.|_A)$
and $(B,\leq_B,|.|_B)$ is \emph{normed} if $|r(x)|_B\leq |x|_A$ for
all $x$ in $A$.  We write ``$A\hookrightarrow B$'' if there exists
such a normed reflection from $A$ to $B$.  Observe that, if
$x_0,x_1,\dots$ is a $(g,n)$-controlled bad sequence over $A$, then
$r(x_0),r(x_1),\dots$ is also a $(g,n)$-controlled bad sequence, this
time over $B$.  Thus $A\hookrightarrow B$ implies $L_{A,g}\leq
L_{B,g}$, \textit{i.e.}, $L_{A,g}(m)\leq L_{B,g}(m)$ for all $m\in\Nat$.

One can check, using induction on $d$, that the reflection
$r{:}\,\Sigma_{d,\Gamma}^\ast\to\Theta_{d,\Gamma}$ used in the proof
of \autoref{thm-gpe-wqo} is normed.  If $x$ is in
$\Sigma_{d-1,\Gamma}^\ast$, then $r(x)=x$ itself with
$|x|_{\Sigma_{d,\Gamma}^\ast}=|r(x)|_{\Theta_{d,\Gamma}}$.  Otherwise, let
$x$ be factorized as in \eqref{eq-fact-x}; $r(x)$ is given
in \eqref{eq-reflection}.  Then on the one hand
\begin{align}
  |x|_{\Sigma_{d,\Gamma}^\ast}&=\max_{0\leq j<k\leq
   m}\big(m+\sum_{i=0}^m|x_i|,|v_k|_\Gamma,|x_j|_{\Sigma_{d-1,\Gamma}^\ast}\big)\;,
  \intertext{while on the other hand}
  |r(x)|_{\Theta_{d,\Gamma}}&=\max\big(|x_0|_{\Sigma_{d-1,\Gamma}^\ast},\max_{1\leq
   i\leq
   m-1}(m-1,|v_i|_\Gamma,|x_i|_{\Sigma_{d-1,\Gamma}^\ast}),|w_m|_\Gamma,|x_m|_{\Sigma^\ast_{d-1,\Gamma}}\big)\;,
   \intertext{which indeed satisfy}
   |x|_{\Sigma_{d,\Gamma}^\ast}&\geq |r(x)|_{\Theta_{d,\Gamma}}\;,
\end{align}
thanks to the ind.\ hyp. Therefore $\Sigma_{d,\Gamma}^\ast\hookrightarrow\Theta_{d,\Gamma}$ for
all $d,\Gamma$.
Define now
\begin{align}
  \Theta'_{-1,\Gamma}&\eqdef \mathbf{1}&
  \Theta'_{d,\Gamma}&\eqdef\Theta'_{d-1,\Gamma}+\Theta'_{d-1,\Gamma}\times(\Gamma\times\Theta'_{d-1,\Gamma})^\ast\times\Gamma\times\Theta'_{d-1,\Gamma}\;,
\end{align}
where $\mathbf{1}$ denotes the singleton set.
Since $\hookrightarrow$ is a precongruence for the elementary
algebraic operations~\citep[Proposition~3.5]{SS-icalp11}, we deduce
that $\Sigma_{d,\Gamma}^\ast\hookrightarrow\Theta'_{d,\Gamma}$ and thus
\begin{equation}\label{eq-maj-theta}
  L_{\Sigma_{d,\Gamma}^\ast,g}\leq L_{\Theta'_{d,\Gamma},g}
\end{equation}
for all normed wqos $\Gamma$, all $d$, and all control functions $g$.
Assuming $(\Gamma,\leq_\Gamma,|.|_\Gamma)$ to be elementary, then each
$\Theta_{d,\Gamma}'$ is also elementary, \textit{i.e.}\ we are going to be able
to apply the Length Function Theorem to it and derive an upper bound
for $L_{\Theta'_{d,\Gamma},g}$, and thereby for $L_{\Sigma_{d,\Gamma}^\ast,g}$.

\subsubsection{Maximal Order Types}
The version of the Length Function Theorem we wish to apply requires
the computation of the \emph{maximal order type} of
$\Theta'_{d,\Gamma}$.  This is a measure of the complexity of a wqo
$(A,\leq_A)$ defined in~\citet{dejongh77} as the maximal order type of
its linearizations: a \emph{linearization} $\prec$ of $\leq_A$ is a
total linear ordering over $A$ that contains $\leq_A\setminus\geq_A$
as a subrelation.  Any such linearization of a wqo is well-founded and
thus isomorphic to an ordinal, called its order type, and the maximal
order type of $(A,\leq_A)$ is therefore the maximal such ordinal.

De Jongh and Parikh~\citet{dejongh77} provide formul\ae\ to compute
the maximal order types of elementary wqos based on their algebraic
decompositions as disjoint sums, Cartesian products, and Kleene
star---using respectively the sum ordering, the product ordering, and
the subword embedding ordering---:
\begin{align*}
  o(A+B)&=o(A)\oplus o(B)                       \:,\\
  o(A\times B)&=o(A)\otimes o(B)                \:,\\
  o(A^\ast)&=\begin{cases}\omega^{\omega^{o(A)-1}}&\text{if
    $A$ is finite,}\\
    \omega^{\omega^{o(A)}}&\text{otherwise.}
\end{cases}
\end{align*}
Here, the $\oplus$ and $\otimes$ operations are the \emph{natural sum}
and \emph{natural product} on ordinals, defined for ordinals in CNF in
$\ez$ by
\begin{align}\label{eq-natural-sum}
  \sum_{i=1}^m\omega^{\beta_i}\oplus\sum_{j=1}^n\omega^{\beta'_j}&\eqdef\sum_{k=1}^{m+n}\omega^{\gamma_k}\:,&
  \sum_{i=1}^m\omega^{\beta_i}\otimes\sum_{j=1}^n\omega^{\beta'_j}&\eqdef\bigoplus_{i=1}^m\bigoplus_{j=1}^n\omega^{\beta_i\oplus\beta'_j}
\:,
\end{align}
where $\gamma_1\geq\cdots\geq\gamma_{m+n}$ is a reordering of
$\beta_1,\ldots,\beta_m,\beta'_1,\ldots,\beta'_n$.

Sch\"utte and Simpson~\cite{schutte85} compute the exact maximal order
type of a wqo related to $(\Sigma_d^\ast,\pleq)$.  Here we are content
with the maximal order type of $\Theta'_{d,\Gamma}$ (which also
provides an upper bound on  $o(\Sigma_{d,\Gamma}^\ast)$).
By~\eqref{eq-maj-theta}
and~\citep[\propositionautorefname~5.2]{SS-icalp11}, we
obtain:\footnote{To be precise,
  \citep[\propositionautorefname~5.2]{SS-icalp11} only provides bounds
  for \emph{exponential} wqos---where there are no nested applications
  of the Kleene star operation---but it can be generalized to
  elementary wqos.}
\begin{proposition}[Length Function Theorem for $\Sigma_{d,\Gamma}^\ast$]
  \label{lft-gamma}
  Let $d\in\+N$ and assume $\Gamma$ is an elementary wqo and $g$ is a
  control function.  Then there exists a polynomial $p$ independent of
  $d,\Gamma,g$ such that $L_{\Sigma_{d,\Gamma}^\ast,\,g}\leq (p\circ
  g)^{o(\Theta_{d,\Gamma}')}$.
\end{proposition}

\subsubsection{Finite Alphabets and Successor Control}
\autoref{lft-gamma} is more general than useful for deriving upper
bounds on PCSs verification.  Even if we use a generalized priority
alphabet in our PCSs, by allowing read and write rules to manipulate
pairs in $\Sigma_{d,\Gamma}$ instead of only priorities in $\Sigma_d$,
we can safely assume that the underlying alphabet $\Gamma$ is finite
and part of the input, with maximal order type $o(\Gamma)=|\Gamma|$.
Similarly, the successor function $H(x)\eqdef x+1$ can be chosen for
the control function $g$, thus $p\circ g$ in \autoref{lft-gamma} is
simply a polynomial.

We can furthermore simplify the ordinal index in \autoref{lft-gamma}.
First note that
\begin{equation}\label{eq-o-theta}
  o(\Theta'_{d,\Gamma})< (\Omega_{2(d+1)+1})_{|\Gamma|}
\end{equation}
for all $d$ in $\+N$, where $(\Omega_{2(d+1)+1})_{|\Gamma|}=\omega^{\cdots^{\omega^{|\Gamma|}}}$\raisebox{0.25em}{$\bigr\}$}\raisebox{0.3em}{\footnotesize
$2(d+1)$ stacked $\omega$'s}.
Second, an ordinal term $\alpha$ in CNF can be written
as $\alpha=\omega^{\alpha_1}\cdot c_1+\cdots+\omega^{\alpha_m}\cdot
c_m$ for $\alpha>\alpha_1>\cdots>\alpha_m$ and
$0<c_1,\dots,c_m<\omega$.  We define then inductively its \emph{maximum coefficient}
$N(\alpha)$ as $\max(N(\alpha_1),\ldots,N(\alpha_m),c_1,\ldots,c_m)$.  Observe that for
all $d$ in $\+N$,
\begin{equation}\label{eq-N-theta}
  N(o(\Theta'_{d,\Gamma}))\leq |\Gamma|\;.
\end{equation}
The following simplified statement then holds:
\begin{corollary}\label{cor-lft}
  Let $d\in\+N$ and $\Gamma$ be a finite non-empty alphabet. Then
  there exists a polynomial $h$ independent of $d,\Gamma$ such that
  $L_{\Sigma^\ast_{d,\Gamma},H}(n)\leq h^{\Omega_{2(d+1)+1}}(n)$ for
  all $n\geq |\Gamma|$.
\end{corollary}
\begin{proof}
  By \autoref{lft-gamma} it suffices to show that
  $h^{o(\Theta'_{d,\Gamma})}(n)\leq h^{\Omega_{2(d+1)+1}}(n)$ for all
  $n\geq |\Gamma|>0$.  We show instead
  $h^{o(\Theta'_{d,\Gamma})}(n)\leq
  h^{(\Omega_{2(d+1)+1})_{|\Gamma|}}(n)$ since it allows to conclude.
  By monotonicity of the Hardy functions in the ordinal index for
  the \emph{pointwise
  ordering}---see \citep[\theoremautorefname~2.21.2]{fairtlough98}
  or \citep[\lemmaautorefname~A.10]{SS-esslli2012}---it suffices to
  show that
  $o(\Theta'_{d,\Gamma})\preceq_{|\Gamma|}\left(\Omega_{2(d+1)+1}\right)_{|\Gamma|}$
  using the notation of \citep{SS-esslli2012}, which is entailed by
  (\ref{eq-o-theta}--\ref{eq-N-theta})
  and \citep[\lemmaautorefname~A.5]{SS-esslli2012}.
\end{proof}

\subsection{Complexity Upper Bounds}\label{sub-upb}
Now that we are armed with a Length Function Theorem for
$(\Sigma_{d,\Gamma}^\ast,\gpleq)$, we can prove an upper bound for PCS
verification:
\begin{theorem}[Complexity of PCS Verification]
  \label{theo-ez-upb}
  Reachability and Inevitability of PCSs are in $\F_{\ez}$.
\end{theorem}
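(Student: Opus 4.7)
The plan is to assemble the ingredients developed above into the combinatory-algorithm framework of \cite{SS-esslli2012}, then extend the Termination argument already sketched to Reachability and Inevitability.

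First, I would use the bounds already collected in this section: the maximal order type $o_{\mathcal{S}}$ of $(\Conf_{\mathcal{S}},\leq_\hh)$ is at most $(\Omega_{2d+1})^m\cdot|Q|$, strictly below $\ez$, and every execution is controlled by the successor function $H$ since $|C_i|\leq |C_0|+i$. The Length Function Theorem of \cite{SS-icalp11} then yields $L=h^{o_{\mathcal{S}}}(|C_0|)$ as a bound on the length of bad executions, for a fixed polynomial $h$, with $L\leq H^{\ez}(p(|S|+|C_0|))$ for some fixed ordinal-recursive $p$; hence $L$ is computable in $\F_{\ez}$.

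For Termination the combinatory algorithm is already sketched: nondeterministically guess an execution of length $L+1$ from $C_0$; such a run exists iff the system fails to terminate, since any infinite run must contain a good pair $C_i\leq_\hh C_j$ within its first $L+1$ configurations. For Reachability I would exploit the identity $\hstep{+}=\geq_\hh\circ\hstep{+}$ observed in the proof of \autoref{theo-pcs-decidability}: it suffices to check coverability of each of the finitely many minimal elements $G'$ of $G$. Any witness $C_0\hstep{*}D$ with $D\geq_\hh G'$ can be shortened into a bad sequence by contracting repeats via monotonicity, so a witness of length at most $L$ exists whenever $G'$ is coverable; a bounded-length nondeterministic search remains in $\F_{\ez}$.

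For Inevitability my plan is to search for a maximal run from $C_0$ that avoids $G$; such a run exists iff Inevitability fails. The run either deadlocks within $L$ steps outside $G$, or extends to an infinite $G$-avoiding run; in the latter case its prefix up to the first good pair $C_i\leq_\hh C_j$ outside $G$ has length at most $L$, and monotonicity lifts the $C_i$-to-$C_j$ loop into an infinite $G$-avoiding extension from $C_j$. Either way a witness of length $\leq L+1$ suffices, matching the generic WSTS Inevitability algorithm of \cite{SS-esslli2012}. The subtlest point will be the Inevitability case, where one must ensure that the guessed good-pair witness really lifts to a $G$-avoiding infinite run; this is handled by the fine-grained WSTS analysis of \cite{SS-esslli2012} applied with $\neg G$ as the safe region.
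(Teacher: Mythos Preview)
Your overall strategy---bound witness lengths via the Length Function Theorem and run a combinatory search---is exactly the paper's, which itself only details Termination and defers Reachability and Inevitability to \cite{SS-esslli2012}.

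There is, however, a concrete gap in your Reachability sketch. You claim a coverability witness $C_0\hstep{*}D\geq_\hh G'$ can be ``shortened into a bad sequence by contracting repeats via monotonicity.'' Upward compatibility goes the wrong way for this: if $C_i\leq_\hh C_j$ with $i<j$, it is the \emph{larger} $C_j$ that can simulate the \emph{smaller} $C_i$'s steps, not conversely. Finding such a pair lets you \emph{extend} the run (which is what Termination needs) but gives no way to reach $G'$ from $C_i$ in fewer than $n-i$ steps, hence no shortening. The dual case $C_j\leq_\hh C_i$ does let you replace $C_i\hstep{*}C_j$ by superseding steps, but those cost $|C_i|-|C_j|$ steps, which need not be below $j-i$. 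So a shortest covering run is not a controlled bad sequence in general, and your forward bound $L$ does not apply to it. The argument that \cite{SS-esslli2012} actually supplies is the \emph{backward} one: iterate $U_0={\uparrow}G'$, $U_{k+1}=U_k\cup\Pre(U_k)$; the new minimal elements appearing along this chain form a bad sequence (a later minimal element cannot lie above an earlier one) that is controlled (minimal predecessors grow by at most one symbol), and the LFT bounds the number of iterations---hence the length of a shortest witness---inside $\F_{\ez}$. Your Inevitability sketch has the analogous pitfall (lifting the $C_i$-to-$C_j$ loop yields configurations $\geq_\hh$ the originals, which may fall into $G$), but you correctly flag it and defer to \cite{SS-esslli2012}, which is precisely what the paper does.
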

Let us explain the steps towards an upper bound for Termination in
some detail; the results for Reachability and Inevitability are
similar but more involved---see~\citep{SS-esslli2012,concur} for
generic complexity arguments for WSTSs.

\subsubsection*{A Finite Witness}
Observe that, if an execution
$C_0\hstep{}C_1\hstep{}C_2\hstep{}\cdots$ of the transition system
$\mathcal{S}_\#$ satisfies $C_i\leq_\# C_j$ for some indices $i<j$,
then because $\mathcal{S}_\#$ is a WSTS, we can simulate the steps
performed in this sequence after $C_i$ but starting from $C_j$ and
build an infinite run.  Conversely, if the system does not terminate,
\textit{i.e.}\ if there is an infinite execution
$C_0\hstep{}C_1\hstep{}C_2\hstep{}\cdots$, then because of the wqo we
will eventually find $i<j$ such that $C_i\leq_\# C_j$.  Therefore, the
system is non-terminating if and only if there is a finite witness of
the form
$C_0\hstep{*}C_i\hstep{+}C_j$ with $C_i\leq_\# C_j$.

\subsubsection*{Controlled Witnesses}
Another observation is that the size of successive configurations
cannot grow arbitrarily along runs; in fact, the length of the
channels contents can only grow by one symbol at a time using a write
transition.  This means that if we define
$|C=(q,x_1,\dots,x_m)|=\sum_{j=1}^m|x_j|$, then in an execution
$C_0\hstep{}C_1\hstep{}C_2\hstep{}\cdots$, $|C_i|\leq
|C_0|+i=H^i(|C_0|)$, \textit{i.e.}\ any execution is \emph{controlled} by the
successor function $H$.

\subsubsection*{Applying the Length Function Theorem}
\autoref{cor-lft} yields an \mbox{$h^{\Omega_{2(d+1)+1}}(|C_0|+|\Gamma|)$}
upper bound on the length of bad $(H,|C_0|)$-controlled sequences over
\mbox{$(\Sigma_{d,\Gamma}^\ast,\gpleq)$} for some polynomial $h$.  It can be
lifted to bound the maximal length of a witness in $\mathcal{S}_\#$,
when considering instead the ordinal
$o_{S}\eqdef(\Omega_{2(d+1)+1})^m\cdot|Q|$.  Setting
$|S|=|\Delta|+|Q|+d+m+|\Gamma|$, we see that this length is less than
$H^{\ez}\left(p(|S|+|C_0|)\right)<F_{\ez}\left(p(|S|+|C_0|)\right)$
for some fixed ordinal-recursive function $p$.

\subsubsection*{A Combinatorial Algorithm}
Since the functions $(h^\alpha)_{\alpha}$ are elementary
constructive~\citep[\theoremautorefname~5.1]{arXiv/Schmitz13}, the
above discussion yields a non-deterministic algorithm in $\F_{\ez}$
for Termination: compute $L=h^{o_{S}}(|C_0|+|\Gamma|)$ and
look for an execution of length $L+1$ in $\mathcal{S}_\#$.  If one
exists, it is necessarily a witness for nontermination; otherwise, the
system is guaranteed to terminate from $C_0$.

We call this a \emph{combinatorial} algorithm, as it relies on the
combinatorial analysis provided by the Length Function Theorem to
derive an upper bound on the size of a finite witness for the property
at hand---here Termination, but the same kind of techniques can be
used for Reachability and Inevitability.

\section{Hardy Computations by PCSs}\label{sec-hardy}
In this section we show how PCSs can weakly compute the Hardy
functions $H^\alpha$ and their inverses for all ordinals $\alpha$
below $\Omega$, which is the key ingredient for
\autoref{theo-PCS-e0hard} below stating our hardness result.  For this,
we develop in \autoref{ssec-enc-ordinals} encodings
$s(\alpha)\in\Sigma_d^*$ for ordinals $\alpha\in\Omega_d$ and show how
PCSs can compute with these codes, \emph{e.g.} build the code for
$\lambda_n$ from the code of a limit $\lambda$.  This is used
in \autoref{ssec-enc-hardy} to design PCSs that
\emph{weakly compute} $H^\alpha$ and $(H^\alpha)^{-1}$ in the
sense of \autoref{def-hardy-computers} below.

\subsection{Encoding Ordinals}
\label{ssec-enc-ordinals}

Our encoding of ordinal terms as strings in $\Sigma_d^*$ employs
strings of a particular form.  For $0\leq a \leq d$, we use the
following equation to define the language $C_a\subseteq\Sigma_d^*$ of
\emph{codes}:
\begin{xalignat}{2}\label{eq-Pa}
C_{a} & \eqdef \epsilon + C_{a} C_{a-1} a
\:,
&
 C_{-1}  &\eqdef \epsilon
\:.
\end{xalignat}
Let $C=C_{-1}+C_0+\cdots +C_d$. Each $C_a$ (and then $C$ itself) is a
regular language, with $C_a=(C_{a-1}a)^\ast$; for instance,
$C_0=0^\ast$.

\subsubsection{Decompositions}
A code $x$ is either the empty word $\epsilon$, or belongs to a unique
$C_a$. If $x\in C_a$ is not empty, it has a unique factorization $x=y
z a$ according to \eqref{eq-Pa} with $y\in C_a$ and $z\in C_{a-1}$.
Recall that $h(x)$ denotes the height function, thus a non-empty
$x=a_1\cdots a_\ell$ is a code if and only if $a_\ell = h(x)$ and
$a_{i+1}-a_i \leq 1$ for all $i<\ell$ (we say that $x$ \emph{has no
  jumps}: priorities only increase smoothly along codes, but they can
decrease sharply).  For instance, $02$ is not a code (it has a jump),
but $001122$ and $01223400123334$ are codes.

The factor $z\in C_{a-1}$ in $x=y z a$ can be developed further, as
long as $z\not=\epsilon$: a non-empty code $x\in C_d$ has a unique
factorization as $x = y_d \, y_{d-1} \ldots y_a \, \std{a}$ with
$y_i\in C_i$ for $i=a,\ldots,d$, and where for $0\leq a\leq b$, we
write $\st{a}{b}$ for the \emph{staircase} word $a (a+1) \cdots (b-1)
b$, letting $\st{a}{b}=\epsilon$ when $a>b$.  We call this the
\emph{decomposition} of $x$. Note that the value of $a$ is obtained by
looking for the maximal suffix of $x$ that is a staircase word.
For example, %
 $x = 23312340121234\in C_4$ is a code %
and decomposes as
\[
x =
\obracew{2331234}{y_4} \: \obracew{\epsilon}{y_3} \: \obracew{012}{y_2} \: \obracew{\epsilon}{y_1} \: \obracew{1234}{\st{1}{4}}
\:.
\]

\subsubsection{Ordinal Encoding}
With a code $x\in C$, we associate an ordinal term $\eta(x)$ given by
\begin{xalignat}{2}\label{eq-def-eta}
\eta(\epsilon) &\eqdef 0
\:,
&
\eta(y z a) & \eqdef \eta(y)+\omega^{\eta(z)}
\:,
\end{xalignat}
where $x=y z a$ is the factorization according to \eqref{eq-Pa} of
$x\in C_a\setminus\{\epsilon\}$. For
example, $\eta(a)=\omega^0=1$ for all $a\in\Sigma_d$, $\eta(012)=\eta(234)
= \omega^\omega$, and more generally $\eta(\st{a}{b})=\Omega_{b-a}$.
One sees that $\eta(x)<\Omega_{a+1}$ when $x\in C_a$.

This \emph{decoding} function $\eta{:}\,C\to\Omega_{d+1}$ is onto (or
surjective) but it is not bijective. However, it is a bijection
between $C_a$ and $\Omega_{a+1}$ for any $a\leq d$. Its converse is
the level-$a$ \emph{encoding} function $s_a{:}\,\Omega_{a+1}\to C_a$, defined
with
\begin{xalignat}{2}
s_a\Bigl(\sum_{i=1}^p\gamma_i\Bigr) &\eqdef s_a(\gamma_1) \cdots s_a(\gamma_p)
\:,
&
s_a(\omega^\alpha) &\eqdef s_{a-1}(\alpha) \, a
\:.
\end{xalignat}
Thus $s_a(0)=\epsilon$ and, for example,
\begin{xalignat*}{3}
s_5(1) &= 5     \:,
&
s_5(3) &=  5 5 5 \:,
&
s_5(\omega) &= 4 5 \:,
\\
s_5(\omega^3) &= 4445   \:,
&
s_5(\omega^{\omega}) &=345      \:,
&
s_5(\omega^{\omega^\omega}) &=2345      \:,\\[-1.75em]
\end{xalignat*}
\begin{xalignat*}{2}
s_5(\omega^3 + \omega^2) &= 4445445     \:,
&
s_5(\omega\cdot 3) &= 4 5 4 5 4 5       \:.
\end{xalignat*}
We may omit the subscript when $a=d$, \emph{e.g.} writing
$s(1)=d$.

\subsubsection{Successors and Limits}\label{par-succ-lim}
Let $x = y_d \, y_{d-1} \ldots y_a \, \std{a}$ be the decomposition of
$x\in C_d\setminus\{\epsilon\}$. By \eqref{eq-def-eta}, $x$ encodes a
successor ordinal $\eta(x)=\beta+1$ if and only if $a=d$, \emph{i.e.}, if $x$ ends with
two $d$'s (or has length 1). Since then $\beta=\eta(y_d \ldots y_a)$,
one obtains the ``predecessor of $x$'' by removing the final $d$.

If $a<d$, $x$ encodes a limit $\lambda$. Combining
\eqref{eq-def-fund-seq} and \eqref{eq-def-eta}, one obtains the
encoding $(x)_n$ of $\lambda_n$ with
\begin{equation}
\label{eq-fund-on-codes}
(x)_n
=
y_d \, y_{d-1} \ldots y_{a+1} \bigl(y_a
(a+1)\bigr)^n \std{(a+2)}\:.
\end{equation}
For instance, with $d=5$, decomposing $x=333345=s(\omega^{\omega^4})$ gives
$a=3$, $x=y_5 y_4 y_3 \st{3}{5}$, with $y_3=333$ and
$y_5=y_4=\epsilon$. Then $(x)_n = (3334)^n 5$, agreeing with, \emph{e.g.} 
$s(\omega^{\omega^3\cdot 2})=333433345$.

\subsection{Robustness}
A crucial property of our ordinal encoding is \emph{robustness}, \emph{i.e.}
that $x\pleq x'$ should reflect the corresponding relation
$H^{\eta(x)}(n)\leq H^{\eta(x')}(n)$ on Hardy computations.
\begin{proposition}[Robustness]\label{prop-code-robust}
  Let $a\geq 0$ and $x\pleq x'$ be two strings in $C_a$.  Then,
  $H^{\eta(x)}(n)\leq H^{\eta(x')}(n')$ for all $n\leq n'$  in $\+N$.
\end{proposition}\noindent
The proof of \autoref{prop-code-robust} requires delving in some of
the theory of subrecursive functions, and is postponed until
\autoref{ssec-robust}.

\subsubsection{Properties of the Hardy Hierarchy}
We first list some useful properties of Hardy computations (see
\citep{fairtlough98} or \citep[App.~A]{SS-esslli2012} for details).
The first fact is that each Hardy function is expansive and monotone
in its argument $n$:
\begin{fact}[Expansiveness and Monotonicity]For all $\alpha,\alpha'$ in $\Omega$ and $n>0,m$
  in $\+N$,
\begin{gather}
    \label{eq-exp-hardy}
    n\leq H^\alpha(n)\:,\\
    \label{eq-mono-hardy}
    n\leq m\text{ implies }H^\alpha(n)\leq H^\alpha(m)\:.
\end{gather}
\end{fact}
However, the Hardy functions are not monotone in the ordinal
parameter: $H^{n+1}(n)=2n+1>2n=H^{n}(n)=H^\omega(n)$, though
$n+1<\omega$.  We will introduce an ordering on ordinal terms in
\autoref{sub:embd} that ensures
monotonicity of the Hardy functions.

Another handful fact is that we can decompose Hardy computations:
\begin{fact}
  For all $\alpha,\gamma$ in $\Omega$, and $n$ in $\+N$,
  \begin{equation}\label{lem-hardy-sum}
    H^{\gamma+\alpha}(n)=H^\gamma\!\left(H^\alpha(n)\right).
  \end{equation}
\end{fact}\noindent
Note that \eqref{lem-hardy-sum} holds for all ordinal terms, and not
only for those $\alpha,\gamma$ such that $\gamma+\alpha$ is in
CNF---this is a virtue of working with terms rather than set-theoretic
ordinals.

\subsubsection{Ordinal Embedding}
\label{sub:embd}
We introduce a partial ordering $\leqo$ on ordinal terms, called
\emph{embedding}, and which corresponds to a strict tree embedding
on the structure of ordinal terms.  Formally, it is defined inductively by
$\alpha\leqo\beta$ $\equivdef$
$\alpha=\omega^{\alpha_1}+\cdots+\omega^{\alpha_p}$,
$\beta=\omega^{\beta_1}+\cdots+\omega^{\beta_m}$, and there exist
$1\leq i_1<i_2<\ldots<i_p\leq m$ such that
$\alpha_1\leqo\beta_{i_1}\wedge\cdots\wedge\alpha_p\leqo\beta_{i_p}$.
Note that $0\leqo \alpha$ for all $\alpha$, that $1\leqo\alpha$ for
all $\alpha>0$.  In general, $\alpha\not\leqo\omega^\alpha$ and
$\lambda_n\not\leqo\lambda$.  Ordinal embedding is congruent for addition
and $\omega$-exponentiation of terms:
\begin{gather}
  \label{eq-leqo-congs}
  \alpha\leqo\alpha'\text{ and }\beta\leqo\beta'\text{ imply
  }\alpha+\beta\leqo\alpha'+\beta'\:,\\
  \label{eq-leqo-conge}
  \alpha\leqo\alpha'\text{ implies
  }\omega^\alpha\leqo\omega^{\alpha'}\:,
\end{gather}
and could in fact be defined alternatively by the axiom $0\leqo\alpha$
and the two deduction rules \eqref{eq-leqo-congs} and
\eqref{eq-leqo-conge}.

We list a few useful consequences of the definition of $\leqo$:
\begin{align}
\alpha\leqo \gamma+\omega^\beta
&\text{ implies }\alpha\leqo\gamma,\label{eq-D1}
		 \text{ or } \alpha=\gamma'+\omega^{\beta'} \text{ with } \gamma'\leqo\gamma \text{ and } \beta'\leqo\beta
\:,
\\
\label{eq-D2}
n\leq m
&\text{ implies } \lambda_n\leqo\lambda_m
\:,
\\
\label{eq-D3}
\alpha\leqo \lambda
&\text{ implies } \alpha\leqo\lambda_n,
\text{ or } \alpha \text{ is a limit and } \alpha_n\leqo\lambda_n
\:.
\end{align}
\begin{proof}[Proof of \eqref{eq-D1}]
Intuitively, there are two cases when we consider
$\alpha\leqo\alpha'=\gamma+\omega^\beta$: either the $\omega^\beta$
summand of $\alpha'$ is in the range of the embedding or not. If it is not,
then already $\alpha\leqo\gamma$. If it is, then $\alpha$ must be some
$\gamma'+\omega^{\beta'}$ and $\omega^{\beta'}\leqo\omega^\beta$,
which implies in turn $\beta'\leqo\beta$.
\end{proof}
\begin{proof}[Proof of \eqref{eq-D2}]
By induction on $\lambda$: indeed if $\lambda=\gamma+\omega^{\beta+1}$
then $\lambda_m=\gamma+\omega^\beta\cdot m$, %
which is $\lambda_n+\omega^\beta\cdot(m-n)$. If
$\lambda=\gamma+\omega^{\lambda'}$, the ind.\ hyp.\ gives
$\lambda'_n\leqo\lambda'_m$, hence
$\lambda_n=\gamma+\omega^{\lambda'_n}\leqo\gamma+\omega^{\lambda'_m}=\lambda_m$.
\end{proof}
\begin{proof}[Proof of \eqref{eq-D3}]
By induction on $\lambda$. We can write $\lambda$ as some
$\gamma+\omega^{\beta}$ with $\beta>0$ so that
$\lambda_n=\gamma+(\omega^\beta)_n$. If $\alpha\leqo\gamma$, then
$\alpha\leqo\lambda_n$ trivially. If $\alpha=\gamma'+ 1$ is a
successor, $1\leqo(\omega^\beta)_n$ and again
$\alpha\leqo\lambda_n$. There remains the case where
$\alpha=\gamma'+\omega^{\beta'}$ is a limit (\emph{i.e.} $\beta'>0$) with
$\gamma'\leqo\gamma$ and $\beta'\leqo\beta$. If $\beta$ is a limit,
then by ind.\ hyp.\ either $\beta'\leqo\beta_n$ and hence
$\alpha\leqo\lambda_n$, or $\beta'$ is a limit and
$\beta'_n\leqo\beta_n$, hence $\alpha_n\leqo\lambda_n$. Finally, if
$\beta=\delta+1$ is a successor, then either $\beta'\leqo\delta$ so
that $\alpha\leqo\gamma+\omega^\delta\leqo \gamma+\omega^\delta\cdot
n=\lambda_n$, otherwise by \eqref{eq-D1}, $\beta'$ is a successor
$\delta'+1$ with $\delta'\leqo\delta$, and then
$(\omega^{\beta'})_n=\omega^{\delta'}\cdot n\leqo \omega^{\delta}\cdot
n=(\omega^{\beta})_n$, hence $\alpha_n\leqo\lambda_n$.
\end{proof}
\begin{proposition}[Monotonicity]
\label{prop-leqo-mono}
For all $\alpha,\alpha'$ in $\Omega$ and $n$ in $\+N$,
\begin{equation*}
\alpha\leqo\alpha'\text{ implies }H^\alpha(n)\leq H^{\alpha'}(n)\:.
\end{equation*}
\end{proposition}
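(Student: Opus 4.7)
The plan is to prove the proposition by transfinite induction on $\alpha'$, using the standard well-ordering $<$ on ordinal terms (not $\leqo$ itself, which is only a partial order). The Hardy hierarchy is defined by recursion on the ordinal parameter, so the three cases $\alpha' = 0$, $\alpha'$ a successor, and $\alpha'$ a limit correspond directly to the three clauses of the definition of $H^{\alpha'}$.

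For the base case $\alpha' = 0$, any $\alpha \leqo 0$ must equal $0$ (since $0$ has no nontrivial embeddings into it), so trivially $H^\alpha(n) = n = H^{\alpha'}(n)$.

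For the successor case $\alpha' = \gamma' + 1 = \gamma' + \omega^0$, I would apply \eqref{eq-D1}, which gives two subcases: either $\alpha \leqo \gamma'$, or $\alpha$ is a successor $\gamma + 1$ with $\gamma \leqo \gamma'$. In the first subcase, the inductive hypothesis gives $H^\alpha(n) \leq H^{\gamma'}(n)$, and then expansiveness \eqref{eq-exp-hardy} applied to $H$ combined with monotonicity \eqref{eq-mono-hardy} of $H^{\gamma'}$ in $n$ yields $H^{\gamma'}(n) \leq H^{\gamma'}(H(n)) = H^{\alpha'}(n)$. In the second subcase, $H^\alpha(n) = H^\gamma(H(n))$ and $H^{\alpha'}(n) = H^{\gamma'}(H(n))$, and the inductive hypothesis applied at $\gamma'$ with the shifted argument $H(n) = n+1$ gives $H^\gamma(n+1) \leq H^{\gamma'}(n+1)$, as desired.

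For the limit case, I would invoke \eqref{eq-D3}: either $\alpha \leqo \alpha'_n$, in which case the inductive hypothesis at $\alpha'_n < \alpha'$ yields $H^\alpha(n) \leq H^{\alpha'_n}(n) = H^{\alpha'}(n)$ directly; or $\alpha$ is itself a limit with $\alpha_n \leqo \alpha'_n$, in which case $H^\alpha(n) = H^{\alpha_n}(n) \leq H^{\alpha'_n}(n) = H^{\alpha'}(n)$, again by the inductive hypothesis at $\alpha'_n$.

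I do not anticipate any serious obstacle: the dissection lemmas \eqref{eq-D1} and \eqref{eq-D3} were designed precisely to control how an embedded term $\alpha$ decomposes relative to the standard successor/limit structure of $\alpha'$, and once those are available the proof is a straightforward case analysis. The only subtle point is that the induction is on the standard ordering $<$ (not on $\leqo$), so that the argument to $H$ may change between inductive calls (it becomes $n+1$ in the successor case); the expansiveness and monotonicity facts \eqref{eq-exp-hardy}--\eqref{eq-mono-hardy} are exactly what we need to reconcile this.
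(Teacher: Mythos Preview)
Your proof is correct, but it follows a different route than the paper's. You run a single transfinite induction on $\alpha'$ (with respect to the well-founded order $<$), invoking the dissection lemmas \eqref{eq-D1} and \eqref{eq-D3} directly to handle the successor and limit cases. The paper instead performs structural induction on a derivation of $\alpha\leqo\alpha'$ along the congruence rules \eqref{eq-leqo-congs} and \eqref{eq-leqo-conge}: the sum case is dispatched via the compositional identity $H^{\gamma+\alpha}=H^\gamma\circ H^\alpha$ of \eqref{lem-hardy-sum}, while the $\omega$-exponent case requires a \emph{nested} transfinite induction on the exponent, which is where \eqref{eq-D1} and \eqref{eq-D3} enter in the paper. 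Your argument is thus more elementary---no nested induction and no appeal to \eqref{lem-hardy-sum}---whereas the paper's argument is more modular, isolating how monotonicity passes through each term constructor.
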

\begin{proof}
  Let us proceed by induction on a proof of $\alpha\leqo\alpha'$, based
  on the deduction rules \eqref{eq-leqo-congs} and
  \eqref{eq-leqo-conge}.  For the base case, $0\leqo\alpha'$ implies
  $H^0(n)=n\leq H^{\alpha'}(n)$ by expansiveness.

  For the inductive step with
  \eqref{eq-leqo-congs}, if $\alpha\leqo\alpha'$ and
  $\beta\leqo\beta'$, then
  \begin{align*}
    H^{\alpha+\beta}(n)
    &=H^\alpha\bigl(H^\beta(n)\bigr)&&\text{by \eqref{lem-hardy-sum}}\\
    &\leq H^{\alpha}\bigl(H^{\beta'}(n)\bigr)&&\text{by ind.\ hyp.\ and
      \eqref{eq-mono-hardy}}\\
    &\leq H^{\alpha'}\bigl(H^{\beta'}(n)\bigr)&&\text{by ind.\ hyp.}\\
    &=H^{\alpha'+\beta'}(n)\;.&&\text{by \eqref{lem-hardy-sum}}
  \end{align*}

  For the inductive step with \eqref{eq-leqo-conge}, if
  $\alpha\leqo\alpha'$, then we show $H^{\omega^\alpha}(n)\leq
  H^{\omega^{\alpha'}}(n)$ by induction on $\alpha'$:
  \begin{itemize}
  \item If $\alpha'=0$, then $\alpha=0$ and we are done.
  \item If $\alpha'=\beta'+ 1$ is a successor, then by
    \eqref{eq-D1} either $\alpha\leqo\beta'$, or $\alpha=\beta+ 1$
    with $\beta\leqo\beta'$.
    In the first case, $H^{\omega^\alpha}(n)\leq
    H^{\omega^{\beta'}}(n)\leq
    H^{\omega^{\beta'}}(H(n))=H^{\omega^{\alpha'}}(n)$ by ind.\ hyp.\ and expansiveness.
    In the second case, we see by induction on $i\in\+N$ that
    \begin{equation}\label{eq-Homi}
      \left(H^{\omega^\beta}\right)^i(n)\leq\left(H^{\omega^{\beta'}}\right)^i(n)
    \end{equation}
    for all $i$ and $n$ thanks to the ind.\ hyp.  Thus
    \begin{equation*}
      H^{\omega^{\beta+1}}(n)=\left(H^{\omega^{\beta}}\right)^n(n)\leq\left(H^{\omega^{\beta'}}\right)^n(n)=H^{\omega^{\beta'+1}}(n)
    \end{equation*}
    for all $n$, and we are done.
  \item If $\alpha'=\lambda'$ is a limit, then by \eqref{eq-D3} either
    $\alpha\leqo\lambda'_n$ or $\alpha$ is a limit $\lambda$ and $\lambda_n\leqo\lambda'_n$.  In the first case $H^{\omega^\alpha}(n)\leq
    H^{\omega^{\lambda_n}}(n)$ by ind.\ hyp.; in the second case
    $H^{\omega^\lambda}(n)=H^{\omega^{\lambda_n}}(n)\leq
    H^{\omega^{\lambda'_n}}(n)=H^{\omega^{\lambda'}}(n)$ using the
    ind.\ hyp.\qedhere
  \end{itemize}
\end{proof}

\subsubsection{Robustness}\label{ssec-robust}
We are now in position to prove \autoref{prop-code-robust}:
\begin{proof}[Proof of \autoref{prop-code-robust}]
  We prove that $\eta(x)\leqo\eta(x')$ by induction on $x$ and
  conclude using \autoref{prop-leqo-mono} and
  Eq.~\eqref{eq-mono-hardy}. If $x=\epsilon$,
  $\eta(x)=0\leqo\eta(x')$.  Otherwise we can decompose $x$ as $yza$
  according to \eqref{eq-Pa} with $y\in C_a$ and $z\in C_{a-1}$.  By
  \eqref{pleq-3}, $x'=y'z'a$ with $y\pleq y'$ and $za\pleq z'a$.
  Observe that $y'$ and $z'$ are in $C_a$, and writing
  $z'a=z'_1a\cdots z'_ma$ for the canonical decomposition of
  $z'$---where necessarily each $z'_j$ is in $C_{a-1}$---, then
  $z\pleq z'_1$ as there is no other way of disposing of the other
  occurrences of $a$ in $z'$.

  By ind.\ hyp., $\eta(y)\leqo\eta(y')$ and $\eta(z)\leqo\eta(z'_1)$.
  Then, because $\eta(x)=\eta(y)+\omega^{\eta(z)}$ and
  $\eta(x')=\eta(y')+\omega^{\eta(z'_1)}+\cdots+\omega^{\eta(z'_m)}$,
  we see by \eqref{eq-leqo-congs} and \eqref{eq-leqo-conge} that
  $\eta(x)\leqo\eta(x')$.
\end{proof}

\subsection{Robust Hardy Computations in PCSs}
\label{ssec-enc-hardy}
Our goal is to implement in a PCS the \emph{canonical} Hardy
steps, denoted with $\step{H}$, and specified by the following two
rewrite rules on pairs in $\Omega\times\+N$:
\begin{xalignat}{2}
\label{eq-hstep-succ}
(\alpha+1,n) & \step{H} (\alpha,n+1) && \text{for successors,}\\
\label{eq-hstep-limit}
(\lambda,n) &\step{H} (\lambda_n,n)  &&\text{for limits.}
\end{xalignat}
A \emph{Hardy computation} for $H^\alpha(n)$ is a sequence of rewrites
$(\alpha,n)=(\alpha_0,n_0)\step{H}(\alpha_1,n_1)\step{H}\cdots\step{H}(\alpha_\ell,n_\ell)$.
Note that (\ref{eq-hstep-succ}--\ref{eq-hstep-limit}) provide a
rewriting view of the definition of Hardy functions
in~\eqref{eq-def-hardy}.  Thus $H^{\alpha_i}(n_i)$ remains constant
throughout the computation, and if $\alpha_\ell=0$ then
$n_\ell=H^\alpha(n)$, in which case we call the computation
\emph{complete}.
\begin{figure}[tbp]
\centering
\begin{tikzpicture}[node distance=.4em,text depth=.2ex,text height=1.5ex]
  \node(o){$\tto\,$:};
  \node[right=of o,fill=black!12,text width=7em,align=flush right]
    (oc){$3\,3\,4\,5\,4\,5\,\dol\;$};
  \node[right=of oc]
    (oe){the \emph{ordinal} term $\omega^{\omega^2}+\omega^\omega$};
  \node[below=of o](c){$\ttc\,$:};
  \node[right=of c,fill=black!12,text width=7em,align=flush right]
    (cc){$0\,0\,0\,0\,\dol\;$};
  \node[right=of cc]
    (ce){the \emph{counter} value $4$};
  \node[below=of c](t){$\ttt\,$:};
  \node[right=of t,fill=black!12,text width=7em,align=flush right]
    (tc){$\dol\;$};
  \node[right=of tc]
    (te){the \emph{temporary} storage};
  \draw[thick,color=black!80]
    (oc.north west) -- (oc.north east)
    (oc.south west) -- (oc.south east)
    (cc.north west) -- (cc.north east)
    (cc.south west) -- (cc.south east)
    (tc.north west) -- (tc.north east)
    (tc.south west) -- (tc.south east);
\end{tikzpicture}
\caption{\label{fig-pcs-channels}Channels for Hardy computations.}
\end{figure}

We do not implement canonical Hardy steps as PCSs, but construct
instead \emph{weak} computers, which might return lower values.
Our PCSs for weak Hardy computations use three channels
(see \autoref{fig-pcs-channels}), storing (codes for) a pair
$\alpha,n$ on channels $\tto$ (for ``ordinal'') and $\ttc$ (for ``counter''),
and employ an extra channel, $\ttt$, for ``temporary'' storage.
Instead of $\Sigma_d$, we use $\Sigma_{d+1}$ with $d+1$ used
as a position marker and written $\dol$ for clarity: each channel
always contains a single occurrence of $\dol$.
\begin{definition}\label{def-hardy-computers}
A \emph{weak Hardy computer} for $\Omega_{d+1}$ is a $(d+1)$-PCS $S$ with channels
$\Ch=\{\tto,\ttc,\ttt\}$ and two distinguished states
$p_\init$ and $p_\final$ such that:
\begin{align}
\tag{safety}
\text{if } (p_\init,x\dol,y\dol,z\dol) &\hstep{*} (p_\final,u,v,w)
&\text{then }&x\in C_d, y\in 0^+, z=\epsilon, u,v,w\in\Sigma_d^*\dol
\:,\\
\tag{robustness}
\text{if } (p_\init,s(\alpha)\dol,0^n\dol,\dol) &\hstep{*} (p_\final,s(\beta)\dol,0^m\dol,\dol)
&\text{then }&H^\alpha(n)\geq H^\beta(m)
\:.
\intertext{%
Furthermore $S$ is \emph{complete} if for any $\alpha<\Omega_{d+1}$ and $n>0$,}
(p_\init,s(\alpha)\dol,0^n\dol,\dol)&\hstep{*}
(p_\final,\dol,0^m\dol,\dol)&\text{where }&m\eqdef H^\alpha(n) \tag{complete}
\intertext{and it is \emph{inv-complete}
if} %
(p_\init,\dol,0^m\dol,\dol)&\hstep{*}
(p_\final,s(\alpha)\dol,0^n\dol,\dol)\:.\tag{inv-complete}
\end{align}
\end{definition}

\begin{lemma}[PCSs weakly compute Hardy functions]\label{thm-hardy-computers}
For every $d\in\Nat$, there exists a weak Hardy computer $S_d$ for
$\Omega_{d+1}$ that
is complete, and a weak $S^{-1}_d$ that is inv-complete. Furthermore
$S_d$ and $S^{-1}_d$ can be generated uniformly in \textsc{DSpace}$(\log d)$.
\end{lemma}

We design a complete weak Hardy computer for
\autoref{thm-hardy-computers} by assembling several components.
The weak Hardy computer $S_d$ is actually composed of two
components $S_{d,+1}$ and $S_{d,\lambda}$ in charge respectively of
applying the successor~\eqref{eq-hstep-succ} and
limit~\eqref{eq-hstep-limit} steps, and similarly $S^{-1}_d$ is composed
of two components $S^{-1}_{d,+1}$ and $S^{-1}_{d,\lambda}$ in charge of
reversing those steps.

\subsubsection{Successor Steps}
We start with ``canonical successor steps'', as \textit{per} \eqref{eq-hstep-succ}.
They are implemented by $S_{d,+1}$, the PCS
depicted in \autoref{fig-pcs-succ}.
When working on codes, replacing $s(\alpha+1)$ by $s(\alpha)$ simply
means removing the final $d$ (see \autoref{par-succ-lim}), but when
the strings are in fifo channels this requires reading the whole
contents of a channel and writing them back, relying on the $\dol$
end-marker.
\begin{figure}[tbhp]
\centering
\scalebox{1.0}{
  \begin{tikzpicture}[->,>=stealth',shorten >=1pt,node distance=4cm,thick,auto,bend angle=30]
  \tikzstyle{every state}=[fill=white,draw=black,text=black]
\path
(0,0)   node [shape=circle,fill=black!12,draw=black!80] (p1) {$p$}
(2.3,0) node [shape=circle,fill=black!12,draw=black!80] (p3) {}
(3.5,0) node [shape=circle,fill=black!12,draw=black!80] (p4) {}
(5.0,0) node [shape=circle,fill=black!12,draw=black!80] (p5) {$q$}
(6.3,0) node [shape=circle,fill=black!12,draw=black!80] (p6) {}
(7.8,0) node [shape=circle,fill=black!12,draw=black!80] (p7)  {$r$}
;

\path (p1) edge node {$\tto\,\qb x\in C_d$} (p3);
\path (p3) edge node {$\tto \, ? \, d$} (p4);
\path (p4) edge node {$\tto \, \qb \, \dol $} (p5) ;
\path (p5) edge [loop above] node {$\ttc \, \qb \, 0 $} (p5) ;
\path (p5) edge node {$\ttc \, ! \, 0$} (p6);
\path (p6) edge node {$\ttc \, \qb \, \dol$} (p7);

  \end{tikzpicture}
}%
\caption{$S_{d,+1}$, a PCS for Hardy steps $(\alpha+1,n)\step{H}(\alpha,n+1)$.}
\label{fig-pcs-succ}
\end{figure}
\begin{remark}[Notational/graphical conventions]\label{rem-shorthand-rules}
The label edge \mbox{``$q\!\step{\ttc\qb 0}\!q$''} in
\autoref{fig-pcs-succ}, with $\ttc\qb 0$ as label, is shorthand notation for
\mbox{``$q\!\step{\ttc?0}\!\circ\!\step{\ttc!0}\!q$''}, letting 
the intermediary state remain implicit.
We also use meta-rules like ``$p\!\step{\tto\,\qb\,
x\in C_d\,}\!\circ$'' above to denote a subsystem tasked with reading and
writing back a string $x$ over $\tto$ while checking that it belongs
to $C_d$; since $C_d$ is a regular language, such subsystems are
directly obtained from DFAs for $C_d$.
\qed
\end{remark}

We first analyze the behavior of $S_{d,+1}$ when superseding of
low-priority messages does not occur, \textit{i.e.}, we first consider its
``reliable'' semantics. In this case, starting $S_{d,+1}$ in state $p$
performs the step given in \eqref{eq-hstep-succ} for successor
ordinals. More precisely, $S_{d,+1}$ guarantees
\begin{equation}
\label{eq-s1-reliable}
(p,\: s(\alpha+1)\dol,\: 0^n\dol,\: \dol)\rstep{*} (r,u,v,w)
\text{ iff }%
 u=s(\alpha)\dol \;\land\; v=0^{n+1}\dol \;\land\; w=\dol
\:.
\end{equation}
Note that \eqref{eq-s1-reliable} refers to ``$\rstep{*}$'', with no superseding. %

Observe that $S_{d,+1}$ has to non-deterministically guess where the end of
$s(\alpha)$ occurs before reading $d\dol$ in channel $\tto$, and will deadlock if it guesses
incorrectly. 
We often rely on this kind of non-deterministic
programming to reduce the size of the PCSs we build. Finally, we
observe that if $x$ does not end with $dd$ (and is not just $d$), \textit{i.e.},
if $\eta(x)$ is not a successor ordinal, then $S_{d,+1}$ will certainly deadlock.

\begin{figure}[tbhp]
\centering
\scalebox{1.0}{
  \begin{tikzpicture}[->,>=stealth',shorten >=1pt,node distance=4cm,thick,auto,bend angle=30]
  \tikzstyle{every state}=[fill=white,draw=black,text=black]
\path
(0,0)	node [shape=circle,fill=black!12,draw=black!80] (p1) {$p$}
(1.4,0) node [shape=circle,fill=black!12,draw=black!80] (p2) {}
(3.0,0) node [shape=circle,fill=black!12,draw=black!80] (p3) {}
(5.0,0) node [shape=circle,fill=black!12,draw=black!80] (p4) {}
(6.4,0) node [shape=circle,fill=black!12,draw=black!80] (p5) {}
(8.0,0) node [shape=circle,fill=black!12,draw=black!80] (p6) {$r$}
;

\path (p1) edge node {$\ttc \,?\, 0$} (p2);
\path (p2) edge node {$\ttc \,\qb\, 0^n\dol$} (p3);
\path (p3) edge node {$\tto \,\qb\, x \in C_d$} (p4);
\path (p4) edge node {$\tto \,!\, d$} (p5);
\path (p5) edge node {$\tto \,\qb\, \dol$} (p6);

  \end{tikzpicture}
}%
\caption{$S^{-1}_{d,+1}$, a PCS for inverse Hardy steps $(\alpha,n+1)\step{H^{-1}}(\alpha+1,n)$.}
\label{fig-pcs-invsucc}
\end{figure}
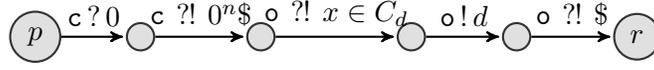
We  now  consider $S^{-1}_{d,+1}$, the PCS depicted in \autoref{fig-pcs-invsucc}
that implements the inverse canonical steps
$(\alpha,n+1)\step{H^{-1}}(\alpha+1,n)$. Implementing such steps on
codes is an easy string-rewriting task since $s(\alpha+1)=s(\alpha)d$,
however our PCS must again read the whole contents of its channels,
write them back with only minor modifications while fulfilling the
safety requirement of \autoref{def-hardy-computers}.
When considering the reliable behavior, $S^{-1}_{d,+1}$ guarantees
\begin{equation}
\label{eq-s2-reliable}
(p,\: x\dol,\: y\dol,\: \dol)\rstep{*}(r,u,v,w)
\text{ iff }
x\in C_d, \exists n:y=0^{n+1},
u=s(\eta(x)+1)\dol, v=0^n\dol, \text{ and } w=\dol
\:.
\end{equation}

Consider now the behavior of $S_{d,+1}$ and $S^{-1}_{d,+1}$ \emph{when superseding may occur}.
Note that a run $(p,x\dol,y\dol,z\dol)\wstep{*}(r,\ldots)$ from $p$ to
$r$ is a \emph{single-pass} run: it reads the whole contents of
channels $\tto$ and $\ttc$ once, and writes some new contents. This
feature assumes that we start with a single $\dol$ at the end of each
channel, as expected by $S_{d,+1}$. For such single-pass runs, the PCS
behavior with superseding semantics can be derived from the reliable
behavior: for single-pass runs, $C\wstep{*}D$ if and only if $C\rstep{*}D'\hgeq
D$ for some $D'$.

Combined with \eqref{eq-s1-reliable}, the above remark entails
robustness for $S_{d,+1}$: there is an execution
$(p,s(\alpha+1)\dol,0^n\dol,\dol)\wstep{*}
(r,s(\beta)\dol,0^{n'}\dol,\dol)$ if and only if $s(\beta)\hleq
s(\alpha)$ and $0^{n'}\dol\hleq 0^{n+1}\dol$, \textit{i.e.}, $n'\leq
n+1$. With \autoref{prop-code-robust}, we deduce $H^\beta(n')\leq
H^\alpha(n)$.

The same reasoning applies to $S^{-1}_{d,+1}$ since this PCS also performs
single-pass runs from $p$ to $r$, hence
$(p,s(\alpha)\dol,0^{n}\dol,\dol)\wstep{*}
(r,s(\beta)\dol,0^{n'}\dol,\dol)$ if and only if $s(\beta)\hleq s(\alpha+1)$ and
$n'\leq n-1$. Thus $H^\beta(n')\leq H^\alpha(n)$.

\subsubsection{Limit Steps}
Our next component is $S_{d,\lambda}$, see \autoref{fig-pcs-lim}, which
implements the canonical Hardy steps for limits
from \eqref{eq-hstep-limit}.
The construction follows \eqref{eq-fund-on-codes}: $S_{d,\lambda,a}$ reads
(and writes back) the contents of channel $\tto$; guessing
non-deterministically the decomposition $y_d\ldots y_{a+1}y_a a\std{(a+1)}$
of $s(\lambda)$, it writes back $y_d\ldots y_{a+1}$ and copies $y_a$
on the temporary $\ttt$ with $a+1$ appended. Then, a loop around
state $q_a$ copies $0^n$ from and back to $\ttc$. Every time one $0$
is transferred, the whole contents of $\ttt$, initialized with $y_a(a+1)$, is copied to
$\tto$. When the loop has been visited $n$ times, $S_{d,\lambda,a}$ empties
$\ttt$ and resumes
the transfer of $s(\lambda)$ by copying the final $\std{(a+2)}$.

For clarity, $S_{d,\lambda,a}$ as given in \autoref{fig-pcs-lim} assumes that
$a$ is fixed. The actual $S_{d,\lambda}$ component %
guesses non-deterministically
what is the value of $a$ for the $s(\lambda)$  code on
$\tto$ and gives the control to $S_{d,\lambda,a}$ accordingly.
\begin{figure}[tbhp]
\centering
\scalebox{1.0}{
  \begin{tikzpicture}[->,>=stealth',shorten >=1pt,node distance=4cm,thick,auto,bend angle=30]
  \tikzstyle{every state}=[fill=white,draw=black,text=black]
\path
(0,0)   node [shape=circle,fill=black!12,draw=black!80] (p1) {$p_a$}
(0,-1) node [shape=circle,fill=black!12,draw=black!80] (p2) {}
(0,-2) node [shape=circle,fill=black!12,draw=black!80] (p3) {}
(0,-3) node [shape=circle,fill=black!12,draw=black!80] (p4) {}
(0.7,-3.8) node [shape=circle,fill=black!12,draw=black!80] (p5) {$q_a$}
(0.7,-5) node [shape=circle,fill=black!12,draw=black!80] (p5l) {}
(1.4,-3) node [shape=circle,fill=black!12,draw=black!80] (p6) {}
(1.4,-2) node [shape=circle,fill=black!12,draw=black!80] (p7) {}
(1.4,-1) node [shape=circle,fill=black!12,draw=black!80] (p8) {}
(1.4,-0) node [shape=circle,fill=black!12,draw=black!80] (p9) {$r_a$}
;

\path (p1) edge node [swap] {$\tto \,\qb\, y_d\cdots y_{a+1}\in C_d\cdots C_{a+1}$} (p2);
\path (p2) edge node [swap] {$\tto \,?\, y_a\in C_a \:\semicolon\: \ttt \,!\, y_a$} (p3);
\path (p3) edge node [swap] {$\tto \,?\, a(a+1) \:\semicolon\: \ttt \,!\, (a+1)$} (p4);
\path (p4) edge node [swap,pos=0.55] {$\ttt \,\qb\, \dol$} (p5);
\path (p5) edge node [swap] {$\ttc \,\qb\, \dol$} (p6);
\path (p6) edge node [swap] {$\ttt \,?\, u$} (p7);
\path (p7) edge node [swap] {$\ttt \,\qb\, \dol$} (p8);
\path (p8) edge node [swap] {$\tto \,\qb\, \std{(a+2)}\dol$} (p9);
\path (p5) edge [bend left] node [pos=0.40] {$\ttc \,\qb\, 0$} (p5l);
\path (p5l) edge [bend left] node [pos=0.572] {$\ttt \,\qb\, u\dol\:\semicolon\:\tto \,!\, u$} (p5);

  \end{tikzpicture}
}%
\caption{$S_{d,\lambda,a}$, a PCS for Hardy steps $(\lambda,n)\step{H}(\lambda_n,n)$.}
\label{fig-pcs-lim}
\end{figure}

As far as reliable steps are considered, $S_{d,\lambda}$ guarantees
\begin{equation}
\label{eq-s3-reliable}
(p,\: s(\alpha)\dol,\: 0^n\dol,\: \dol)\rstep{*} (r,u,v,w)
\text{ iff }
\alpha \in\LIM, u=s(\alpha_n)\dol, v=0^{n}\dol, \text{ and } w=\dol
\:.
\end{equation}
If superseding is allowed, a run
$(p_a,s(\alpha)\dol,0^n\dol,\dol)\wstep{*}(r_a,u,v,w)$
has the form
\begin{align*}
(p_a,s(\alpha)\dol,0^n\dol,\dol)
&\wstep{*}
C_0=(q_a,\std{(a+2)}\dol x_0,0^n\dol,z_0\dol)
\\
&\wstep{*}
C_1=(q_a,\std{(a+2)}\dol x_1,0^{n-1}\dol v_1,z_1\dol)\\
&\quad\vdots
\\
&\wstep{*}
C_n=(q_a,\std{(a+2)}\dol x_n,\dol v_n,z_n\dol)
\\&\wstep{*}
(r_a,x'_n\dol,v'_n\dol,\dol)
\end{align*}
where $C_i=(q_a,\std{(a+2)}\dol x_i,0^{n-i}\dol v_i,z_i\dol)$ occurs
when state $q_a$ is visited for the $i$th time. Since the run is
single-pass on $\ttc$, we know that $v_i\hleq 0^i$ for all
$i=0,\ldots,n$. Since it is single-pass on $\tto$, we deduce that
$x_0\hleq y_d\ldots y_{a+1}$, then $x_{i+1}\hleq x_i z_i$ for all $i$,
and finally $x'_n\hleq x_n\std{(a+2)}$, with also $z_0\hleq y_a(a+1)$.
Finally, $z_{i+1}\hleq z_i$ since each subrun $C_i\wstep{*}C_{i+1}$ is
single-pass on $\ttt$.

All this yields $x'_n\hleq s(\lambda_n)$ and
$v'_n\hleq 0^n$. Hence $S_{d,\lambda}$ is safe and robust: there is an
execution $(p,s(\alpha)\dol,0^{n}\dol,\dol)\wstep{*}
(r,s(\beta),0^{n'}\dol,\dol)$ if and only if $\alpha\in\LIM$, $s(\beta)\hleq
s(\alpha_n)$ and $n'\leq n$, entailing $H^\beta(n')\leq H^\alpha(n)$.
\\

There remains to consider $S^{-1}_{d,\lambda}$, the PCS component that
implements inverse Hardy steps for limits, see
\autoref{fig-pcs-invlim}. For given $a<d$, $S^{-1}_{d,\lambda,a}$
assumes that channel $\tto$ contains $s(\lambda_n)=y_d\ldots
y_{a+1}[y_a(a+1)]^n\std{(a+2)}$, guesses the position of the first
$y_a(a+1)$ factor, and checks that it indeed occurs $n$ times if
$\ttc$ contains $0^n$. This check uses copies $z_1, z_2, \ldots$ of
$y_a(a+1)$ temporarily stored on $\ttt$. Then $S^{-1}_{d,\lambda}$
writes back $s(\lambda)=y_d\ldots y_{a+1}z\std{a}$ on $\tto$, where
$z(a+1)=z_n$. The reader should be easily convinced that, as far as
one considers reliable steps, $S^{-1}_{d,\lambda}$ guarantees
\begin{equation}
\label{eq-s4-reliable}
(p,\: s(\alpha)\dol,\: 0^n\dol,\: \dol)\rstep{*} (r,u,v,w) \text{ iff }
\exists\lambda\in\LIM : \alpha=\lambda_n,
u=s(\lambda)\dol, v=0^{n}\dol, \text{ and } w=\dol
\:.
\end{equation}
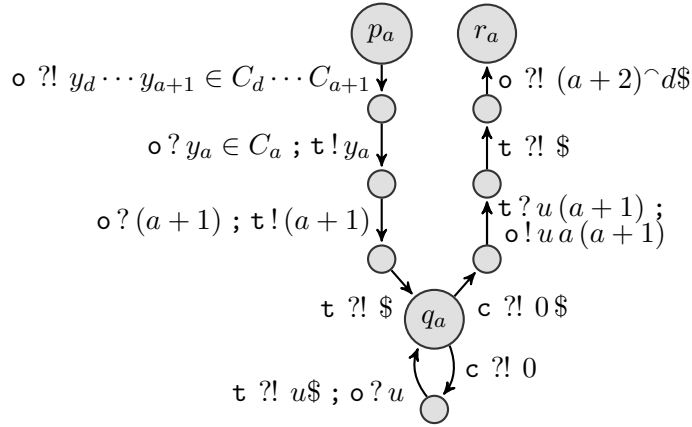
\begin{figure}[tbhp]
\centering
\scalebox{1.0}{
  \begin{tikzpicture}[->,>=stealth',shorten >=1pt,node distance=4cm,thick,auto,bend angle=30]
  \tikzstyle{every state}=[fill=white,draw=black,text=black]
\path
(0,0)   node [shape=circle,fill=black!12,draw=black!80] (p1) {$p_a$}
(0,-1) node [shape=circle,fill=black!12,draw=black!80] (p2) {}
(0,-2) node [shape=circle,fill=black!12,draw=black!80] (p3) {}
(0,-3) node [shape=circle,fill=black!12,draw=black!80] (p4) {}
(0.7,-3.8) node [shape=circle,fill=black!12,draw=black!80] (p5) {$q_a$}
(0.7,-5) node [shape=circle,fill=black!12,draw=black!80] (p5l) {}
(1.4,-3) node [shape=circle,fill=black!12,draw=black!80] (p6) {}
(1.4,-2) node [shape=circle,fill=black!12,draw=black!80] (p7) {}
(1.4,-1) node [shape=circle,fill=black!12,draw=black!80] (p8) {}
(1.4,-0) node [shape=circle,fill=black!12,draw=black!80] (p9) {$r_a$}
;

\path (p1) edge node [swap] {$\tto \,\qb\, y_d\cdots y_{a+1}\in C_d\cdots C_{a+1}$} (p2);
\path (p2) edge node [swap] {$\tto \,?\, y_a\in C_a \:\semicolon\: \ttt \,!\, y_a$} (p3);
\path (p3) edge node [swap] {$\tto \,?\, (a+1) \:\semicolon\: \ttt \,!\, (a+1)$} (p4);
\path (p4) edge node [swap,pos=0.55] {$\ttt \,\qb\, \dol$} (p5);
\path (p5) edge node [swap] {$\ttc \,\qb\, 0\,\dol$} (p6);
\path (p6) edge node [swap,pos=0.78] {$\ttt \,?\, u\,(a+1) \:\semicolon$} (p7);
\path (p6) edge node [swap,pos=0.22] {$\: \tto\,!\,u\,a\,(a+1)$} (p7);
\path (p7) edge node [swap] {$\ttt \,\qb\, \dol$} (p8);
\path (p8) edge node [swap] {$\tto \,\qb\, \std{(a+2)}\dol$} (p9);
\path (p5) edge [bend left] node [pos=0.40] {$\ttc \,\qb\, 0$} (p5l);
\path (p5l) edge [bend left] node [pos=0.572] {$\ttt \,\qb\, u\dol\:\semicolon\:\tto \,?\, u$} (p5);

  \end{tikzpicture}
}%
\caption{$S^{-1}_{d,\lambda,a}$, a PCS for inverse Hardy steps $(\lambda_n,n)\step{H^{-1}}(\lambda,n)$.}
\label{fig-pcs-invlim}
\end{figure}

When superseding is taken into account, a run from $p$ to $r$ in $S^{-1}_{d,\lambda}$
has the form $(p,s(\alpha)\dol,0^n\dol,\dol) \wstep{*} C_1 \wstep{*}
C_2 \wstep{*} \cdots C_n\wstep{*}(r,u,v,w)$ where, for $i=1,\ldots,n$,
$C_i$ is the $i$th configuration that visits state $q_a$.
Necessarily, $C_i$ is some $(q_a,x_i\dol x, 0^{n-i}0\dol v_i,
z_i\dol)$. The first visit to $q_a$ has $x\hleq y_d\ldots y_{a+1}$,
$z_1\hleq y_a(a+1)$ and $v_1=\epsilon$, the following ones ensure $x_i
= z_i x_{i+1}$, $z_{i+1}\hleq z_i$ and $v_{i+1}\hleq v_i 0$.
Concluding the run requires $x_n=\std{(a+2)}$.
Finally $v\hleq 0^n\dol$, $s(\beta)=y_d\ldots y_{a+1} (a+1)
z_1 \ldots z_{n-1}\std{(a+2)}$ and $u\hleq y_d\ldots y_{a+1}z \std{a}$
for $z(a+1)=z_n\hleq z_{n-1}\hleq\cdots z_2\hleq z_1\hleq y_a(a+1)$.
Thus $u=s(\beta)\dol$ and $v=0^{n'}\dol$ imply $s(\beta)\hleq
s(\lambda)$ for some $\lambda$ with $s(\lambda_n)\hleq s(\alpha)$,
yielding $H^\beta(n')\leq H^\lambda(n)=H^{\lambda_n}(n)\leq H^\alpha(n)$.

\subsection{Wrapping It Up}

\label{ssec-enc-wrapup}
With the above weak Hardy computers, we have the essential gadgets
required for our reductions.  The
wrapping-up %
is exactly as in~\cite{HSS-lics2012,phs-mfcs2010} (with a different
encoding and a different machine model) and will only be sketched.

\begin{theorem}[Verifying PCSs is Hard]\label{theo-PCS-e0hard}
  Reachability and Termination of PCSs are $\F_{\ez}$-hard.
\end{theorem}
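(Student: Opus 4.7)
The plan is to reduce, by an elementary (hence in $\bigcup_{\beta<\ez}\FGH{\beta}$) many-one reduction, from acceptance of a Minsky counter machine whose computation is bounded by $F_{\ez}(p(n))$ on inputs of length $n$, which is $\F_{\ez}$-hard by the very definition of the class $\F_{\ez}$ given in \autoref{ssec-sub-hier}. Following the template of~\cite{HSS-lics2012,phs-mfcs2010}, it suffices to build a PCS whose reachability (resp.\ termination) behavior mirrors acceptance of such a budgeted machine.

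The construction sandwiches a Minsky simulator between two Hardy phases. Given an input of length $n$, the PCS starts in a configuration that places $s(\ez)\dol$ on channel $\tto$, $0^{p(n)}\dol$ on $\ttc$, and $\dol$ on $\ttt$. First, invoke the complete weak Hardy computer $S_d$ from \autoref{thm-hardy-computers}: in its non-superseding behavior this produces a configuration with $\dol$ on $\tto$ and $0^{H^{\ez}(p(n))}\dol$ on $\ttc$, i.e.\ a budget counter of $F_{\ez}$ magnitude. Second, simulate the Minsky machine using this budget: counter increments/decrements are implemented by writing/reading $0$-messages on dedicated channels separated by higher-priority markers, and zero-tests exploit the $\dol$ end-marker, yielding the usual budget-bounded counter-machine gadget. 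Third, after the simulated machine halts in an accepting state, invoke the inv-complete weak Hardy computer $S'_d$ on the residual channel contents to reconstruct $s(\ez)\dol$ on $\tto$ and $0^{p(n)}\dol$ on $\ttc$, and only then transition to a designated target state $p_{\mathrm{acc}}$.

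Correctness of the reduction rests on the safety and robustness conditions of \autoref{thm-hardy-computers}. If the Minsky machine accepts within its budget, completeness of $S_d$ and inv-completeness of $S'_d$, together with a superseding-free simulation, yield a $\wstep{*}$-run from the initial configuration to $p_{\mathrm{acc}}$. Conversely, any $\wstep{*}$-run reaching $p_{\mathrm{acc}}$ must, by the robustness clauses, pass through a forward Hardy phase producing at most $H^{\ez}(p(n))$ zeros on $\ttc$ and an inverse Hardy phase that requires at least $H^{\ez}(p(n))$ zeros to reconstitute $s(\ez)$; hence no messages can have been superseded in either Hardy phase or in the intervening simulation, so the simulated Minsky run is faithful and acceptance of the original machine follows. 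For Termination, one wires the standard loop-on-failure gadget so that all maximal runs terminate iff the Minsky machine admits no accepting computation. The main obstacle to carry out is the middle simulation: counter updates and zero-tests must be implemented so that any superseding of the low-priority $0$-messages representing counter values is necessarily detected by the subsequent inverse Hardy phase, which is exactly what the sandwich construction guarantees via Proposition~\ref{prop-code-robust}.
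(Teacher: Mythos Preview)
Your high-level template (forward Hardy phase, budgeted simulation, inverse Hardy phase, sandwich inequality) matches the paper. But there is a genuine gap in how you instantiate it.

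You propose to start with $s(\ez)\dol$ on channel~$\tto$ and to invoke the weak Hardy computer $S_d$. This cannot work: the encoding $s=s_d$ of \autoref{ssec-enc-ordinals} is only defined for ordinals in $\Omega_{d+1}$, and the weak Hardy computers of \autoref{thm-hardy-computers} are for $\Omega_{d+1}$, not for~$\ez$. There is no $d$ for which $\ez\in\Omega_{d+1}$, so $s(\ez)$ does not exist, and for any \emph{fixed} priority depth $d$ the resulting PCS can only produce budgets of order $H^{\Omega_{d+1}}$, well below $F_{\ez}$. The crucial point you are missing is that the priority level~$d$ of the PCS must \emph{depend on the input}. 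The paper's reduction, given a machine $M$ of size $n$, sets $d\eqdef n+1$ and writes $s(\Omega_d)=\std{0}$ on~$\tto$ and $0^n$ on~$\ttc$; this exploits the fundamental sequence $(\ez)_n=\Omega_n$ so that $H^{\Omega_{n+1}}(n)$ realises the required $F_{\ez}$-sized budget. The reduction is then a \textsc{LogSpace} map from $M$ to a $(d{+}1)$-PCS (with $d$ growing in $|M|$), and the final check verifies that $\tto$ contains $\std{0}$ and $\ttc$ contains $0^n$ again.

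A secondary point: the paper simulates a \emph{space-bounded Turing machine} directly on channel~$\ttc$ (cycling through its contents, with the $0$'s playing the role of blanks), rather than a Minsky machine. This sidesteps your zero-test gadget, whose correctness under superseding you leave unargued; in a fifo channel with superseding, a counter encoded as $0^k\dol$ can silently shrink, making a ``read $\dol$'' zero-test unsound unless one carefully tracks the total $0$-count across all counters via the budget. The TM-on-a-tape simulation avoids this complication entirely, since each simulation step reads one symbol and writes one symbol, so any loss is directly visible as a shorter tape and caught by the sandwich inequality $H^{\Omega_d}(n)\geq B\geq B'\geq H^\alpha(n')=H^{\Omega_d}(n)$.
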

\begin{proof}
We exhibit a \textsc{LogSpace} reduction from the halting problem of a
Turing machine $M$ working in $F_{\ez}$ space to the Reachability
problem in a PCS.  We assume wlog.\ $M$ to start in a state $p_0$ with
an empty tape and to have a single halting state $p_h$ that can only
be reached after clearing the tape.

\begin{figure}[tbp]
  \centering
  \begin{tikzpicture}
    [->,>=stealth',shorten >=1pt,node distance=.7cm,thick,auto,bend
      angle=30]
    \draw[rounded corners=4pt,dashed,fill=black!4,draw=black!20]
      (2.2,.8) rectangle (5.8,-.8);
    \node at (4,0)[font=\footnotesize,align=center,text width=1.6cm]
          {simulate $M$ with budget $B$};
    \path[every node/.style={shape=circle,fill=black!12,draw=black!80}]
    (-1,0) node (q0){$q_0$}
    (1,0) node[rounded corners=4pt,dashed,fill=black!4,draw=black!20,rectangle] (fw){$S_d$}
    (2.7,0) node (p0){$p_0$}
    (5.3,0) node (ph){$p_h$}
    (7,0) node[rounded corners=4pt,dashed,fill=black!4,draw=black!20,rectangle] (bw){$S^{-1}_d$}
    (9,0) node (qh){$q_h$};
    \path (q0) edge node[font=\footnotesize]{\begin{minipage}{4em}%
        \begin{tabular}{l}
          $\tto\ !\ \std{0}\dol$\\
          $\ttc\ !\ 0^n\dol$\\
          $\ttt\ !\ \dol$
        \end{tabular}
      \end{minipage}} (fw)
    (fw) edge (p0)
    (ph) edge (bw)
    (bw) edge node[font=\footnotesize]{\begin{minipage}{4em}%
        \begin{tabular}{l}
          $\tto\ ?\ \std{0}\dol$\\
          $\ttc\ ?\ 0^n\dol$\\
          $\ttt\ ?\ \dol$
        \end{tabular}
      \end{minipage}} (qh);
    \tikzset{every node}=[font=\footnotesize,text depth=.2ex,text height=2ex];
    \node at (1.5,-1.3){$\Omega_{d},n\hstep{H}\cdots\hstep{H}0,B$};
    \node at (6.5,-1.3){$0,B'\hstep{H^{\text{-}1}}\!\cdots\hstep{H^{\text{-}1}}\alpha,n'$};
  \end{tikzpicture}
  \caption{\label{fig-wrap}Schematics for \autoref{theo-PCS-e0hard}.}
\end{figure}
\autoref{fig-wrap} depicts schematically the PCS $S$ we construct for the
reduction.  Let $n\eqdef|M|$ and $d\eqdef n+1$.  A run in $S$ from the
initial configuration to the final one goes through three
stages:
\begin{enumerate}
\item The first stage robustly computes
  $F_{\ez}(|M|)=H^{\Omega_{d}}(n)$ by first writing
  $s(\Omega_{d})\dol$, \emph{i.e.} $\std{0}\dol$, on $\tto$, $0^n\dol$ on $\ttc$, and
  $\dol$ on $\ttt$, then by using $S_d$ to
  perform forward Hardy steps; thus upon reaching state $p_0$, $\tto$
  and $\ttt$ contain $\dol$ and $\ttc$ encodes a budget $B\leq
  F_{\ez}(|M|)$.
\item\label{stage-2} The central component simulates
  $M$ over $\ttc$ where the symbols $0$ act as blanks---this is
  easily done by cycling through the channel contents to simulate the
  moves of the head of $M$ on its tape.  Due to superseding steps, the
  outcome %
  upon reaching $p_h$ is that $\ttc$ contains
  $B'\leq B$ symbols $0$.
\item The last stage robustly computes $(F_{\ez})^{-1}(B')$ by running
  $S^{-1}_d$ to perform backward Hardy steps.  This leads to
  $\tto$ containing the encoding of some ordinal $\alpha$ and $\ttc$
  of some $n'$, but we empty these channels and check that
  $\alpha=\Omega_d$ and $n'=n$ before entering state $q_h$.
\end{enumerate}
Because
  $H^{\Omega_d}(n)\geq B\geq B'\geq H^\alpha(n')=H^{\Omega_d}(n)$, %
all the inequalities are actually equalities, and the simulation of
$M$ in stage~\ref{stage-2} has necessarily employed reliable steps.
Hence, $M$ halts if and only if $(q_h,\epsilon,\epsilon,\epsilon)$
is reachable from $(q_0,\epsilon,\epsilon,\epsilon)$ in $S$.

The case of (non-)Termination is similar, but employs a \emph{time}
budget in a separate channel in addition to the space budget, in order
to make sure that the simulation of $M$ terminates in all cases, and
leads to a state $q_h$ that is the only one from which an infinite run
can start in $S$.
\end{proof}

\section{Alternative Semantics}\label{sec-related}
In this section we consider variant models for channel systems with
priorities or losses and compare them with our PLCS model. The aim is
to better understand the consequences, or lack thereof, of our
choices.

We first consider \emph{strict-superseding} systems, where messages
may only supersede messages of strictly lower priority, and
\emph{overtaking} systems, where higher priority messages may move
ahead of lower priority messages instead of erasing them. For
completeness, we also discuss systems based on \emph{priority queues},
where overtaking of lower priority messages is mandatory.

\subsection{Strict-Superseding and Overtaking}\label{ssec-stricter-semantics}
In this section, we discuss two alternative operational semantics for
PCSs that may seem more natural than our standard $\TS_\hh$.

\begin{description}
\item[Strict-Superseding Semantics]
Here, a high-priority message may only supersede messages
having \emph{strictly} lower priority. Formally, we replace the
internal-superseding relation $C\hstep{\hh k}C'$ with a new
superseding relation, denoted $C\Hstep{\hh k}C'$, and based on
\[
x\Hstep{\hh k}y
\;\equivdef\;
x=a_1\cdots a_\ell
\,\land\, y=a_1\cdots a_{k-1}\cdot a_{k+1}\cdots a_\ell
\,\land\, a_k<a_{k+1}
\;.
\]
Equivalently, we replace the rewrite rules from
Eq.~\eqref{eq-rewr-4-hstep} with $\bigl\{ a\,a'\to a' ~\big|~ 0\leq a < a'\leq d \bigr\}$.\medskip

\item[Overtaking semantics]
Here, a high-priority message may move ahead of a low-priority message
but this does not erase the low-priority message.
 Formally, we replace  $C\hstep{\hh k}C'$ with  $C\Ostep{\hh k}C'$,  based on
\[
x\Ostep{\hh k} y
\;\equivdef\;
x=a_1\cdots a_k\,a_{k+1}\cdots a_\ell
\,\land\, y=a_1\cdots a_{k+1}\,a_k\cdots a_\ell
\,\land\, a_k<a_{k+1}
\;.
\]
In rewriting terms, $\Ostep{}$ is defined by the rules $\bigl\{ a a'\to
a'\,a ~ \big| ~ 0\leq a< a'\leq d \bigr\}$.
\end{description}

\noindent These two mechanisms drop fewer messages than our internal-superseding
semantics. They may however be inadequate in case of network
congestion, for instance they offer no solutions if all the messages
in the congested buffers have the same priority. In any case, we show
below that verification is undecidable for these two variant semantics
(one can simulate Turing-powerful reliable channel systems with PCS
under strict-superseding or overtaking semantics), which explains our
choice of internal-superseding semantics.

\begin{theorem}\label{thm-different-semantics}
Reachability and Termination are undecidable for PCSs under both the
strict-superseding and overtaking semantics.
\end{theorem}
\begin{proof}
We reduce from the reachability problem for channel systems with
reliable channels which is known to be undecidable~\cite{brand83}.
A system $S$ with reliable channels
uses a finite (un-prioritized) alphabet
$\Sigma=\{a_0,\ldots,a_{p-1}\}$ and is equipped with $m$ channels
$\ttc_1, \ldots, \ttc_m$. We simulate $S$ with a PCS $S'$ with strict
superseding semantics having the same $m$ channels and using the
$\Sigma_d$ priority alphabet with $d=p$. We use $d\in\Sigma_d$ as a
separator, denoted $\dol$ for clarity, while the other priorities
$i\in \{0,\ldots,p-1\}$ represent the original messages $a_i$. A
string $w=a_{i_1}\cdots a_{i_n}\in \Sigma^*$ is encoded as
$\widetilde{w}\eqdef i_1\,\dol\cdots i_n\,\dol\in\Sigma_d^*$. The
actual reduction is obtained by equipping $S'$ with transition rules
that simulate the rules of $S$ as illustrated in \autoref{fig-cs2pcs}.
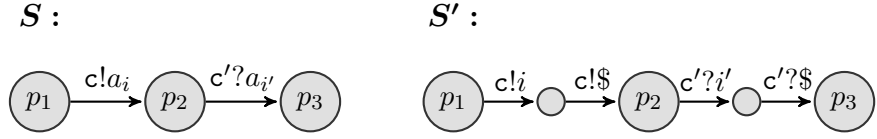
\begin{figure}[tbp]
\centering
\scalebox{1.0}{
  \begin{tikzpicture}[->,>=stealth',shorten >=1pt,thick,auto,bend angle=30]
\path
(0,3) node [shape=circle,fill=black!12,draw=black!80] (p1) {$p_1$}
(1.8,3) node [shape=circle,fill=black!12,draw=black!80] (p2) {$p_2$}
(3.6,3) node [shape=circle,fill=black!12,draw=black!80] (p3) {$p_3$} 
node [above of=p1,node distance=3em] {\large $\bm{S:}$}
;
\path (p1) edge node {$\ttc  ! a_i$} (p2) ;
\path (p2) edge node {$\ttc' ? a_{i'}$} (p3) ;

\path
(5.5,3) node [shape=circle,fill=black!12,draw=black!80] (p1) {$p_1$}
(6.8,3) node [shape=circle,fill=black!12,draw=black!80] (p11) {}
(8.1,3) node [shape=circle,fill=black!12,draw=black!80] (p2) {$p_2$}
(9.4,3) node [shape=circle,fill=black!12,draw=black!80] (p21) {}
(10.7,3) node [shape=circle,fill=black!12,draw=black!80] (p3) {$p_3$}
node [above of=p1,node distance=3em] {\large $\bm{S':}$}
;

\path (p1)  edge node {$\ttc  ! i$} (p11) ;
\path (p11) edge node {$\ttc  ! \dol$} (p2) ;
\path (p2)  edge node {$\ttc' ? i'$} (p21) ; 
\path (p21) edge node {$\ttc' ? \dol$} (p3) ;

  \end{tikzpicture}
}%
\caption{Simulating reliable channels (left) with ``strict-superseding'' or ``overtaking'' PCSs (right).}
\label{fig-cs2pcs}
\end{figure}
In essence, where $S$ would write $a_i$, $S'$ will write $i$ followed
by $\dol$, \textit{i.e.}, $\widetilde{a_i}$, and $S'$ will read $i'\cdot \dol$
where $S$ would read $a_{i'}$.

With the \emph{strict superseding} policy, the only superseding that
can occur is to have $\dol$ erase a preceding $i<\dol$.  This results
in a channel containing two or possibly more consecutive $\dol$
symbols, a pattern that will never vanish in this simulation and that
eventually forbids reading on the involved channel.  In particular,
any run of $S'$ that reaches a final configuration
$C_\final=(q_\final,\epsilon,\ldots,\epsilon)$ has not used any strict
superseding and thus corresponds to a run of the reliable channel
system $S$. The same reduction works for Termination.

With the overtaking semantics, only $\dol$ can overtake ``original''
messages of $S$ in $\widetilde{w}$. However, such an overtake results
in having two consecutive $\dol$ symbols on the channel, a pattern
that can never disappear. Behind the $\dol\dol$ block, two lower
messages $0\leq i,j<d$ may occur consecutively and be open to
overtaking but this will not derail the simulation since $S'$ cannot
read beyond $\dol\dol$.
\end{proof}

\subsection{Channels as Priority Queues}\label{ssec-priority-queues}

For the sake of completeness, let us mention channel systems where
channels behave as priority queues. Here, reading from a channel will
always read a message having highest priority among the contents of
the channel. This can be seen as an extreme version of the overtaking
semantics, where overtaking is mandatory.
Such a model is not relevant for our purposes since it is not meant to
handle congested communication links: instead, it uses priorities as a
way of choosing in which order messages should be processed.

Let us mention that finite-state communicating systems with priority
queues can easily simulate Minsky machines by using a queue for each
counter and two priorities: high-priority messages encode the counter
value in unary, while a low-priority message can only be read in case
of a zero-valued counter.  They are hence Turing-powerful.

The most relevant case however is that of a single communication bus
where many processes can read and write.  Because only one queue is
available---and still assuming a singleton alphabet for message
contents---it is easy to see that priority queues systems are
equivalent to Minsky counter machines restricted to \emph{nested
zero-tests}. Recall that for a machine with $m$ counters
$\ttc_1,\ldots,\ttc_m$, nested zero-tests are tests of the form
``$(\ttc_1=0 \land
\ttc_2=0\land\cdots\land\ttc_i=0)?$'', \textit{i.e.}, one can only
test the $i$th counter for emptiness when already the previous
counters are empty. While Minsky counter machines with arbitrary
zero-tests are Turing-powerful, they are equivalent to Petri nets when
zero-tests are forbidden. Minsky machines with nested zero-tests are
an intermediary model for which reachability is known to be decidable,
see~\cite{reinhardt2008} and \citep[Chapter~5]{bonnet-phd2013}.

\section{Higher-Order Lossy Channel Systems}\label{ssec-holcs}
{\renewcommand{\enc}[1]{\lfloor #1\rfloor}
In this section, we introduce higher-order lossy channel systems,
aka HOLCSs, a family of models that extend lossy channel
systems. While a higher-order pushdown
automaton has a stack of stacks of \ldots\  of stacks~\cite{aho69}, a
HOLCS has lossy channels inside lossy channels inside \ldots\  inside
lossy channels.
(In this setting, the ``dynamic'' lossy channel systems
from~\cite{AAC-fsttcs2012} are a special case of 2LCSs, or second-order LCSs.)
HOLCSs are well-structured,
see \autoref{thm-holcs-wsts}, hence enjoy the usual decidability
results from well-structured systems theory.

Our main result is that PCSs can simulate HOLCSs, see
\autoref{thm-pcs-simul-holcs}. On the one hand, this underlines the
expressive power and naturalness of PCSs, in particular since the
reductions we provide are quite straight-forward. On the other hand,
we immediately obtain undecidability results: problems like boundedness or
repeated control-state reachability are undecidable for PCSs since
they 
already are for (first-order) LCSs
(see~\cite{mayr03,phs-rp2010}). 

%

\subsection{Syntax}
Formally, and given $k\in\Nat$, a \emph{$k$th-order LCS} $S=(k,\Sigma,\Ch,Q,\Delta)$ has
first-order, second-order, \ldots, up to $k$th-order, channels. We
assume for simplicity that $S$ has, for all $n=1,\ldots,k$, the same
number $m$ of $n$th-order channels, denoted
$\ttc_{n,1},\ldots,\ttc_{n,m}$. Standardly, $S$ uses a finite
(un-prioritized) alphabet $\Sigma=\{a_1,\ldots,a_p\}$. Write
$\Sigma^{*1}$ for the set of finite sequences of messages (usually
written just $\Sigma^*$), and $\Sigma^{*(n+1)}$ for the set of finite
sequences of elements from $\Sigma^{*n}$. We further order each
$\Sigma^{*(n+1)}$ with $\leq_{*(n+1)}$, \textit{i.e.}, the sequence extension
of $(\Sigma^{*n},\leq_{*n})$, equating $(\Sigma^{*0},\leq_{*0})$ with
$(\Sigma,=)$. Precisely, given two sequences $x=x_1\ldots x_\ell$ and
$y=y_1\ldots y_m$ in $\Sigma^{*(n+1)}$, we let
\[
x_1\ldots x_\ell\leq_{*(n+1)}y_1\ldots y_m
\equivdef
\exists 1\leq i_1<i_2< \cdots < i_\ell\leq m:
x_1\leq_{*n} y_{i_1}\land \cdots\land x_\ell\leq_{*n} y_{i_\ell}
\]
Using Higman's Lemma and induction over $n$, one sees that
$(\Sigma^{*n},\leq_{*n})$ is a well-quasi-order for any $n$.

At any given time, the contents of a $n$th-order
channel is a sequence $w\in\Sigma^{*n}$, so that a configuration of
$S$ has the form
$C=(q,x_{1,1},\ldots,x_{1,m},\ldots,x_{k,1},\ldots,x_{k,m})$ with $q$
a control state and $x_{n,i}\in\Sigma^{*n}$ for all $1\leq n\leq k$
and $1\leq i\leq m$.
These configurations are ordered by
\[
(q,x_{1,1},\ldots,x_{k,m})
\holeq
(q',y_{1,1},\ldots,y_{k,m})
\equivdef
q=q'
\;\land\;
\forall n,i: x_{n,i}\leq_{*n}y_{n,i}
\:.
\]
This ordering of configurations is a wqo since, for each $n=1,\ldots,k$,
\mbox{$(\Sigma^{*n},\leq_{*n})$} is.

\subsection{Semantics}
A HOLCS $S$ as above has a \emph{reliable} transition
$C\step{\delta}C'$ between configurations
$C=(q,\ldots,x_{i,n}\ldots)$ and $C'=(q',\ldots,y_{i,n},\ldots)$ if
$\delta=(q,\op,q')$ is a rule moving control from $q$ to $q'$ and if the channel contents
are modified according to the operation carried by $\delta$, as we now define. There
are four cases:
\begin{description}
\item[$\bm{\op=\ttc_{1,i}! a}$\nocolon] for some $a\in \Sigma$ and $1\leq i\leq m$:
then $y_{1,i}=x_{1,i}\cdot a$ while $y_{j,n}=x_{j,n}$ when $n>1$ or
$j\not=i$, \textit{i.e.}, one writes a message to a 1st-order channel as in
standard channel systems;
\item[$\bm{\op=\ttc_{1,i}? a}$] then one reads a message from a 1st-order
channel, \textit{i.e.}, $x_{1,i}=a\cdot y_{1,i}$ while the other channels are
untouched;
\item[$\bm{\op=\hopush{n+1,i}{n,j}}$\nocolon] for some $1\leq n < k$: then one
appends a copy of the whole contents of $\ttc_{n,j}$ (a $n$th-order
channel) to $\ttc_{n+1,i}$, where it becomes the last element of the
higher-order sequence. Formally, $y_{n+1,i}= x_{n+1,i}\cdot x_{n,j}$ and
the other channels are untouched;
\item[$\bm{\op=\hopop{n,i}{n+1,j}}$] then one moves the first element of the
higher-order sequence currently in $\ttc_{n+1,j}$ to $\ttc_{n,i}$
where it becomes the whole contents (the previous contents is erased).
Formally, if $u\in\Sigma^{*n}$ is the first element of $x_{n+1,j}$,
then $y_{n,i}=u$, $u\cdot y_{n+1,j}=x_{n+1,j}$, and the other channels
are untouched.
\end{description}
In addition, all steps $C\step{}C'$ for $C'\holt C$,
called \emph{losing steps}, are allowed. This states that at any time
the system may lose individual messages, sequences of messages,
sequences of sequences of \ldots\ of messages, anywhere inside the
channels.

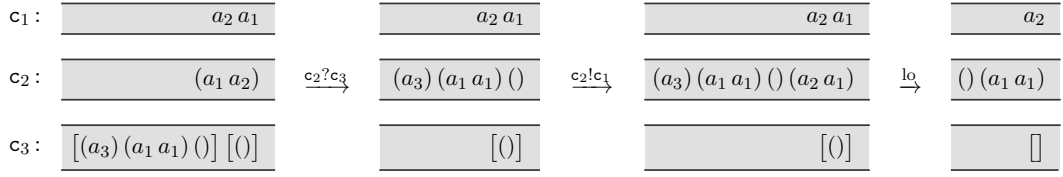
\begin{figure}[tbp]
\centering
\scalebox{.8}{\begin{tikzpicture}[->,>=stealth',shorten >=1pt,node
      distance=1cm,thick,auto,bend angle=30]
\node[fill=black!12,text width=8.5em,align=flush right]
    (c0){$a_2\,a_1\;$};
\node[below=1.1em of c0,fill=black!12,text width=8.5em,align=flush right]
    (c1){$(a_1\,a_2)\;$};
\node[below=1.1em of c1,fill=black!12,text width=8.5em,align=flush right]
    (c2){$\bigl[(a_3)\,(a_1\,a_1)\,()\bigr]\, \bigl[()\bigr]\;$};
\node[right=4.5em of c0,fill=black!12,text width=6.2em,align=flush right]
    (d0){$a_2\,a_1\;$};
\node[below=1.1em of d0,fill=black!12,text width=6.2em,align=flush right]
    (d1){$(a_3)\,(a_1\,a_1)\,()\;$};
\node[below=1.1em of d1,fill=black!12,text width=6.2em,align=flush right]
    (d2){$\bigl[()\bigr]\;$};
\node[right=4.5em of d0,fill=black!12,text width=9em,align=flush right]
    (e0){$a_2\,a_1\;$};
\node[below=1.1em of e0,fill=black!12,text width=9em,align=flush right]
    (e1){$(a_3)\,(a_1\,a_1)\,()\,(a_2\,a_1)\;$};
\node[below=1.1em of e1,fill=black!12,text width=9em,align=flush right]
    (e2){$\bigl[()\bigr]\;$};
\node[right=3.5em of e0,fill=black!12,text width=4.0em,align=flush right]
    (f0){$a_2\;$};
\node[below=1.1em of f0,fill=black!12,text width=4.0em,align=flush right]
    (f1){$()\,(a_1\,a_1)\;$};
\node[below=1.1em of f1,fill=black!12,text width=4.0em,align=flush right]
    (f2){$\bigl[\bigr]\;$};
\node[left=.5em of c0]{$\ttc_1\,$:};
\node[left=.5em of c1]{$\ttc_2\,$:};
\node[left=.5em of c2]{$\ttc_3\,$:};
\draw[-,thick,color=black!80]
  (c0.north west) -- (c0.north east)	  (c0.south west) -- (c0.south east)
  (c1.north west) -- (c1.north east)	  (c1.south west) -- (c1.south east)
  (c2.north west) -- (c2.north east)	  (c2.south west) -- (c2.south east)
  (d0.north west) -- (d0.north east)	  (d0.south west) -- (d0.south east)
  (d1.north west) -- (d1.north east)	  (d1.south west) -- (d1.south east)
  (d2.north west) -- (d2.north east)	  (d2.south west) -- (d2.south east)
  (e0.north west) -- (e0.north east)	  (e0.south west) -- (e0.south east)
  (e1.north west) -- (e1.north east)	  (e1.south west) -- (e1.south east)
  (e2.north west) -- (e2.north east)	  (e2.south west) -- (e2.south east)
  (f0.north west) -- (f0.north east)	  (f0.south west) -- (f0.south east)
  (f1.north west) -- (f1.north east)	  (f1.south west) -- (f1.south east)
  (f2.north west) -- (f2.north east)	  (f2.south west) -- (f2.south east)
;
\node at ($(c1.east)!0.5!(d1.west)$) {$\step{\hopopii}$};
\node at ($(d1.east)!0.5!(e1.west)$) {$\step{\hopushi}$};
\node at ($(e1.east)!0.5!(f1.west)$) {$\step{\loss}$};
  \end{tikzpicture}
}%
\caption{Reading, writing, and losing in HOLCSs.}
\label{fig-holcs-steps}
\end{figure}
\autoref{fig-holcs-steps} illustrates the behavior of higher-order
channels under reads, writes and losses (control states omitted).

\begin{theorem}
\label{thm-holcs-wsts}
HOLCSs equipped with $\holeq$ are well-structured.
\end{theorem}
\begin{proof}
The ordering $C\holeq C'$ entails $C'\step{*}C$ (via losing steps).
\end{proof}

\subsection{Simulation by PCSs}

Let us consider a $k$th order LCS $S$ with $\Sigma=\{a_1,\ldots,a_p\}$ and
$k\cdot m$ channels. We assume that $m=1$ in order to simplify our
constructions and proofs but they extend directly to the more
expressive cases where $m>1$.

We simulate $S$ with a $d$-PCS $\wS$ having $k$ channels and
$d\eqdef p+k-1$.
In $\Sigma_d$, the lower priorities $0,\ldots,p-1$
will be denoted $a_1,\ldots,a_p$ since they directly encode the
messages from $\Sigma$, while the higher priorities $p,\ldots,p+k-1$
will be denoted $\dol_1,\ldots,\dol_k$ since they are used as
separators.

\subsubsection{Encoding Configurations}
A configuration $C=(q,x_1,\ldots,x_k)$ of $S$ is encoded as
$\wC\eqdef(q,\enc{x_1}_1,\ldots,\enc{x_k}_k)$, where,
for $n=1,\ldots,k$, $\enc{x}_n$ denotes the \emph{$n$th-level
encoding} of a $n$th-level sequence $x=u_1 u_2\cdots
u_\ell\in\Sigma^{*n}$. Encodings are defined with
\begin{align*}
\enc{a_{i_1}a_{i_2}\cdots a_{i_\ell}}_1&\eqdef
\dol_1 a_{i_1}\dol_1 a_{i_2}\dol_1 \cdots
a_{i_\ell}\dol_1              \:,
\\
\enc{u_1u_2\cdots u_\ell}_{n+1}&\eqdef
\dol_{n+1}\enc{u_1}_n\dol_{n+1}\enc{u_2}_n\dol_{n+1}\cdots
\enc{u_\ell}_n\dol_{n+1}   \:.
\end{align*}
Note that $\enc{\epsilon}_n=\dol_n$ and differs from
$\epsilon\in\Sigma_d^*$. For $n=1,\ldots,k$, we let
$E_n\eqdef\{\enc{x}_n ~|~ x\in\Sigma^{*n}\}$ denote the set of all
$n$-level encodings. These are regular languages that are
captured by the following regular expressions:
\begin{xalignat*}{2}
E_1&\eqdef \dol_1\cdot\bigl((a_1+\cdots+a_p)\cdot \dol_1\bigr)^*
\:,
&
E_{n+1}&\eqdef \dol_{n+1}\cdot\bigl(E_n \cdot \dol_{n+1}\bigr)^*
\:.
\end{xalignat*}
With this encoding, $(\Sigma^{*n},\leq_{*n})$ and $(E_n,\hleq)$ are
isomorphic:
\begin{lemma}
\label{lem-encoding-is-iso}
For any $x,y\in\Sigma^{*n}$, $x\leq_{*n}y$ if, and only if,
$\enc{x}_n\hleq\enc{y}_n$.
\end{lemma}
\begin{proof}[Proof Idea]
By induction on $n$. For the ``$\Leftarrow$''
direction it is easier to rely on priority embedding, \textit{i.e.},
prove that $\enc{x}_n\pleq\enc{y}_n$ implies $x\leq_{*n}y$ and
apply \autoref{lem-ghleq-gpleq-equivalence}.
\end{proof}

\subsubsection{Encoding First-Order Rules}
\begin{figure}[tbp]
\centering
\scalebox{.8}{\begin{tikzpicture}[->,>=stealth',shorten >=1pt,node distance=1cm,thick,auto,bend angle=30]
\node[stt](q1){$q_1$};
\node[stt,right=4em of q1](q2){$q_2$};
\path (q1) edge node {${\ttc_{1}\,!\,a_i}$} (q2) ;
\node[stt,below=3em of q1](q3){$q_3$};
\node[stt,right=4em of q3](q4){$q_4$};
\path (q3) edge node {${\ttc_{1}\,?\,a_j}$} (q4) ;
\node[stt,right=9em of q2](Q1){$q_1$};
\node[stt,below=3em of Q1](Q3){$q_3$};
\node[stt,right=4em of Q1](Q1x) {};
\node[stt,right=4em of Q1x](Q2) {$q_2$};
\node[stt,right=4em of Q3](Q3x) {};
\node[stt,right=4em of Q3x](Q4) {$q_4$};
\path (Q1) edge node {$\ttc_1 \,!\, a_i$} (Q1x) ;
\path (Q1x) edge node {$\ttc_1 \,!\, \dol_1$} (Q2) ;
\path (Q3) edge node {$\ttc_1 \,?\, \dol_1$} (Q3x) ;
\path (Q3x) edge node {$\ttc_1 \,?\, a_j$} (Q4) ;
\node[left=0.5em of q1] {\large $\delta_1$:};
\node[left=0.5em of q3] {\large $\delta_2$:};
\path (q1) -- (q2) node [midway] (midS) {};
\path (Q1) -- (Q2) node [midway] (midSp) {};
\node[above=3em of midS,fill=black!12,text width=5em,text height=0.9em,align=flush right] (cn1){$a_3\,a_1\,a_4\;$};
\node[above=3em of midSp,fill=black!12,text width=9em,text height=0.9em,align=flush right] (dn1){$\dol_1\,a_3\,\dol_1\, a_1\,\dol_1\, a_4\,\dol_1\;$};
\node[left=0.5em of cn1]{$\ttc_1\,$:};
\node[left=0.5em of dn1]{$\ttc_1\,$:};
\node[below=-0.4em of cn1] {$\vdots$};
\node[below=-0.4em of dn1] {$\vdots$};
\draw[-,thick,color=black!80]
    (cn1.north west) -- (cn1.north east)
    (cn1.south west) -- (cn1.south east)
    (dn1.north west) -- (dn1.north east)
    (dn1.south west) -- (dn1.south east);
\node[above left=1em and 2em of q1] {\large $\bm{S:}$};
\node[above left=1em and 2em of Q1] {\large $\bm{\wS:}$};
  \end{tikzpicture}
}%
\caption{Simulation of HOLCS $S$ with PCS $\wS$: first-order rules.}
\label{fig-simul1-holcs}
\end{figure}
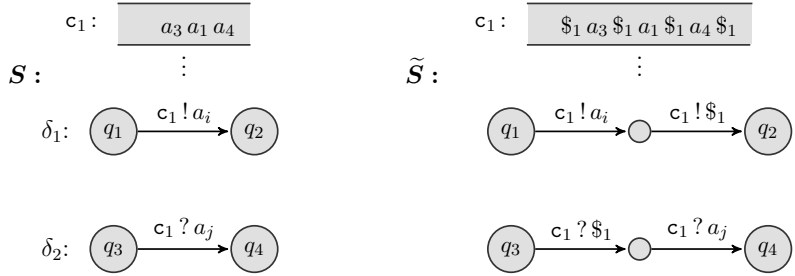
We may now complete the definition of $\wS$ by describing its rules,
with the goal of simulating the operational semantics of $S$ while
working on $\wC$ and on encodings of sequences from some
$\Sigma^{*n}$. This is easy for 1st-order rules that operate on
$\ttc_1$ only: where $S$ writes $a_i$, $\wS$ writes $a_i\cdot\dol_1$.
Where it reads $a_j$, $\wS$ reads $\dol_1 \cdot a_j$. This is
illustrated in \autoref{fig-simul1-holcs}.

\subsubsection{Encoding Higher-Order Rules}
Higher-order rules of the form $q_1\step{\hopushn}q_2$ are
simulated in $\wS$ as we illustrate in case $\delta_3$
of \autoref{fig-simul2-holcs}. Here $\wS$ uses a loop, abbreviated as
$\times\step{\ttc_n?u\semicolon\ttc_n!u\semicolon\ttc_{n+1}!u}\times$,
to append a copy of $\ttc_n$'s contents to $\ttc_{n+1}$. This uses a
high-priority $\dol_{n+1}$ to mark the end of $\ttc_n$'s contents and
ensure that all of it has been read (and written back). Note that the
loop checks that $u$ is in $E_n$, \textit{i.e.}, is a well-formed
encoding, which is done by following a DFA for $E_n$. When the
transfer is completed, a $\dol_{n+1}$ must be appended to $\ttc_{n+1}$
to ensure consistency.
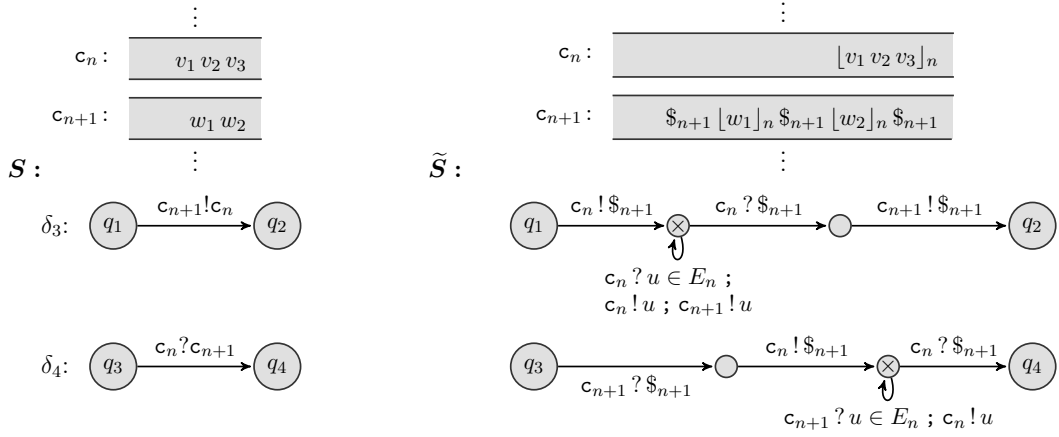
\begin{figure}[htbp]
\centering
\scalebox{.8}{\begin{tikzpicture}[->,>=stealth',shorten >=1pt,node distance=1cm,thick,auto,bend angle=30]
\node[stt](q1){$q_1$};
\node[stt,right=5em of q1](q2){$q_2$};
\path (q1) edge node {$\hopushn$} (q2) ;
\node[stt,below=4em of q1](q3){$q_3$};
\node[stt,right=5em of q3](q4){$q_4$};
\path (q3) edge node {$\hopopn$} (q4) ;
\node[stt,right=9em of q2](Q1){$q_1$};
\node[stt,right=1.8cm of Q1](Qx){};
\node[stt,right=2.3cm of Qx](Qy){};
\node[stt,right=2.6cm of Qy](Q2){$q_2$};
\path (Q1)  edge node {$\ttc_n \,!\, \dol_{n+1}$} (Qx) ;
\path (Qx)  edge node {$\ttc_n \,?\, \dol_{n+1}$} (Qy) ;
\path (Qy)  edge node {$\ttc_{n+1} \,!\, \dol_{n+1}$} (Q2) ;
\path (Qx)  edge [loop below] node [align=left] {$\ttc_n \,?\, u\in E_n\:\semicolon$\\$\ttc_n \,!\, u\:\semicolon\:\ttc_{n+1}\,!\, u$} (Qx) ;
\node[stt,right=9em of q4](Q3){$q_3$};
\node[stt,right=2.6cm of Q3](Qz){};
\node[stt,right=2.3cm of Qz](Qt){};
\node[stt,right=1.8cm of Qt](Q4){$q_4$};
\path (Q3)  edge node [swap] {$\ttc_{n+1} \,?\, \dol_{n+1}$} (Qz) ;
\path (Qz)  edge node {$\ttc_n \,!\, \dol_{n+1}$} (Qt) ;
\path (Qt)  edge node {$\ttc_n \,?\, \dol_{n+1}$} (Q4) ;
\path (Qt)  edge [loop below] node [align=left] {$\ttc_{n+1} \,?\, u\in E_n\:\semicolon\:\ttc_n \,!\, u$} (Qt) ;
\node[left=0.5em of q1] {\large $\delta_3$:};
\node[left=0.5em of q3] {\large $\delta_4$:};
\node at (Qx) {$\times$};
\node at (Qt) {$\times$};
\path (q1) -- (q2) node [midway] (midS) {};
\path (Q1) -- (Q2) node [midway] (midSp) {};
\node[above=3em of midS,fill=black!12,text width=5em,text height=0.9em,align=flush right] (cn1){$w_1\,w_2\;$};
\node[above=0.8em of cn1,fill=black!12,text width=5em,text height=0.9em,align=flush right] (cn){$v_1\,v_2\,v_3\;$};
\node[above=3em of midSp,fill=black!12,text width=14em,text height=0.9em,align=flush right] (dn1){$\dol_{n+1}\,\enc{w_1}_n\,\dol_{n+1}\, \enc{w_2}_n\,\dol_{n+1}\;$};
\node[above=0.8em of dn1,fill=black!12,text width=14em,text height=0.9em,align=flush right] (dn){$\enc{v_1\, v_2\, v_3}_n\;$};
\node[left=0.5em of cn1]{$\ttc_{n+1}\,$:};
\node[left=0.5em of cn]{$\ttc_n\,$:};
\node[left=0.5em of dn1]{$\ttc_{n+1}\,$:};
\node[left=0.5em of dn]{$\ttc_n\,$:};
\node[above=0.1em of cn] {$\vdots$};
\node[below=-0.4em of cn1] {$\vdots$};
\node[above=0.1em of dn] {$\vdots$};
\node[below=-0.4em of dn1] {$\vdots$};
\draw[-,thick,color=black!80]
    (cn.north west) -- (cn.north east)
    (cn.south west) -- (cn.south east)
    (cn1.north west) -- (cn1.north east)
    (cn1.south west) -- (cn1.south east)
    (dn.north west) -- (dn.north east)
    (dn.south west) -- (dn.south east)
    (dn1.north west) -- (dn1.north east)
    (dn1.south west) -- (dn1.south east);
\node[above left=1em and 2em of q1] {\large $\bm{S:}$};
\node[above left=1em and 2em of Q1] {\large $\bm{\wS:}$};
  \end{tikzpicture}
}%
\caption{Simulation of HOLCS $S$ with PCS $\wS$: higher-order rules.}
\label{fig-simul2-holcs}
\end{figure}

Simulating rules of the form $q_3\step{\hopopn}q_4$ follows
the same logic (see \autoref{fig-simul2-holcs}): $\wS$ reads the first
$n$-level encoding in $\ttc_{n+1}$, checking that is is well-formed
(with ``$u\in E_n$'') and writes it to $\ttc_n$. Simultaneously, the
previous contents of $\ttc_n$ is emptied by writing a $\dol_{n+1}$ and
reading it.

\subsubsection{Correctness}
The correctness of this simulation is captured by the next two propositions.
\begin{proposition}
\label{prop-pcs-simul-holcs}
If $S$ has a run $C\step{*}C'$, then $\wS$ has a run
 $\wC\step{*}\wCp$.
\end{proposition}
\begin{proof}[Proof Idea]
On the one hand, $\wS$ has been designed so that its
behavior \emph{without any superseding} directly mimics on encodings
the effect of the reliable steps $C\step{} C'$. Then lossy steps in
$S$ are simulated by superseding since, thanks
to \autoref{lem-encoding-is-iso}, $C\hogeq C'$ entails
$\wC\hgeq\wCp$.
\end{proof}
There is an exact reciprocal to \autoref{prop-pcs-simul-holcs}:
\begin{proposition}
\label{prop-holcs-simul-pcs}
If $\wS$ has a run $\wC\step{*}\wCp$ then $S$ has a
run $C\step{*}C'$.
\end{proposition}
The correctness proof is  harder in this direction since the steps of
$\wS$ are finer-grained than the steps of $S$. Note that a
configuration $D=(q,w_1,\ldots,w_k)$ of $\wS$ is not
necessarily the encoding $\wC$ of a configuration of $S$: if
$q$ is not an original state of $S$ (\textit{i.e.}, is one of the
unnamed states depicted on the right-hand side
of \autoref{fig-simul1-holcs} or \autoref{fig-simul2-holcs}) then $D$
is not a $\wC$.
Furthermore, if $q$ is an original state, it is possible that some
$w_n$ does not belong to $E_n$.

With these difficulties in mind, we say that $D\in\Conf_\wS$ is
\emph{safe} if every $w_n$ ends with a $\dol_n$ (and contains no
$\dol_{n'}$ for $n'>n$), and that $D$ is a \emph{stable} configuration
if $q$ is an original state. We rely on the write-superseding
semantics (see \autoref{ssec-write-superseding}) for a better control
on the form of the runs. We say that a write-superseding run
$D_0\wstep{}D_1\wstep{}\cdots \wstep{} D_r$ in $\wS$ is a
\emph{macro-step} if $D_0$ and $D_r$ are the only stable
configurations it visits (in essence, a macro-step just follows the
rules introduced in $\wS$ to simulate a single rule of $S$ using
write-superseding semantics). We are now ready for the following
lemmata.
\begin{lemma}
\label{lem-safe}
Let $D\wstep{+} D'$ be any macro-step in $\wS$. If $D$ is safe
then $D'$ is safe too.
\end{lemma}
\begin{proof}
Since the last message in a channel cannot be superseded, a $\ttc_n$
whose contents is safe remains safe if one does not read
its final $\dol_n$, or one appends some safe contents.
We now consider all four cases for the macro-step $D\wstep{+}D'$.
\begin{enumerate}[label=$\delta_{\arabic*}$:]
\item
using rules of the form $q_1\step{\ttc_1!a_i}\step{\ttc_1!\dol_1}q_2$,
it writes a safe $a_i\cdot \dol_1$ in $\ttc_1$ (and does not
read from the other channels).

\item
reading with some $q_3\step{\ttc_1?\dol_1}\step{\ttc_1?a_j}q_4$, the
read $\dol_1$ cannot be the final one in $\ttc_1$.

\item
using a macro-step that simulates $q_1\step{\hopushn}q_2$,
the last write to $\ttc_n$ is a safe $u\in E_n$, and the last
write to $\ttc_{n+1}$ is $\dol_{n+1}$.

\item
using a macro-step that simulates $q_3\step{\hopopn}q_4$,
the last write to $\ttc_n$ is a safe $u\in E_n$, which is also
the last read from $\ttc_{n+1}$, implying that the final $\dol_{n+1}$
in $\ttc_{n+1}$ cannot have been read.\qedhere
\end{enumerate}
\end{proof}

\begin{lemma}
\label{lem-lifting-macro-step}
If $D$ is safe and $D\wstep{+}\wCp$ is a write-superseding
macro-step in $\wS$, then there is a $C\in\Conf_S$ such that
$D\hgeq \wC$ and $C\step{+}C'$ in $S$.
\end{lemma}
\begin{proof}
Write $D=(q,w_1,\ldots,w_k)$ and $\wCp=(q',v_1,\ldots,v_k)$. For each
$n=1,\ldots,k$, we know that $w_n$ is some $w'_n\dol_n$ (since $D$ is
safe) and that $v_n$ is some $\enc{y_n}_n$. We now consider four
cases for the macro-step $D\wstep{+}\wC$:
\begin{enumerate}[label=$\delta_{\arabic*}$:]
\item
it uses some $q=q_1\step{\ttc_1!a_i}\step{\ttc_1!\dol_1}q_2=q'$. From the
definition of $\wstep{}$ (see \autoref{ssec-write-superseding}) we
deduce that $v_1=x\cdot\dol_1$ where $x$ is a prefix of
$w'_1\cdot \dol_1\cdot a_i$ while $w_n=v_n$ for $n>1$. Since $v_1\in
E_1$, either $v_1=w_1\cdot a_i\cdot \dol_1$ and we take $\wC=D$, or
$v_1$ is a safe prefix of $w_1$, in which case we take
$C=(q,y_1,\ldots,y_n)$.

\item
it uses some $q=q_3\step{\ttc_1?\dol_1}\step{\ttc_1?a_j}q_4=q'$ with no
writing. The write-superseding semantics entails $w_1=\dol_1\cdot
a_j\cdot v_1$. Here $w_1=\enc{a_j\cdot y_1}_1$ and we take $\wC=D$.

\item
the macro-step writes a $\dol_{n+1}$ to $\ttc_n$ and reads it back, so
that $u=w_n$ and we deduce that $w_n\in E_n$ and is some $\enc{x}_n$.
Furthermore $v_n$ is $u$ perhaps after some superseding and one
obtains $x\geq_{*n}y_n$ from $u\hgeq v_n$.

On $\ttc_{n+1}$ the macro-step writes $u\cdot\dol_{n+1}$ and we reason
as in case $\delta_1$: if write-superseding erases the $\dol_{n+1}$
that closes $w_{n+1}$ then $v_{n+1}\hleq w_{n+1}$ and we set
$C=(q,\ldots,y_{n-1},x,y_{n+1},\ldots)$, otherwise
$v_{n+1}=w_{n+1}\cdot \enc{x'}_n\dol_{n+1}$ with $\enc{x'}_n\hleq u$,
 we know that $w_{n+1}$ is some $\enc{x_{n+1}}_{n+1}$ and we may set
$C=(q,\ldots,y_{n-1},x,x_{n+1},\ldots)$.

\item
On $\ttc_{n+1}$ the macro-step just reads $\dol_{n+1}\cdot u$, where
$u\in E_n$ is some $\enc{x}_n$. Necessarily $w_n$ is $\enc{x\cdot
y_{n+1}}_{n+1}$. On $\ttc_n$, one writes $\dol_{n+1}\cdot u$ and reads
$\dol_{n+1}$. Necessarily $y_n\hleq u$, hence $y_n\holeq x$. Taking
$C=(q,\ldots,y_{n-1},\epsilon,x\cdot y_{n+1},y_{n+2},\ldots)$ works.
\qedhere
\end{enumerate}
\end{proof}

\noindent We can now conclude our correctness proof.
\begin{proof}[Proof of \autoref{prop-holcs-simul-pcs}]
Assume $\wC\step{*}\wCp$. We first apply \autoref{lem-hh-to-ww} and
deduce the existence of a write-superseding run $D_0\wstep{+}\wCp$ for
some $D_0\hleq \wC$. Let us single out the stable configurations along
this run and write it under the form $D_0\wstep{+}D_1\wstep{+}\cdots
D_{r-1}\wstep{+}D_r=\wCp$, \textit{i.e.}, as a sequence of $r$ macro-steps.

We now reason by induction on $r$. If $r=0$, then $D_0=\wCp$ and
$\wC\hgeq\wCp$, implying $C\hogeq C'$ by \autoref{lem-encoding-is-iso}
so that $S$ has $C\step{*}C'$ via lossy steps.

If $r>0$, we first observe that $D_0$ is safe (since $\wC$ is) hence
$D_1, \ldots, D_r$ too by \autoref{lem-safe}. Using
\autoref{lem-lifting-macro-step} on $D_{r-1}\wstep{+}\wCp$ implies
that there is some $C''\in\Conf_S$ with $D_{r-1}\hgeq \wCpp$ and
$C''\step{+}C'$. On the other hand, the run
$\wC\step{*}D_0\wstep{+}D_{r-1}\step{*} \wCpp$ can be transformed into
$r-1$ macro-steps. We can thus apply the ind.\  hyp.\  and deduce that
$C\step{*}C''$ in $S$,
\end{proof}
We may now state formally the main theorem of this section.
\begin{theorem}
\label{thm-pcs-simul-holcs}
There is a \textsc{LogSpace} reduction that transforms reachability problems on
$k$th-order HOLCSs to reachability problems on PCSs of level $d=k+1$.
\end{theorem}
\begin{proof}[Proof Sketch]
The above ideas use $p+k$ priority levels. One can tighten these
simulations to use fewer priorities and complete the proof
of \autoref{thm-pcs-simul-holcs} by encoding the messages
$a_0,\dots,a_{p-1}$ as \emph{fixed length} binary strings over
$\{0,1\}$ followed by a $\dol_1$ separator.  Then the prioritized
alphabet $\{0,1,\dol_1,\dots,\dol_k\}$ with $k+2$ priority levels
suffices.

In particular $\{0,1,\dol\}$ is enough for LCSs. In the case of
\emph{weak} LCSs where the set of messages is linearly ordered (say
$a_0<a_1<\cdots<a_{p-1}$) and where, in addition to message losses,
any message can be replaced by a lower message inside the channels, we
can further tighten this to $\{0,\dol\}$ with a unary encoding of
message $a_i$ as $0^i\dol$.
\end{proof}
\begin{remark}
The simulation of HOLCSs by PCSs is quite straightforward. We believe
that a reduction from PCS reachability to HOLCS reachability must
exist (on complexity-theoretical grounds) but we do not have at the
moment any suggestion for a simple encoding of PCSs in HOLCSs.
\qed
\end{remark}

}%

\section{Applications of the Priority Embedding to Trees}\label{ssec-trees}
In this section we show how tree orderings can be reflected into
sequences over a priority alphabet. This %
illustrates the ``power'' of priority embeddings, and yields as a byproduct a
proof that strong tree embeddings form a wqo.  %
The consequence will be that PCSs can perform operations on encodings
of trees in a \emph{robust} way, \emph{i.e.}, such that superseding
steps respect the tree ordering.  This was already the key insight
\begin{itemize}
\item in \autoref{sec-hardy} when we encoded ordinals, seen as terms,
  into sequences over $\Sigma_d^\ast$, and
\item in \autoref{ssec-holcs} when we encoded nested sequences over
  $\Sigma^{n\ast}$, which can be seen as terms of bounded depth, also
  into sequences over $\Sigma_d^\ast$.
\end{itemize}

\noindent We generalize these ideas here, and allow in particular for labeled
trees.  More precisely, we show
\begin{enumerate}
\item in \autoref{sub-btree} that priority embeddings reflect the
  so-called \emph{strong tree-embedding} ordering on trees of bounded
  depth, and
\item in \autoref{app-tree-minors} that they reflect
  the \emph{immersion ordering} used by Gupta~\cite{gupta92} to
  compute the maximal order type of the \emph{tree minor ordering}.
\end{enumerate}
These two tree orderings are already known to be wqos---with a maximal
order type $\varepsilon_0$ that matches that of priority embeddings.
The point of the section is thus rather to show how to encode trees
robustly in priority strings, and thus how to manipulate trees
robustly in PCSs.

\begin{remark}[Kruskal's Tree Theorem]
  The reader is likely to be already acquainted with the homeomorphic
  embedding ordering used in Kruskal's Tree Theorem.  The strong tree
  embedding refines this ordering but is restricted to trees of
  bounded depth (it is not a wqo on general trees), while the tree
  minor ordering is coarser than the homeomorphic embedding ordering.
  Here, we wish to explicitly emphasize that reflections in priority embeddings
  do \emph{not} provide a new proof of Kruskal's Tree Theorem.  In
  fact, such a proof cannot exist since the maximal order type for
  homeomorphic embeddings is considerably larger than
  $\varepsilon_0$~\cite{rathjen93}.
\end{remark}

\subsection{Reflecting Bounded Depth Trees}\label{sub-btree}
Recall  \autoref{fc-refl} about order reflections: if $B$ reflects $A$ 
and $(B,{\leq_B})$ is a wqo, then
$(A,{\leq_A})$ is necessarily a wqo.  Our goal is to show that
$(\Sigma_{d,\Gamma},\gpleq)$ reflects $\Gamma$-labeled trees of depth at most $d+1$
endowed with the \emph{strong tree-embedding} relation.

Given an alphabet $\Gamma$, we note $T(\Gamma)$ for the set of finite,
ordered, unranked labeled trees (aka variadic terms) over $\Gamma$. 
Let $d$ be a depth in $\+N$.  We work here with the set $T_d(\Gamma)$
of \emph{trees of depth at most $d$} over $\Gamma$.  Formally,
$T_d(\Gamma)$ is defined by induction over $d$ by
$T_0(\Gamma)\eqdef\emptyset$, and for $d>0$, $T_d(\Gamma)$ is the
smallest set containing $T_{d-1}(\Gamma)$ and such that, if
$t_1,\dots,t_n$ are trees in $T_{d-1}(\Gamma)$ and $f$ is in $\Gamma$,
then the tree $f(t_1\cdots t_n)$ obtained by adding an $f$-labeled
root over them is in $T_d(\Gamma)$.  When $n=0$ we write $f$
rather than $f()$.  \autoref{fig-labeled-trees} presents two labeled
trees of depth $3$.
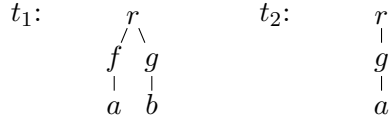
\begin{figure}[htb]
\centering
\begin{tikzpicture}[inner sep=1.5pt,level distance=.6cm,sibling
distance=.5cm,every node/.style={text height=1.5ex}]
  \node(z){$r$}
    child{node{$f$}
      child{node{$a$}}}
    child{node{$g$}
      child{node{$b$}}};
  \node[right=3cm of z](a){$r$}
    child{node{$g$}
      child{node{$a$}}};
  \node[left=of z]{$t_1$:};
  \node[left=of a]{$t_2$:};
\end{tikzpicture}
  \caption{\label{fig-labeled-trees}Two trees in $T_3(\{a,b,f,g,r\})$.}
\end{figure}

It will be convenient in the following to use the \emph{extension}
operation ``$@$'' on trees, which is defined for $n\geq 0$ by
\begin{align}
  f(t_1\cdots t_n)\mathbin{@}t&\eqdef f(t_1\cdots
  t_nt)\:;
\end{align}
in particular, $f\mathbin{@}t=f(t)$.  For instance, $t_1$
in \autoref{fig-labeled-trees} can be decomposed as
$r\mathbin{@}(f\mathbin{@}a)\mathbin{@}(g\mathbin{@}b)$.

In case where $\Gamma$ is a singleton, we denote by ``$\bullet$'' its only
element and write $T_d$ for $T_d(\{\bullet\})$.  For
instance, %
$T_1=\{\bullet\}$ contains a single tree.

\subsubsection{Strong Tree Embeddings}
Assume that $(\Gamma,\leq_\Gamma)$ is a wqo, and that we have already
defined a well-quasi-ordering $\embed_T$ on trees of maximal depth
$d$---note that as a base case, since $T_0(\Gamma)$ is empty, it is
vacuously well-quasi-ordered by the empty relation.  We can lift it
into a wqo $(T_{d+1}(\Gamma),\embed_T)$ on trees of maximal depth
$d+1$ by $f(t_1\cdots t_n)\embed_T f'(t'_1\cdots t'_m)$ $\equivdef$
$f\leq_\Gamma f'$ and $t_1\cdots t_n\embed_{T\ast}t'_1\cdots t'_m$,
\emph{i.e.}, by considering a product between $(\Gamma,\leq_\Gamma)$ and the
sequence embedding ordering on tree sequences
$(T_d(\Gamma)^\ast,\embed_{T\ast})$: by Dickson's Lemma and Higman's
Lemma, this defines a wqo on trees for every finite $d$, which we call
the \emph{strong tree embedding}.  Put differently $t\embed_T t'$ if
$t$ it can be obtained from $t'$ by deleting whole subtrees and/or
decrementing node labels.

Strong tree embeddings refine the \emph{homeomorphic tree embeddings}
used in Kruskal's Tree Theorem; in general they do not give rise to a
wqo, but in the case of bounded depth trees they do.  The two trees
$t_1$ and $t_2$ in \autoref{fig-labeled-trees} are \emph{not} related
by any homeomorphic tree embedding, and thus neither by strong tree
embedding.  The tree $t_2$ homeomorphically embeds into $t_3=r(g(g(a)))$,
but does not strongly embed into $t_3$.

Observe that $\embed_T$ is a precongruence for $@$:
\begin{align}
  t_1\embed_T t'_1\text{ and }t_2\embed_T t'_2&\text{ imply
  }t_1\mathbin{@}t_2\embed_T t'_1\mathbin{@}t'_2\:,\label{eq-ext-se}\\
  t&\embed_T t\mathbin{@}t'\:.\label{eq-ext-em}
\end{align}

\subsubsection{Encoding Trees as Strings}
It is easy to encode trees into finite sequences.  For instance,
drawing inspiration from the ordinal encodings employed
in \autoref{sec-hardy}, one might be tempted to encode the two trees
in \autoref{fig-labeled-trees} by
\begin{align}
  s_2(t_1)&=(0,a) (1,f) (0,b) (1,g) (2,r)\\
  s_2(t_2)&=(0,a) (1,g) (2,r)
\end{align}
as the result of an inductive encoding $s_d(w(t_1\cdots t_n))\eqdef
s_{d-1}(t_1)\cdots s_{d-1}(t_n)\cdot(d,w)$.  Observe however that we
wish our encoding to be an order reflection
(\textit{cf.}\ \autoref{ssec-refl}), which is not the case with $s_d$:
we see that $s_2(t_2)\gpleq s_2(t_1)$, although $t_2\not\embed_T
t_1$.  Over a singleton alphabet however, $s_d$ \emph{is} an order
reflection from $T_{d+1}$ to $\Sigma_d^\ast$.

Here we present a more redundant encoding, apt to handle arbitrary
alphabets.  We encode trees of bounded depth using the mapping
$\enc{.}_d{:}\,T_{d+1}(\Gamma)\to\Sigma_{d,\Gamma}^\ast$ defined by
induction on $d$ by
\begin{align}
 \enc{w(t_1\cdots t_n)}_d&\eqdef\begin{cases}
 (d,w)&\text{if }n=0\;,\\
  \enc{t_1}_{d-1}\cdot (d,w)\cdots
  \enc{t_n}_{d-1}\cdot(d,w)&\text{otherwise.}
 \end{cases}
\end{align}
For instance, if we fix $d=2$, the trees
in \autoref{fig-labeled-trees} are encoded as
\begin{align}
\enc{t_1}_2&= (0,a)(1,f)(2,r)(0,b)(1,g)(2,r)\;,\\
\enc{t_2}_2&= (0,a)(1,g)(2,r)\;.
\end{align}
This satisfies $\enc{t_2}_2\not\gpleq \enc{t_1}_2$ as desired.

\subsubsection{Proper Words}
Not every string in $\Sigma^\ast_{d,\Gamma}$ is the encoding of a tree
according to $\enc{.}_d$: for $0\leq a \leq d$, we let
\begin{align}
  P_{-1,\Gamma}&\eqdef\emptyset\;,
  &P_{a,\Gamma}&\eqdef \bigcup_{w\in\Gamma}\big(P_{a-1,\Gamma}\cdot\{(a,w)\}\big)^\ast\cdot(P_{a-1,\Gamma}\cup\{\epsilon\})\cdot\{(a,w)\}
\end{align}
be the set of \emph{proper encodings of height $a$}.  Then
$P\eqdef\bigcup_{a\leq d}P_{a,\Gamma}$ is the set of \emph{proper}
words in $\Sigma^\ast_{d,\Gamma}$.
A proper word $x$ belongs to a unique $P_{a,\Gamma}$ with $a=h(x)$, where
$h(x)$ is the height of $x$,
and has then a canonical factorization of the form $x=x_1(a,w)\cdots
x_m(a,w)$ with every $x_j$ in $P_{a-1,\Gamma}$ and $w$ in
$\Gamma$.

Given a depth $d$, we see that $\enc{.}_d$ is a bijection between
$T_{d+1}(\Gamma)$ and $P_{d,\Gamma}$, with inverse
$\tau{:}\,P_{d,\Gamma}\to T_{d+1}(\Gamma)^\ast$ defined using
canonical decompositions by
\begin{align}
  \tau(x=x_1(h(x),w)\cdots x_m(h(x),w))&\eqdef w(\tau(x_1)\cdots\tau(x_m))
  \\&= w\mathbin{@}\tau(x_1)\mathbin{@}\cdots\mathbin{@}\tau(x_m)\;.\label{eq-tau-ext}
\end{align}

\begin{proposition}\label{prop-refl-Td-Sd}
  The map $\enc{.}_d$ is an order reflection from
  $(T_{d+1}^\ast,{\embed_{T\ast}})$ to $(\Sigma_d^\ast,{\pleq})$.
\end{proposition}

\proof
Let $x$ and $x'$ be two proper words in $P_{d,\Gamma}$ with $x\gpleq
x'$; we show by induction on $x$ that $\tau(x)\embed_T\tau(x')$.  We
consider the canonical factorizations $x=x_1(d,w)\cdots x_m(d,w)$ and
$x'=x'_1(d,w')\cdots x'_n(d,w')$ for $m,n\geq 0$, $x_j,x'_j$ in
$P_{d-1,\Gamma}$ for all $j$, and $w,w'$ in $\Gamma$.

By definition of the generalized priority embedding, the $m$ pairs
$(d,w)$ occurring in $x$ must be mapped to some pairs $(d,w')$
occurring in $x'$ with $w\leq_\Gamma w'$.  Hence there exist $1\leq
i_1,\dots,i_m\leq n$ such that $x_j\gpleq x'_{i_j}$.  Therefore,
\begin{align*}
  \tau(x)&=w\mathbin{@}\tau(x_1)\mathbin{@}\cdots\mathbin{@}\tau(x_m)
      &&\text{by \eqref{eq-tau-ext}}\\
  &\embed_T w'\mathbin{@}\tau(x_1)\mathbin{@}\cdots\mathbin{@}\tau(x_m)
      &&\text{by \eqref{eq-ext-se} since $w\leq_\Gamma w'$}\\
  &\embed_Tw'\mathbin{@}\tau(x'_{i_1})\mathbin{@}\cdots\mathbin{@}\tau(x'_{i_m})
      &&\text{by \eqref{eq-ext-se} and ind.\ hyp.\ on $x_j\gpleq x'_{i_j}$}\\
  &\embed_Tw'\mathbin{@}\tau(x'_1)\mathbin{@}\cdots\mathbin{@}\tau(x'_n)
      &&\text{by \eqref{eq-ext-em}}\\
  &=\tau(x')&&\text{by \eqref{eq-tau-ext}}\;.
  \rlap{\hbox to 157 pt{\hfill\qEd}}
\end{align*}

\noindent
Note that \autoref{prop-refl-Td-Sd} provides an alternative proof of
the fact that $(T_d(\Gamma),\embed_T)$ is a wqo, thanks
to \autoref{thm-gpe-wqo} and \autoref{fc-refl}.

\subsection{Relationship to Tree Minors}\label{app-tree-minors}
As a further application of the priority embedding, we demonstrate
that we also subsume another wqo on trees, the \emph{tree minor}
ordering, by using the techniques of \citet{gupta92} to encode trees
into generalized prioritized alphabets. The tree minor ordering is
coarser than the homeomorphic embedding, but the upside is that trees
of \emph{unbounded} depth can be encoded into strings.
\newcommand{\gordering}{\sqsubseteq_{{I}}}
\newcommand{\glinearisation}{\mathit{lin}}
\newcommand{\gdomain}{\mathcal{I}}

The trees considered in~\citep{gupta92} are unlabeled finite rooted
trees with an ordering on the children of every internal vertex,
called \emph{planar planted trees} therein, \emph{i.e.} trees from
$T\eqdef T(\{\bullet \})$. \autoref{fig-two-trees} illustrates two
such trees. The ordering on the children in particular implies that,
for instance, the tree $\bullet(\bullet(\bullet), \bullet)$ is not
equivalent to the tree $\bullet(\bullet, \bullet(\bullet))$. Gupta
gives in~\citep{gupta92} a constructive proof that planar planted
trees are well-quasi-ordered under minors. Recall that $t_1$ is a
\emph{minor} of $t_2$ if $t_1$ can be obtained from $t_2$ by a series
of edge contractions, \emph{e.g.}\ in \autoref{fig-two-trees}, the
left-hand tree is a minor of the right-hand one. Note that, however, the two
trees are incomparable for the previously considered homeomorphic
embeddings.
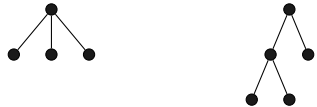
\begin{figure}[htbp]
  \centering
  \begin{tikzpicture}[inner sep=1.5pt,level distance=.6cm,sibling distance=.5cm,every
      node/.style={draw,circle,fill=black!90}]%
    \node(a){}
    child{node{}}
    child{node{}}
    child{node{}};
    \node[right=3cm of a]{}
    child{node{}
      child{node{}}
      child{node{}}}
    child{node{}};
  \end{tikzpicture}
  \caption{\label{fig-two-trees}Two trees in $T_2$.}
\end{figure}

Gupta provides in \citep{gupta92} an effective linearization
$\glinearisation{:}\,T \to \bigcup_{d\ge 0} \Sigma_{d,\Gamma}^*$ which
associates with every tree $t$ a word $\glinearisation(t)$ over the
generalized prioritized alphabet $\Sigma_{d,\Gamma}$, where $d$ is
dubbed the \emph{width} of $t$---which is at most its number of
vertices---and $(\Gamma,=)$ is a finite alphabet with $\Gamma=\{0, 1,
2, 3\}$. Gupta continues by defining a so-called \emph{immersion
  ordering $\gordering$} on $\Sigma_{d,\Gamma}^*$ for any fixed $d\in
\mathbb{N}$ as follows: given $x=(a_1,w_1)(a_2,w_2)\cdots (a_k,
w_k)\in \Sigma_{d,\Gamma}^*$ and $y\in \Sigma_{d,\Gamma}^*$,
$x\gordering y$ if $y$ can be factored as $y=y_0 y_1\cdots y_k
y_{k+1}$ such that
\begin{align*}
  \label{eqn:immersion-ordering}
  y_i\in (\Sigma_{d,\Gamma}\setminus \Sigma_{d-a_i-1,\Gamma})^*\cdot
  (a_i,w_i) \cdot (\Sigma_{d,\Gamma}\setminus
  \Sigma_{d-a_i-1,\Gamma})^*,~ 1\le i\le k.
\end{align*}
The crucial relationship between trees in $T$, $\glinearisation$ and
$\gordering$ is established
in~\citep[\theoremautorefname~4.1]{gupta92}: given planar planted
trees $t_1,t_2\in T$, whenever $\glinearisation(t_1)\gordering
\glinearisation(t_2)$ then $t_1$ is a minor of $t_2$. By showing that
$\gordering$ is a well-quasi-ordering, Gupta concludes that planar
planted trees are well-quasi-ordered under minors.

The immersion ordering $\gordering$ is closely related to our
generalized priority ordering. In fact, it is easily seen that
$\gpleq$ can be viewed as a sub-structure of $\gordering$. Define
an automorphism $\kappa:\Sigma_{d,\Gamma} \to \Sigma_{d,\Gamma}$ as
\begin{align*}
  \kappa(a,w) & \eqdef (d-a,w)
\end{align*}
which canonically extends to words over $\Sigma_{d,\Gamma}^*$. Now
$x\gpleq y$ in particular implies $\kappa(x) \gordering
\kappa(y)$. Thus, our \autoref{thm-gpe-wqo} yields as a corollary another
proof that the immersion ordering $\gordering$ is a wqo.
\begin{corollary}
  The immersion ordering $\gordering$ is a well-quasi-ordering.
\end{corollary}

\subsection{Further Applications}
As stated in the introduction to this section, our main interest in
strong tree embeddings is in connection with structural orderings of
ordinals; see \autoref{sec-hardy}.
Bounded depth trees are also used
in the verification of infinite-state systems as a means to obtain
decidability results, in particular for tree pattern rewriting
systems~\citep{genest08} in XML processing, and, using elimination
trees~\citep{sparsity}, for bounded-depth graphs used \textit{e.g.}\ in
the verification of \textit{ad-hoc} networks~\citep{delzanno10}, the
$\pi$-calculus~\citep{meyer08}, programs~\citep{bansal13}, and
protocols~\citep{konig14}.
These applications consider \emph{labeled} trees, which motivate the
generalized priority alphabets and embedding defined
in \autoref{sec-embedding}.

The exact complexity of verification problems in the aforementioned
models is currently
unknown~\citep{genest08,delzanno10,meyer08,bansal13,konig14}.  Our
encoding suggests they might be $\F_{\ez}$-complete.  We hope to see
PCS Reachability employed as a ``master'' problem for $\F_{\ez}$%
, like LCS Reachability for $\F_{\omom}$, which is
used in reductions instead of more difficult proofs based on Turing
machines and Hardy computations.

\section{Concluding Remarks}
\label{sec-concl}
We introduced Priority Channel Systems, a natural model for protocols
and programs with differentiated, prioritized asynchronous
communications, and showed how they give rise to well-structured
systems with decidable model-checking problems.

We showed that Reachability and Termination for PCSs are
$\F_{\ez}$-complete, and we expect our techniques to be transferable
to other models, \textit{e.g.}\ models based on wqos on bounded-depth trees or
graphs, whose complexity has not been
analyzed~\citep{genest08,delzanno10,meyer08,bansal13,konig14}. This is part of
our current research agenda on complexity for well-structured systems~\cite{SS-esslli2012}.

In spite of their enormous worst-case complexity, we expect PCSs to be
amenable to regular model checking techniques \textit{\`a
la}~\cite{abdulla96b,boigelot99b}. This requires investigating the
algorithmics of upward- and downward-closed sets of configurations
wrt.\  the priority ordering. These sets, which are always regular,
seem promising since $\pleq$ shares some good properties with the
better-known subword ordering, \textit{e.g.}\ the upward- or downward-closure
of a sequence $x\in\Sigma^\ast_d$ can be represented by a
deterministic finite automaton with $\len{x}$ states.

\section*{Acknowledgments}
We thank Lev Beklemishev who drew our attention to~\cite{schutte85}
and the reviewers for their helpful suggestions.

\bibliographystyle{alpha}
\bibliography{journals,conferences,plcm}
\vspace{-20 pt}
\end{document}